\DeclareMathOperator{\deq}{{: \hspace{-0.125em} \equiv}}
\DeclareMathOperator{\fresh}{\text{is fresh wrt.}}
\DeclareMathOperator{\grdef}{{: \hspace{-0.125em} : \hspace{-0.125em} =}}
\DeclareMathOperator{\ALT}{|}
\DeclareMathOperator{\FUN}{\lambda}
\DeclareMathOperator{\DOT}{.}
\newcommand*{\Fun}[2]{\FUN #1 \DOT #2}
\newcommand*{\Letk}[3]{\Fun{k}{#2 \, (\Fun{#1}{#3 \, k})}}
\DeclareMathOperator{\LET}{\mathrm{let}}
\DeclareMathOperator{\IN}{\mathrm{in}}
\newcommand*{\Let}[3]{\LET #1 = #2 \IN #3}
\DeclareMathOperator{\IF}{\text{if}}
\DeclareMathOperator{\OTH}{\text{otherwise}}
\DeclareMathOperator{\DOL}{\$}
\newcommand*{\Dol}[1]{\DOL(#1)}
\DeclareMathOperator{\SH}{{\mathcal{S}_0}}
\newcommand*{\Sh}[1]{\SH(#1)}
\newcommand*{\Shift}[2]{\SH #1 \DOT #2}
\DeclareMathOperator{\SHDF}{\mathcal{S}}
\newcommand*{\ld}{{\lambda_\$}}
\newcommand*{\Ld}{{\Lambda_\$}}
\newcommand*{\Lmv}{{\Lambda\mu_v}}
\newcommand*{\lmv}{{\lambda\mu_v}}
\newcommand*{\lcS}{{\lambda_{c\SHDF}}}
\newcommand*{\lcd}{{\lambda_{c\$}}}
\newcommand*{\expa}{\twoheadleftarrow}
\newcommand*{\redu}{\twoheadrightarrow}
\newcommand*{\bv}{{\beta_v}}
\newcommand*{\ev}{{\eta_v}}
\newcommand*{\bd}{{\beta_\$}}
\newcommand*{\ed}{{\eta_\$}}
\newcommand*{\dolv}{{\$_v}}
\newcommand*{\dolsh}{{\DOL\SH}}
\newcommand*{\shdol}{{\SH\DOL}}
\newcommand*{\pure}{\mathbf{pure}}
\newcommand*{\bind}{\mathbf{bind}}
\newcommand*{\cps}[1]{
  {\mathcal{C}\hspace{-0.25em}\left\llbracket #1 \right\rrbracket}}
\newcommand*{\cpsv}[1]{
  {\mathcal{C}_v\hspace{-0.25em}\left\llbracket #1 \right\rrbracket}}
\begin{document}

\begin{frontmatter}
  \title{Call-By-Name Is Just Call-By-Value\\with Delimited Control}
  \author{Mateusz Pyzik\thanksref{myemail}\thanksref{orcid}\thanksref{pronunciation}}
  \address{Institute of Computer Science\\University of Wrocław\\Wrocław, Poland}
  \thanks[ALL]{Special thanks to my PhD programme supervisor Dariusz Biernacki and to my colleague and coauthor of previous works Filip Sieczkowski for invaluable support throughout the whole endeavour.}
  \thanks[myemail]{Email: \href{mailto:matp@cs.uni.wroc.pl}{\texttt \normalshape matp@cs.uni.wroc.pl}}
  \thanks[orcid]{ORCID: \href{https://orcid.org/0000-0002-9978-9536}{0000-0002-9978-9536}}
  \thanks[pronunciation]{IPA: \textipa{[mat\href{https://en.wikipedia.org/wiki/Open-mid_front_unrounded_vowel}{E}u\href{https://en.wikipedia.org/wiki/Voiceless_retroflex_fricative}{\:s} p\href{https://en.wikipedia.org/wiki/Close_central_unrounded_vowel}{1}\href{https://en.wikipedia.org/wiki/Voiced_alveolo-palatal_fricative}{\textctz}ik]}}
\begin{abstract}
  Delimited control operator \texttt{shift0} exhibits versatile capabilities: it can express layered monadic effects, or equivalently, algebraic effects. Little did we know it can express lambda calculus too!
  We present \(\Ld\), a call-by-value lambda calculus extended with \texttt{shift0} and control delimiter \texttt{\$} with carefully crafted reduction theory, such that the lambda calculus with beta and eta reductions can be isomorphically embedded into \(\Ld\) via a right inverse of a continuation-passing style translation.
  While call-by-name reductions of lambda calculus can trivially simulate its call-by-value version, we show that addition of \texttt{shift0} and \texttt{\$} is the golden mean of expressive power that suffices to simulate beta and eta reductions while still admitting a simulation back.
  As a corollary, calculi \(\Lmv\), \(\ld\), \(\Ld\) and \(\lambda\) all correspond equationally.
\end{abstract}
\begin{keyword}
  delimited control, continuation-passing style, reflection, lambda calculus
\end{keyword}
\end{frontmatter}
\section{Introduction}\label{intro}

Delimited control~\cite{DBLP:conf/popl/Felleisen88,DBLP:conf/lfp/DanvyF90} is a computational effect capable of expressing any monadic effect in direct style~\cite{DBLP:conf/popl/Filinski94,DBLP:conf/popl/Filinski99}. Moreover, delimited continuations are closely related to algebraic effects~\cite{DBLP:conf/esop/PlotkinP09,DBLP:journals/entcs/Pretnar15}, as it has been shown that delimited control, in the form of \texttt{shift0} and \texttt{\$}, are mutually macro expressible with algebraic effects with deep handlers~\cite{DBLP:journals/jfp/0002KLP19,DBLP:conf/rta/PirogPS19}. It is as strong as an infinite hierarchy of \texttt{shift}s~\cite{DBLP:conf/aplas/MaterzokB12}. Finally, \texttt{shift0} can recursively express \texttt{control0} in direct style~\cite{DBLP:journals/lisp/Shan07}.

It is a basic observation that operational semantics of call-by-value lambda calculus can be simulated by call-by-name lambda calculus. Every \(\bv\) (\(\ev\)) reduction step can be simulated by a \(\beta\) (\(\eta\)) step on the same term. The latter simulation is mediated by identity relation on terms. In this paper we show that there exists a relation (i.e. CPS translation \(* : \Ld \to \lambda\)) between call-by-value lambda terms extended with delimited control operators and vanilla, call-by-name lambda terms that admits a simulation in both ways, hence a bisimulation. We do not prove the bisimulation directly. Instead, we show that a stronger relationship between these calculi exists, namely a reflection.

In our previous, related works we followed the programme of Sabry and Wadler~\cite{DBLP:journals/toplas/SabryW97}: to seek a Galois connection (or even better, a reflection or isomophism) between reduction theories instead of equational correspondences whenever possible, as a Galois connection is a stronger relationship between calculi that already implies an equational correspondence. In~\cite{DBLP:conf/fscd/BiernackiPS20} we developed such relationship between \texttt{shift} and its CPS, while in~\cite{DBLP:conf/ppdp/BiernackiPS21} we have done that for \texttt{shift0}. This paper improves upon the latter of these and upon undirected axiomatisation of~\cite{DBLP:conf/csl/Materzok13}, as we prove a reflection where the image of CPS is closed under \(\beta\eta\) reduction, hence a "directed axiomatisation". However, in the particular case of CPS translation for \texttt{shift0}-style delimited control it turns out that the reduction-closed image is not a proper subcalculus but the entire \(\lambda\), hence every \(\lambda\)-term is, in fact, in continuation-composing style. A translation back (aka. direct style translation) from \(\lambda\) to \(\Ld\) exists which allows to perfectly embed lambda calculus: image of the direct style translation is a subcalculus of \(\Ld\) isomorphic with \(\lambda\)-calculus.

This paper has two goals.
\begin{enumerate}
  \item to bring about the definitive, fine-grained operational semantics for \texttt{shift0} that can provide, a "directed axiomatisation" of delimited control.
  \item to show that call-by-value with delimited control in the form of \texttt{shift0} is the same thing as call-by-name purely functional programming on the level of operational semantics.
\end{enumerate}

The second point cannot be stressed enough. Call-by-name reduction is exactly like call-by-value with \texttt{shift0}-style delimited control. It is not the case with abortive control nor \texttt{shift} of Danvy and Filinski~\cite{DBLP:conf/lfp/DanvyF90}, which can be deduced from previous axiomatisation efforts~\cite{DBLP:journals/lisp/SabryF93,DBLP:conf/icfp/KameyamaH03}, whose CPS translations reach proper subsets of \(\lambda\)-terms. This puts \texttt{shift0} control operator in a unique spot and as far as we know, it has not been acknowledged yet.

\subsection{Outline}

In Section~\ref{sec:definition} a new calculus of control is presented, named \(\Ld\), with its syntax and semantics. In Section~\ref{sec:well-defined} we show mutual macro-expressibility (and equational correspondence) of \(\Ld\) and \(\ld\), an established calculus with \texttt{shift0} and \texttt{\$}, to support our claim that \(\Ld\) is a call-by-value calculus with \texttt{shift0}-style delimited control. In Section~\ref{sec:reflection} we prove theorems that justify the title of this paper: \(\lambda\) can be reflected into \(\Ld\). As a corollary, calculi \(\Lmv\)~\cite{DBLP:conf/icfp/DownenA14}, \(\ld\)~\cite{DBLP:conf/csl/Materzok13}, \(\Ld\) and \(\lambda\) all correspond equationally. We conclude and discuss related and future work in Section~\ref{sec:conclusion}.

\subsection{Conventions}

In this paper, \(\equiv\) means syntactic equality modulo renaming of bound variables, while \(=\) means term equivalence up to rewrite rules in a calculus; calculus is either implied by the usage or specified directly in the subscript. Given relation \(r\), \(r^?\) is \((\equiv) \cup r\), i.e. reflexive closure of \(r\). Rewrites may occur in arbitrary contexts, including variable-binding contexts; these contexts may capture variables of a term that is plugged into. Terms are not required to be closed; it is perflectly fine for them to be open. Some classes of contexts (i.e. bindable and pure contexts) that we define may look a lot like evaluation contexts but that is not their purpose here. We are not concerned with standard reduction nor evaluation strategy, only with general reduction on open terms that may and will often be nondeterministic.

Atomic terms include variables \(x\), parenthesised terms \((M)\), unary
 \texttt{shift0} \(\Sh{M}\) (we call it \emph{thaw}) and unary \texttt{\$} \(\Dol{M}\) (we call it \emph {freeze}). Parentheses in these unary operators are mandatory and there is no space between operator and parenthesis.
Plugging term in a context \(C[M]\) and postfix translations \(M^*,M^\dagger,M^\#,M^\natural\)\footnote{We follow the translation naming convention of Sabry and Wadler~\cite{DBLP:journals/toplas/SabryW97}.} are left-associative and have the highest precedence.
Application \(M \, N\) has the next precedence and is left-associative,
followed by right-associative binary \texttt{\$} (known as \emph{plug}~\cite{DBLP:conf/tlca/KiselyovS07} or \emph{dollar}~\cite{DBLP:conf/aplas/MaterzokB12}\footnote{The word \emph{plug} is somewhat justified, but we feel like it is already taken (to \emph{plug} a term in a context), so we tend to pronouce this operator \emph{dollar}. It's arguably a bad name but there is a precedent of its use.}) \(M \DOL N\).
We disambiguate \(M \, \Dol{N}\) as \(M \, (\Dol{N})\), and \(M \DOL \, (N)\) as \(M \DOL N\) by different spacing.
Syntaxes \(\Let{x}{M}{N}\), \(\Fun{x}{N}\) and \(\Shift{x}{N}\) are right-associative, bind variable \(x\) in term \(N\) and have the lowest precedence.

\subsection{Theory of preorders -- Galois connections and reflections}
\label{sec:galois}

A calculus equipped with a reduction relation can be seen as a preordered set (\emph{proset}, a set with a reflexive and transitive relation).

\begin{definition}
  Assume \((A, \redu_A)\) and \((B, \redu_B)\) are prosets.
  A function \(f : A \to B\) is monotone iff for all \(x_1, x_2 \in A\), \(x_1 \redu_A x_2\) implies \(f(x_1) \redu_B f(x_2)\).
\end{definition}

\begin{definition}
  \label{def:galois}
  Monotone functions \(f : A \to B\) and \(g : B \to A\) form a
  Galois connection if, and only if for all \(a \in A\) and \(b \in B\),
  \(a \redu_A g(b) \Longleftrightarrow f(a) \redu_B b\).
\end{definition}

It may be helpful to think about a Galois connection as a compiler \(*\) from source language \(S\) to target language \(T\) coupled with a decompiler \(\#\). Laws of Galois connection guarantee that compiling and decompiling are harmonised: it does not matter whether simplification is performed on source or target, or on both ends. As an equivalence, a Galois connection allows two modes of reasoning:
\begin{itemize}
  \item \emph{Soundness.} Every multi-step reduction in the source is valid in the target: if \(M \redu_S N^\#\), then \(M^* \redu_T N\).
  \item \emph{Completeness.} Every multi-step reduction in the target is valid in the source: if \(M^* \redu_T N\), then \(M \redu_S N^\#\).
\end{itemize}
A reader familiar with (bi)simulations may notice that if a function \(f\) is monotone, then \(f\) is a similarity relation. When \((f, g)\) is a Galois connection, then \(f\) is a bisimilarity relation.
A reader somewhat familiar with category theory may recognise that preordered set is the same thing as a \emph{thin} category. Monotone functions are functors between those categories. Galois connection is a pair of adjoint functors.
A Galois connection between symmetric preorders (better known as equivalence relations) is simply an equational correspondence.

There is an alternative characterisation of a Galois connection.

\begin{theorem}[Equivalent definition of Galois connection]
  \label{thm:galois-alt-def}
  Monotone functions \(f : A \to B\) and \(g : B \to A\) form a
  Galois connection if, and only if
  \begin{itemize}
  \item \( a \twoheadrightarrow_A g(f(a)) \) and
  \item \( f(g(b)) \twoheadrightarrow_B b \).
  \end{itemize}
\end{theorem}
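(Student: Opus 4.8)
The plan is to prove the two directions of the biconditional separately, and in each direction to chain the ``unit/counit'' inequalities with monotonicity of the functors. For the forward direction, assume $(f,g)$ is a Galois connection. Instantiating Definition~\ref{def:galois} with $b \deq f(a)$, the reflexive inequality $f(a) \redu_B f(a)$ yields $a \redu_A g(f(a))$; symmetrically, instantiating with $a \deq g(b)$, the reflexive inequality $g(b) \redu_A g(b)$ yields $f(g(b)) \redu_B b$. This establishes both bullet points.

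For the backward direction, assume the two inequalities $a \redu_A g(f(a))$ and $f(g(b)) \redu_B b$ hold for all $a,b$. To show $a \redu_A g(b) \Longrightarrow f(a) \redu_B b$: apply monotonicity of $f$ to $a \redu_A g(b)$ to get $f(a) \redu_B f(g(b))$, then compose with $f(g(b)) \redu_B b$ using transitivity of $\redu_B$. Conversely, to show $f(a) \redu_B b \Longrightarrow a \redu_A g(b)$: compose $a \redu_A g(f(a))$ with the image $g(f(a)) \redu_A g(b)$ obtained by applying monotonicity of $g$ to $f(a) \redu_B b$, again using transitivity, this time of $\redu_A$.

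I do not expect a genuine obstacle here; the statement is the classical ``unit--counit'' reformulation of an adjunction specialised to thin categories, and every step is a one-line instantiation or a single use of monotonicity/transitivity. The only points requiring minor care are bookkeeping ones: the two inequalities in the theorem statement must be read as universally quantified over $a \in A$ and $b \in B$ respectively (matching the quantification in Definition~\ref{def:galois}), and one must remember that $f$ and $g$ are \emph{assumed} monotone on both sides of the biconditional, so monotonicity is available ``for free'' in the backward direction and need not be re-derived. With those conventions fixed, the proof is a short four-part diagram chase.
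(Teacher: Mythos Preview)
Your proof is correct and is the standard unit--counit argument. The paper in fact states Theorem~\ref{thm:galois-alt-def} without proof, treating it as a well-known characterisation, so there is no alternative approach in the paper to compare against; your write-up would serve perfectly well as the missing justification.
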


A reflection is a bit tighter kind of a Galois connection.

\begin{definition}
  A Galois connection \((f : A \to B, g : B \to A)\) is a reflection
  if, and only if for all \(b \in B\), \(f(g(b)) \equiv b\).
\end{definition}

When \((f : A \to B, g : B \to A)\) is a reflection we say that \(B\) can be reflected into \(A\). A subcalculus \((g[B], (\redu_A) \cap (g[B] \times g[B]))\) is then isomorphic to \(B\).

\begin{definition}
  A reflection \( (f : A \to B, g : B \to A) \) is an isomorphism
  if, and only if for all \(a \in A\), \(a \equiv g(f(a))\).
\end{definition}

\subsection{Background: Materzok's \(\ld\)}

\begin{figure}[tp!]
  \label{fig:calc-ld}
  \centering
  \begin{gather*}
    \begin{matrix*}[l]
      \text{term}         & e & \grdef &
        v \ALT e \, e' \ALT \Shift{x}{e} \ALT e \DOL e' \\
      \text{value}        & v & \grdef &
        x \ALT \Fun{x}{M} \\
      \text{pure context} & E & \grdef &
        [\,] \ALT E \, e \ALT v \, E \ALT E \DOL e \\
    \end{matrix*} \\[1mm]
    \begin{matrix*}[l]
      (\Fun{x}{e}) \, v      & \to_\bv & e[v/x] & \\
      v \DOL v'              & \to_\dolv & v \, v' & \\
      v \DOL E[\Shift{x}{e}] & \to_{\$/\SH} &
      e[\Fun{y}{v \DOL E[y]}/x] & \IF y \fresh v, E \\
      (\Fun{x}{e}) \, v & =_\bv    & e[v/x] & \\
      \Fun{x}{v \, x}   & =_\ev    & v & \IF x \fresh v \\
      v \DOL v'         & =_\dolv  & v \, v' & \\
      v \DOL E[e]       & =_{\$_E} & (\Fun{y}{v \DOL E[y]}) \DOL e & \\
      v \DOL \Shift{x}{e} & =_\bd & e[v/x] &\\
      \Shift{x}{x \DOL e} & =_\ed & e & \IF x \fresh e \\
    \end{matrix*}
  \end{gather*}
  \caption{Syntax, reductions and axioms of \(\ld\)~\cite{DBLP:conf/csl/Materzok13}.}
\end{figure}

\begin{figure}[tp!]
  \centering
  \begin{gather*}
    \cps{\cdot} : \ld \to \lambda \\
    \begin{matrix*}[l]
      \cps{v} & \deq \Fun{k}{k \, \cpsv{v}} \\
      \cps{e \, e'} & \deq \Fun{k}{\cps{e} \,
      (\Fun{x}{\cps{e'} \, (\Fun{y}{x \, y \, k})})} \\
      \cps{\Shift{x}{e}} & \deq \Fun{x}{\cps{e}} \\      
      \cps{e \DOL e'} & \deq \Fun{k}{\cps{e} \,
      (\Fun{x}{\cps{e'} \, x \, k})} \\
      \cpsv{x} & \deq x \\
      \cpsv{\Fun{x}{e}} & \deq \Fun{x}{\cps{e}} \\
    \end{matrix*}\\
    \begin{matrix*}[l]
      \Sh{e}  & \deq & (\Fun{x}{\Shift{k}{x \, k}}) \, e \\
      \Dol{e} & \deq & \Fun{x}{x \DOL e} \\
    \end{matrix*}
  \end{gather*}
  \caption{Semantics of \(\ld\), described by CPS translation~\cite{DBLP:conf/csl/Materzok13} and macro-definitions of unary variants of control operators.}
  \label{fig:cps-ld}
\end{figure}

In this introductory subsection, we present syntax, reductions, axioms (Figure~\ref{fig:calc-ld}) and CPS translation (Figure~\ref{fig:cps-ld}) of \(\ld\), as defined by Materzok~\cite{DBLP:conf/csl/Materzok13}. It features a variable-binding control operator \texttt{shift0} and binary operator \texttt{\$} that delimits continuation captured by \texttt{shift0} in the right hand side. CPS translation to lambda calculus provides denotational semantics. The axioms were proven by Materzok to be sound and complete with respect to CPS translation. \(\ld\) also has small-step operational semantics, which are sound (but not complete) with respect to axioms and CPS. These semantics allow \texttt{shift0} to capture an arbitrarily long context in a single reduction step.

\begin{example}
  Here's an evaluation example.
  Let \(I \equiv \Fun{x}{x}\).
  \begin{gather*}
    I \DOL I \, (\Shift{f}{f \, (f z)})
    \to_{\$/\SH} (f \, (f \, z))[\Fun{x}{I \DOL I \, x}/f]
    \equiv (\Fun{x}{I \DOL I \, x}) \, ((\Fun{x}{I \DOL I \, x}) \, z) \\
    \to_\bv (\Fun{x}{I \DOL I \, x}) \, (I \DOL I \, z)
    \to_\bv (\Fun{x}{I \DOL I \, x}) \, (I \DOL z)
    \to_\dolv (\Fun{x}{I \DOL I \, x}) \, (I \, z) \\
    \to_\bv (\Fun{x}{I \DOL I \, x}) \, z
    \to_\bv I \DOL I \, z
    \to_\bv I \DOL z
    \to_\dolv I \, z
    \to_\bv z
  \end{gather*}
\end{example}

\begin{example}
  Terms in CPS get lengthy even for very small examples:
  \(\cps{I} \equiv \Fun{k_1}{k_1 \, (\Fun{x}{\Fun{k_2}{k_2 \, x}})}\).
\end{example}

\section{Syntax and semantics of \(\Ld\)}
\label{sec:definition}

Reductions in Materzok's \(\ld\) were clearly lacking. What we wanted was a set of reductions that would also stand as a complete axiomatisation of \texttt{shift0}.
Our goal was to improve upon a sound but incomplete reduction theory of \(\lcd\)~\cite{DBLP:conf/ppdp/BiernackiPS21}. To reach completeness, we took the range of that previous CPS translation and closed this set under \(\beta\eta\)-reduction (in a similar vein to Sabry and Felleisen~\cite{DBLP:journals/lisp/SabryF93}). To our surprise, all quirks of the grammar disappeared and the entire \(\lambda\) showed up. To double check, we also closed CPS of \(\ld\) under reduction: entire \(\lambda\) again.
It was now a matter of time to find an improved calculus, CPS and DS (direct style) translation. Our choice to use unary operators may be considered arbitrary by some but this choice provided us with a beautiful duality of values and terms mediated by freeze and thaw and an actual, working reflection. Old semantics had to be modified anyway and syntax could use some refreshment too. Our method was to have only one binding construct and use only local reductions to keep things as simple as possible.

We define syntax and small-step operational semantics of \(\Ld\) in Figure~\ref{fig:calc-Ld}.
Figure also includes some syntactic sugar to make some idioms (like let-expressions) easier on the eye.

Indeed, let-expressions with both \(\beta\) and \(\eta\) rule are macro-definable.
It is easy to check that \(\Let{x}{V}{M} \redu_\Ld M[V/x]\), \(\Let{x}{M}{x} \redu_\Ld M\) and also \((\Fun{x}{N}) \, M \redu_\Ld \Let{x}{M}{N}\).
Associativity of let-expressions holds, but it is undirected: if \(y \fresh N\), then \[\Let{x}{\Let{y}{L}{M}}{N} =_\Ld \Let{y}{L}{\Let{x}{M}{N}}.\]

Initially we aimed for a calculus with a separate let construct whose CPS is diffrent than that of a \(\beta\)-redex,
like in \(\lambda_c\)~\cite{DBLP:journals/toplas/SabryW97}, \(\lcS\)~\cite{DBLP:conf/fscd/BiernackiPS20} and \(\lcd\)~\cite{DBLP:conf/ppdp/BiernackiPS21}.
However, associativity kept derailing our theorems. We were stuck, so we got rid of the let construct and replaced its uses with a pattern \(\Shift{k}{(\Fun{x}{k \DOL N}) \DOL M}\) that behaves just like let-expressions. Due to its ubiquity in the proofs, we needed a shorthand notation; a let-expression syntax was an obvious choice.

\begin{figure}[tp!]
  \centering
  \begin{gather*}
    \begin{matrix*}[l]
      \text{term}         & L,M,N & \grdef & V \ALT P \\
      \text{value}        & V,W   & \grdef &
        x \ALT \Fun{x}{M} \ALT \Dol{M} \\
      \text{nonvalue}     & P,Q   & \grdef &
        M \, N \ALT \Sh{M} \\
      \text{bindable  context} & J & \grdef &
        [\,] \, M \ALT V \, [\,] \ALT \, \Sh{[\,]} \\
      \text{pure      context} & K & \grdef &
        [\,] \ALT J[K] \\
      \text{reduction context} & C & \grdef &
        [\,] \ALT \Fun{x}{C} \ALT C \, M \ALT
        M \, C \ALT \Sh{C} \ALT \Dol{C} \\
    \end{matrix*} \\
    \begin{matrix*}[l]
      \Shift{x}{M}  & \deq & \Sh{\Fun{x}{M}} \\
      M \DOL N      & \deq & \Dol{N} \, M \\
      \Let{x}{M}{N} & \deq & \Shift{k}{(\Fun{x}{k \DOL N}) \DOL M} \\
    \end{matrix*} \\
    \mapsto \deq \bv \cup \ev \cup
      \dolv \cup \dolsh \cup \shdol \cup \pure \cup \bind \\
    \begin{matrix*}[l]
      (\Fun{x}{M}) \, V & \bv & M[V/x] &\\
      \Fun{x}{V \, x}   & \ev & V & \IF x \fresh V \\
      \Dol{V}           & \dolv & \Fun{x}{x \, V} & \IF x \fresh V \\
      \Dol{\Sh{V}}      & \dolsh & V &\\
      \Sh{\Dol{M}}      & \shdol & M &\\
      \Shift{x}{x \, V} & \pure & V & \IF x \fresh V \\
      J[P]              & \bind & \Let{x}{P}{J[x]} & \IF x \fresh J \\
      C[M] & \to & C[N] & \IF M \mapsto N \\
      M & \redu & N & \IF M \to^n N, n \geq 0 \\
      M & = & N & \IF M (\to\cup\leftarrow)^n N, n \geq 0 \\
    \end{matrix*}
  \end{gather*}
  \caption{Direct style calculus \(\Ld\).}
  \label{fig:calc-Ld}
\end{figure}

One can think of unary \texttt{\$} as reifying or freezing a computation as a value. Such a value can be then reflected or thawed by unary \texttt{shift0} (if given a nonvalue, it waits until its argument becomes a value). These operations are invertible using \(\dolsh\) and \(\shdol\) reductions. Reductions \(\dolv\) and \(\pure\) tell us that a value \(V\) freezes to \(\Fun{x}{x \, V}\) and can be recovered from it by thawing.
The job of \(\bind\) rule is to gives names to subcomputations and make sure contexts are managed properly. Notice that due to syntactic restriction that requires \(\bind\)-redex to be of the form \(J[P]\), there is at most one legal way to \(\bind\)-contract:
\begin{itemize}
  \item \(\Sh{P} \, \bind \Let{x}{P}{\Sh{x}}\) is a legal $\bind$-contraction,
  \item \(\Sh{V} \, \bind \Let{x}{V}{\Sh{x}}\), \(V \, W \, \bind \, \Let{x}{W}{V \, x}\) and \(V \, W \, \bind \, \Let{x}{V}{x \, W}\) are illegal,
  \item \(V \, P \, \bind \, \Let{x}{P}{V \, x}\) is legal while \(V \, P \, \bind \, \Let{x}{V}{x \, P}\) is not,
  \item \(P \, M \, \bind \, \Let{x}{P}{x \, M}\) is legal while \(P \, Q \, \bind \, \Let{x}{Q}{P \, x}\) is not,
\end{itemize}

To enable further developments, Figure~\ref{fig:asterisk} introduces continuation-passing style translation \(* : \Ld \to \lambda\), which provides alternative, denotational semantics of \(\Ld\).
However, one does not need to worry about a mismatch of semantics -- theorems of Section~\ref{sec:reflection} are more than enough to ensure that \(*\) is a sound and complete translation (\(M =_\Ld N\) iff \(M^* =_\lambda N^*\)).
\(\lambda\)-calculus that we target has both \(\beta\) and \(\eta\) reductions (Figure~\ref{fig:calc-lambda}).

\begin{figure}[tp!]
  \label{fig:asterisk}
  \[ * : \Ld \to \lambda \]
  \begin{minipage}{.5\textwidth}
    \begin{equation*}
      \begin{matrix*}[l]
        V^* & \deq \Fun{k}{k \, V^\dagger} \\
        J[P]^* & \deq \Letk{x}{P^*}{J[x]^*} \\
        (V \, W)^* & \deq V^\dagger \, W^\dagger \\
        \Sh{V}^* & \deq V^\dagger \\
      \end{matrix*}
    \end{equation*}
  \end{minipage}
  \begin{minipage}{.4\textwidth}
    \begin{equation*}
      \begin{matrix*}[l]
        x^\dagger & \deq x \\
        (\Fun{x}{M})^\dagger & \deq \Fun{x}{M^*} \\
        \Dol{M}^\dagger & \deq M^* \\
        & \\
      \end{matrix*}
    \end{equation*}
  \end{minipage}
  \caption{
    Translation from \(\Ld\) to continuation-passing style.}
\end{figure}

To lift the spirit even higher, confluence of \(\Ld\) is proven below using definitions (translation \(\# : \lambda \to \Ld\)) and theorems of Section~\ref{sec:reflection} but the proof is straightforward enough that it may be understandable right here.
It gives us a taste of usefullness of reflections and Galois connections in general: one can transfer confluence from one calculus to another.
Confluence is provided here as a sanity check of a newly introduced calculus; to avoid circular reasoning, we do not use it in proofs.

\begin{theorem}[Confluence]
  Reduction in \(\Ld\) is confluent.
\end{theorem}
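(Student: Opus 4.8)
\emph{Proof plan.} The plan is to transfer confluence from $\lambda$ to $\Ld$ along the reflection $(* : \Ld \to \lambda,\ \# : \lambda \to \Ld)$ established in Section~\ref{sec:reflection}. Of that reflection I will use only the Galois-connection data: both $*$ and $\#$ are monotone (so they preserve $\redu$), and the unit law $M \redu_\Ld M^{*\#}$ holds for every $M \in \Ld$. On the $\lambda$ side I will invoke the classical Church--Rosser theorem: $\lambda$ with both $\beta$ and $\eta$ is confluent.

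Given a span $M_1 \expa_\Ld M \redu_\Ld M_2$, applying the monotone map $*$ produces a span $M_1^* \expa_\lambda M^* \redu_\lambda M_2^*$ in $\lambda$; confluence of $\lambda$ then furnishes a term $N \in \lambda$ with $M_1^* \redu_\lambda N$ and $M_2^* \redu_\lambda N$. Applying the monotone map $\#$ gives $M_1^{*\#} \redu_\Ld N^\#$ and $M_2^{*\#} \redu_\Ld N^\#$, and prepending the unit reductions $M_i \redu_\Ld M_i^{*\#}$ yields $M_1 \redu_\Ld N^\#$ and $M_2 \redu_\Ld N^\#$. Hence $N^\#$ is the desired common reduct. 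Notice this argument needs only a Galois connection, not the extra strength $N^{\#*} \equiv N$ of a reflection.

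I do not expect a genuine obstacle here: once Section~\ref{sec:reflection} is in place the whole argument is a diagram chase, which is precisely why the statement functions as a sanity check rather than a load-bearing lemma. The only points demanding care are bookkeeping ones: that the confluence of $\lambda$ invoked is for \emph{full} $\beta\eta$ and not merely $\beta$; that the monotonicity of both $*$ and $\#$ together with the unit law $M \redu_\Ld M^{*\#}$ are exactly the facts proved downstream; and --- as already flagged in the text --- that nothing in Section~\ref{sec:reflection} itself appeals to confluence of $\Ld$, so that the reasoning stays non-circular.
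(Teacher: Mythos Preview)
Your proposal is correct and follows essentially the same approach as the paper's proof: transfer the span to $\lambda$ via monotonicity of $*$, close it there by Church--Rosser, transfer back via monotonicity of $\#$, and prepend the unit reductions $M_i \redu_\Ld M_i^{*\#}$. Your remarks that only the Galois-connection data (not the full reflection) is needed and that the argument is non-circular are likewise in agreement with the paper.
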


\begin{proof}
  For concise presentation, we chain relations and flip arguments:
  \(M \expa N\) iff \(N \redu M\).
  \begin{enumerate}
    \item Assume \(M_1 \expa M \redu M_2\).
    \item Apply monotonicity of translation \(*\): \(M_1^* \expa M^* \redu M_2^*\).
    \item Apply confluence in \(\lambda\) (Church-Rosser theorem): there is \(N \in \lambda\) such that \(M_1^* \redu N \expa M_2^*\).
    \item Apply monotonicity of translation \(\#\): \(M_1^{*\#} \redu N^\# \expa M_2^{*\#}\).
    \item Apply left post-inverse theorem twice:
    \(M_1 \redu M_1^{*\#} \redu N^\# \expa M_2^{*\#} \expa M_2\).
    \item Apply transitivity of \(\redu_\Ld\) twice:
    \(M_1 \redu N^\# \expa M_2\).
  \end{enumerate}
\end{proof}

The following lemma is quite handy in equational reasoning about \(\Ld\) terms.

\begin{lemma}[Generalised \(=_\bind\)]
  If \(x \fresh J\), then \(J[M] =_\Ld \Let{x}{M}{J[x]}\).
\end{lemma}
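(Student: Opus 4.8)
The statement generalises the $\bind$ rule, which only fires when $J[P]$ has a nonvalue in the hole, to an arbitrary term $M$ in the hole. The plan is to do a case analysis on whether $M$ is a value or a nonvalue. If $M$ is a nonvalue $P$, then $J[P]$ is literally a $\bind$-redex (since $x \fresh J$), so $J[P] \to_\bind \Let{x}{P}{J[x]}$ in a single step and we are done. The work is therefore entirely in the case where $M$ is a value $V$.

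When $M \equiv V$ is a value, $J[V]$ is not a $\bind$-redex, so we must reach $\Let{x}{V}{J[x]}$ by a detour through other reductions. First I would unfold the let-sugar: $\Let{x}{V}{J[x]} \deq \Shift{k}{(\Fun{x}{k \DOL J[x]}) \DOL V}$, which in turn unfolds via the $\DOL$-sugar and $\Shift{}{}$-sugar to $\Sh{\Fun{k}{\Dol{V} \, ((\Fun{x}{k \DOL J[x]}))}}$. Since $x \fresh J$ and $V$ is a value, the plan is to simplify the inner $\Dol{V}$ using the $\dolv$ reduction $\Dol{V} \dolv \Fun{y}{y \, V}$ (with $y$ fresh), so that $\Dol{V} \, (\Fun{x}{k \DOL J[x]})$ reduces to $(\Fun{y}{y \, V}) \, (\Fun{x}{k \DOL J[x]})$, and then by $\bv$ to $(\Fun{x}{k \DOL J[x]}) \, V$, and again by $\bv$ to $k \DOL J[V]$, i.e. $\Dol{J[V]} \, k$. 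So the whole right-hand side $\redu_\Ld \Sh{\Fun{k}{\Dol{J[V]} \, k}}$, which by $\ev$ (as $k \fresh J[V]$) is $\Sh{\Dol{J[V]}}$, and finally by $\shdol$ is $J[V]$. This chain shows $\Let{x}{V}{J[x]} \redu_\Ld J[V]$, and hence $J[V] =_\Ld \Let{x}{V}{J[x]}$ since $=_\Ld$ is the symmetric closure of $\to$.

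A cleaner way to package the value case, which I would prefer in the final write-up, is to invoke the three already-noted facts about let-sugar: $\Let{x}{V}{N} \redu_\Ld N[V/x]$ for a value $V$, which the paper states explicitly just above. Taking $N \deq J[x]$ gives immediately $\Let{x}{V}{J[x]} \redu_\Ld J[x][V/x] \equiv J[V]$ (using $x \fresh J$ so the substitution only touches the hole), hence $J[V] =_\Ld \Let{x}{V}{J[x]}$. This sidesteps re-deriving the reduction chain. The main obstacle is thus not any one calculation but rather keeping the freshness bookkeeping straight: one must check that $x \fresh J$ really does ensure $J[x][V/x] \equiv J[V]$ and that all the auxiliary fresh variables ($k$, $y$) introduced by the sugar-unfolding do not clash with the free variables of $V$ or $J$, which is routine given the convention that $\equiv$ is equality modulo renaming of bound variables. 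I would present the nonvalue case in one line and the value case via the cited $\Let{x}{V}{N} \redu_\Ld N[V/x]$ fact, concluding that in either case $J[M] =_\Ld \Let{x}{M}{J[x]}$.
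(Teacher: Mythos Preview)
Your proof is correct, but the paper takes a different and more uniform route that avoids the value/nonvalue case split. The paper's trick is to first \emph{expand} $J[M]$ backwards via $\shdol$ to $J[\Sh{\Dol{M}}]$; since $\Sh{\Dol{M}}$ is always a nonvalue regardless of what $M$ is, $\bind$ now fires directly to give $\Let{x}{\Sh{\Dol{M}}}{J[x]}$, and a forward $\shdol$ inside the let collapses it to $\Let{x}{M}{J[x]}$. Three lines, no case analysis. Your approach is perfectly valid and arguably more transparent about why each case holds (the nonvalue case is literally $\bind$, the value case is the already-stated let-$\beta$ fact), but it requires two separate arguments and reuses an auxiliary fact whose own proof is essentially your longer reduction chain. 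The paper's wrap--bind--unwrap manoeuvre buys uniformity and showcases the freeze/thaw duality that is a theme of the calculus; your approach buys directness at the cost of a split.
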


\begin{proof}
  By a chain of rewrites.
  \begin{align*}
    & & J[M] \\
    & \leftarrow_\shdol & J[\Sh{\Dol{M}}] \\
    & \rightarrow_\bind & \Let{x}{\Sh{\Dol{M}}}{J[x]} \\
    & \rightarrow_\shdol & \Let{x}{M}{J[x]}
  \end{align*}
\end{proof}

\section{Equational correspondence with \(\ld\)}
\label{sec:well-defined}

In this section, we defend the thesis that \(\Ld\) is a calculus of delimited control and its control primitives behave like control operator \texttt{shift0} and delimiter \texttt{\$}. To meet these ends, we show that there exists an equational correspondence with \(\ld\) of Materzok~\cite{DBLP:conf/csl/Materzok13} expressed with macro-translations. As we mentioned already, \(\ld\) has small-step operational semantics, which are sound (but not complete) with respect to axioms and CPS. These semantics allow \texttt{shift0} to capture an arbitrarily long context in a single reduction step. Equational correspondence that we prove in this section shows that this traditional approach can also be used in equational reasoning about \(\Ld\) terms.

One might wonder, why we only prove equational correspondence and not a Galois connection? We clearly cannot connect directed \(\redu_\Ld\) with \(=_\ld\) because such a connection would transfer symmetry property from \(=_\ld\) to \(\redu_\Ld\), a contradiction: \(I \, I \to I \not{\to} I \, I\). We cannot connect \(\redu_\Ld\) with \(\redu_\ld\) using macros: every macro-translation would necessarily transfer \(\Fun{y}{x\,y} \redu x\) verbatim which is provably true in \(\Ld\) and false in \(\ld\). Our equational correspondence crucially uses Materzok's $\$_E$ axiom in both directions, so it is unclear how these axioms could be turned into directed reduction rules that would admit a Galois connection. We did not investigate translations that are not based on macros: such translation would weaken our claim that \(\Ld\) is a calculus with \texttt{shift0}-style delimited control.

Basic reductions of \(\Ld\) can be seen as set of axioms.
It is later shown in Section~\ref{sec:reflection} as corollary that for such axioms CPS translation \(*\) is sound and \emph{complete} (\(M =_\Ld N\) iff \(M^* =_\lambda N^*\)), which is an improvement upon \(\ld\), whose reductions induce an equivalence relation on \(\ld\) that is a proper subset of \(=_\ld\), hence incomplete. These stronger properties make \(\Ld\) a contender for the title of the definitive calculus of \texttt{shift0}-style delimited control.

To mediate the correspondence, we need translations to (\(\iota\)) and from (\(\pi\)) \(\Ld\).
Translations (with unfolded macros) are presented in Figure~\ref{fig:coercion}.
With folded macro-definitions they would look like an identity function.
To translate from \(\ld\) to \(\Ld\), we use macros from Figure~\ref{fig:calc-Ld}.
To translate from \(\Ld\) to \(\ld\), we use macros from Figure~\ref{fig:cps-ld}.

\begin{figure}[tp!]
  \begin{minipage}{.4\textwidth}
    \begin{gather*}
      \iota : \ld \to \Ld \\
      \begin{matrix*}[l]
        \iota(x) & \deq x \\
        \iota(\Fun{x}{e}) & \deq \Fun{x}{\iota(e)} \\
        \iota(e \, e') & \deq \iota(e) \, \iota(e') \\
        \iota(\Shift{x}{e}) & \deq \Sh{\Fun{x}{\iota(e)}} \\
        \iota(e \DOL e') & \deq \Dol{\iota(e')} \, \iota(e) \\
      \end{matrix*}
    \end{gather*}
  \end{minipage}
  \begin{minipage}{.5\textwidth}
    \begin{gather*}
      \pi : \Ld \to \ld \\
      \begin{matrix*}[l]
        \pi(x) & \deq x \\
        \pi(\Fun{x}{M}) & \deq \Fun{x}{\pi(M)} \\
        \pi(\Dol{M}) & \deq \Fun{x}{x \DOL \pi(M)} \\
        \pi(M \, N) & \deq \pi(M) \, \pi(N) \\
        \pi(\Sh{M}) & \deq (\Fun{x}{\Shift{k}{x\,k}}) \, \pi(M) \\
      \end{matrix*}
    \end{gather*}
  \end{minipage}
  \caption{Embedding \(\iota\) and its inverse \(\pi\), with macros unfolded.}
  \label{fig:coercion}
\end{figure}

\begin{lemma}[Embedding \(\iota\) is invertible]
  The following equalities hold:
  \begin{itemize}
    \item \emph{Left inverse property}. For all \(e \in \ld\), \(\pi(\iota(e)) = e\).
    \item \emph{Right inverse property}. For all \(M \in \Ld\), \(\iota(\pi(M)) = M\).
  \end{itemize}
\end{lemma}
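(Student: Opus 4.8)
The plan is to prove both halves by structural induction, exploiting the fact that $\iota$ and $\pi$ are the identity ``up to folding and unfolding of macros'': each defining clause of $\iota$ sends a primitive of $\ld$ to the $\Ld$-macro carrying the same name, and dually each clause of $\pi$ sends a primitive of $\Ld$ to the $\ld$-macro carrying that name. Hence $\pi(\iota(\cdot))$ and $\iota(\pi(\cdot))$ agree with the identity except at the control constructs, and at each such spot the discrepancy is a macro set against its unfolding, which a short chain of rewrites in the relevant calculus closes. Throughout I use that both $=_\ld$ and $=_\Ld$ are congruences (rewrites occur in arbitrary, even binding, contexts) and the usual convention that the bound variables introduced by $\iota$ and $\pi$ are chosen fresh.

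For the left inverse $\pi(\iota(e)) =_\ld e$, I would induct on $e\in\ld$. The variable, abstraction and application cases are immediate from the induction hypothesis and congruence. For $e \equiv \Shift{x}{e'}$, unfolding the definitions gives $\pi(\iota(\Shift{x}{e'})) \equiv (\Fun{y}{\Shift{k}{y\,k}})\,(\Fun{x}{\pi(\iota(e'))})$, which by the induction hypothesis is $=_\ld (\Fun{y}{\Shift{k}{y\,k}})\,(\Fun{x}{e'})$; one $\bv$-step, then a second $\bv$-step (legal because a variable is a value in $\ld$), reduce this to $\Shift{k}{e'[k/x]} \equiv \Shift{x}{e'}$. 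For $e \equiv e_1 \DOL e_2$, unfolding and the induction hypothesis yield $(\Fun{x}{x \DOL e_2})\,e_1$, and it remains to establish the auxiliary $\ld$-equality $(\Fun{x}{x \DOL e_2})\,e_1 =_\ld e_1 \DOL e_2$; this follows from Materzok's $\$_E$ axiom (or, alternatively, by computing the CPS images of the two sides and appealing to soundness-and-completeness of $\ld$'s CPS).

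For the right inverse $\iota(\pi(M)) =_\Ld M$, I would induct on the grammar of $\Ld$, with cases $x$, $\Fun{x}{M}$, $\Dol{M}$, $M\,N$, $\Sh{M}$. The variable, abstraction and application cases again reduce to the induction hypothesis and congruence. For $M \equiv \Dol{M'}$, unfolding gives $\Fun{x}{\Dol{\iota(\pi(M'))}\,x}$, which by the induction hypothesis equals $\Fun{x}{\Dol{M'}\,x}$, and a single $\ev$-step (legal since $\Dol{M'}$ is a value) collapses it to $\Dol{M'}$. For $M \equiv \Sh{M'}$, unfolding gives $(\Fun{x}{\Sh{\Fun{k}{x\,k}}})\,M'$; an inner $\ev$-step rewrites $\Fun{k}{x\,k}$ to $x$, the derived macro-fact $(\Fun{x}{N})\,M' \redu_\Ld \Let{x}{M'}{N}$ turns $(\Fun{x}{\Sh{x}})\,M'$ into $\Let{x}{M'}{\Sh{x}}$, and finally the Generalised $=_\bind$ lemma applied with the bindable context $J \equiv \Sh{[\,]}$ rewrites this to $\Sh{M'}$; the induction hypothesis handles $M'$ under all the surrounding contexts.

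The routine part is the induction-hypothesis-and-congruence bookkeeping; the only real content is the handful of control-construct equalities above, all but one of which are two or three $\beta/\eta$-style steps. I expect the one genuine obstacle to be the dollar case of the left inverse, namely the $\ld$-equality $(\Fun{x}{x \DOL e})\,e' =_\ld e' \DOL e$: unlike the others it is not a short $\beta\eta$-chain but needs Materzok's $\$_E$ axiom -- the single inherently undirected axiom of $\ld$ -- or else a detour through the CPS translation.
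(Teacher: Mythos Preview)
Your proposal is correct and follows essentially the same approach as the paper: structural induction on both sides, with the routine cases dispatched by congruence and the control-construct cases closed by short rewrite chains. The paper's $e_1 \DOL e_2$ case spells out the chain you allude to---it uses $\ed$ to wrap, two applications of $\$_E$ (in opposite directions) with a $\bv$ in between, then $\ed$ again to unwrap---and its $\Sh{M}$ case uses two instances of the generalised $=_\bind$ (with $J \equiv V\,[\,]$ then $J \equiv \Sh{[\,]}$) where you instead invoke the derived fact $(\Fun{x}{N})\,M \redu_\Ld \Let{x}{M}{N}$ for the first step; these are interchangeable.
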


\begin{proof}
  Both propositions are separately proven by structural induction.
  \begin{itemize}
    \item Case \(e \equiv x\). \(\pi(\iota(x)) \equiv \pi(x) \equiv x\).
    \item Case \(e \equiv \Fun{x}{e_1}\). \(\pi(\iota(\Fun{x}{e_1})) \equiv \pi(\Fun{x}{\iota(e_1)}) \equiv \Fun{x}{\pi(\iota(e_1))} =_\text{IH} \Fun{x}{e_1}\).
    \item Case \(e \equiv e_1 \, e_2\). \(\pi(\iota(e_1 \, e_2)) \equiv \pi(\iota(e_1) \, \iota(e_2)) \equiv \pi(\iota(e_1)) \, \pi(\iota(e_2)) =_\text{IH} e_1 \, e_2 \).
    \item Case \(e \equiv \Shift{x}{e_1}\).
    \begin{align*}
      & & \pi(\iota(\Shift{k}{e_1})) \\
      & \equiv & \pi(\Shift{k}{\iota(e_1)}) \\
      & \equiv & (\Fun{x}{\Shift{k}{x\,k}}) \, \pi(\Fun{k}{\iota(e_1)}) \\
      & \equiv & (\Fun{x}{\Shift{k}{x\,k}}) \, \Fun{k}{\pi(\iota(e_1))} \\
      & =_\text{IH} & (\Fun{x}{\Shift{k}{x\,k}}) \, \Fun{k}{e_1} \\
      & =_\bv & \Shift{k}{(\Fun{k}{e_1})\,k} \\
      & =_\bv & \Shift{k}{e_1}
    \end{align*}
    \item Case \(e \equiv e_1 \DOL e_2\).
    \begin{align*}
      & & \pi(\iota(e_1 \DOL e_2)) \\
      & \equiv & \pi(\iota(e_1) \DOL \iota(e_2)) \\
      & \equiv & \pi(\Dol{\iota(e_2)}) \, \pi(\iota(e_1)) \\
      & \equiv & (\Fun{x}{x \DOL \pi(\iota(e_2))}) \, \pi(\iota(e_1)) \\
      & =_\text{IH} & (\Fun{x}{x \DOL e_2}) \, e_1 \\
      & =_\ed & \Shift{k}{k \DOL \, (\Fun{x}{x \DOL e_2}) \, e_1} \\
      & =_{\$_E} & \Shift{k}{(\Fun{x}{k \DOL \, (\Fun{x}{x \DOL e_2}) \, x}) \DOL e_1} \\
      & =_\bv & \Shift{k}{(\Fun{x}{k \DOL x \DOL e_2}) \DOL e_1} \\
      & =_{\$_E} & \Shift{k}{k \DOL e_1 \DOL e_2} \\
      & =_\ed & e_1 \DOL e_2 \\
    \end{align*}
  \item Case \(M \equiv x\). \(\iota(\pi(x)) \equiv \iota(x) \equiv x\).
  \item Case \(M \equiv \Fun{x}{M_1}\). \(\iota(\pi(\Fun{x}{M_1})) \equiv \iota(\Fun{x}{\pi(M_1)}) \equiv \Fun{x}{\iota(\pi(M_1))} =_\text{IH} \Fun{x}{M_1}\).
  \item Case \(M \equiv \Dol{M_1}\).
  \(
    \iota(\pi(\Dol{M_1}))
    \equiv \iota(\Fun{x}{x \DOL \pi(M_1)})
    \equiv \Fun{x}{\iota(x \DOL \pi(M_1))}
    \equiv \Fun{x}{\iota(x) \DOL \iota(\pi(M_1))}
    \equiv \Fun{x}{x \DOL \iota(\pi(M_1))}
    =_\text{IH} \Fun{x}{x \DOL M_1}
    =_\ev \Dol{M_1}
  \).
  \item Case \(M \equiv M_1 M_2\). \(\iota(\pi(M_1 \, M_2)) \equiv \iota(\pi(M_1) \, \pi(M_2)) \equiv \iota(\pi(M_1)) \, \iota(\pi(M_2)) =_\text{IH} M_1 \, M_2 \).
  \item Case \(M \equiv \Sh{M_1}\).
  \begin{align*}
    & & \iota(\pi(\Sh{M_1})) \\
    & \equiv & \iota((\Fun{x}{\Shift{k}{x\,k}}) \, \pi(M_1)) \\
    & \equiv & \iota((\Fun{x}{\Shift{k}{x\,k}})) \, \iota(\pi(M_1)) \\
    & \equiv & (\Fun{x}{\iota(\Shift{k}{x\,k})}) \, \iota(\pi(M_1)) \\
    & \equiv & (\Fun{x}{\Shift{k}{\iota(x\,k)}}) \, \iota(\pi(M_1)) \\
    & \equiv & (\Fun{x}{\Shift{k}{\iota(x)\,\iota(k)}}) \, \iota(\pi(M_1)) \\
    & \equiv & (\Fun{x}{\Shift{k}{x\,k}}) \, \iota(\pi(M_1)) \\
\end{align*}
\begin{align*}
    & =_\text{IH} & (\Fun{x}{\Shift{k}{x\,k}}) \, M_1 \\
    & =_\ev & (\Fun{x}{\Sh{x}}) \, M_1 \\
    & =_\bind & \Let{x}{M_1}{\, (\Fun{x}{\Sh{x}}) \, x} \\
    & =_\bv & \Let{x}{M_1}{\Sh{x}} \\
    & =_\bind & \Sh{M_1}
  \end{align*}
  \end{itemize}
\end{proof}

\begin{lemma}
  \label{lemma:dol-j}
  Equality \(V \DOL J[M] = (\Fun{x}{V \DOL J[x]}) \DOL M\) holds in \(\Ld\).
\end{lemma}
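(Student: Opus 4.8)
The plan is to collapse both sides to a common term using the Generalised \(=_\bind\) lemma together with a derived \(\beta_\$\)-style reduction, namely \(V \DOL \Shift{k}{N} \redu_\Ld N[V/k]\) for every value \(V\) and term \(N\).

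First I would rename bound variables so that \(x\) is fresh with respect to \(J\) and \(V\), and \(k\) is fresh with respect to \(J\), \(M\), and \(x\). Applying the Generalised \(=_\bind\) lemma inside the context \(V \DOL [\,]\) gives \(V \DOL J[M] =_\Ld V \DOL \Let{x}{M}{J[x]}\), and unfolding the \(\LET\) macro rewrites the right-hand side as \(V \DOL \Shift{k}{(\Fun{x}{k \DOL J[x]}) \DOL M}\).

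Second I would prove the auxiliary reduction: after unfolding the macros, \(V \DOL \Shift{k}{N}\) is literally \(\Dol{\Sh{\Fun{k}{N}}} \, V\); since \(\Fun{k}{N}\) is a value this \(\dolsh\)-contracts to \((\Fun{k}{N}) \, V\), which \(\bv\)-contracts to \(N[V/k]\) because \(V\) is a value. (Incidentally this recovers Materzok's \(\bd\) axiom as a derived reduction.) Taking \(N\) to be \((\Fun{x}{k \DOL J[x]}) \DOL M\) and using \(k \notin \mathrm{fv}(J) \cup \mathrm{fv}(M)\) together with \(x \notin \mathrm{fv}(V)\) to evaluate the substitution, the result is exactly \((\Fun{x}{V \DOL J[x]}) \DOL M\). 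Chaining the two steps yields \(V \DOL J[M] =_\Ld (\Fun{x}{V \DOL J[x]}) \DOL M\).

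All the calculations are routine; the one point that needs genuine care is the freshness bookkeeping in the final substitution — \(x\) must be picked outside \(\mathrm{fv}(V)\) so that \(((\Fun{x}{k \DOL J[x]}) \DOL M)[V/k]\) lands precisely on \((\Fun{x}{V \DOL J[x]}) \DOL M\), and \(k\) must avoid \(\mathrm{fv}(J) \cup \mathrm{fv}(M)\) so that \(J\) and \(M\) are left untouched. A reader who prefers to avoid the \(\LET\) macro can instead case-split on \(M\): if \(M\) is a nonvalue \(P\) then \(J[P]\) \(\bind\)-contracts directly, while if \(M\) is a value \(W\) the right-hand side \((\Fun{x}{V \DOL J[x]}) \DOL W\) instead \(\dolv\)- and then \(\bv\)-reduces back to \(V \DOL J[W]\); but the macro-based argument above is shorter and uniform, so that is the one I would write.
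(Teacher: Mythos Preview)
Your proof is correct and follows essentially the same route as the paper: apply the Generalised \(=_\bind\) lemma to rewrite \(J[M]\) as \(\Let{x}{M}{J[x]}\), unfold the \(\LET\) macro, then use \(\dolsh\) followed by \(\bv\) to collapse \(V \DOL \Shift{k}{\ldots}\). The paper simply does the \(\dolsh\) and \(\bv\) steps inline rather than packaging them as an auxiliary \(\bd\)-style reduction, but the underlying chain of rewrites is identical.
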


\begin{proof}
  By a chain of rewrites.
  \begin{align*}
    & & V \DOL J[M] \\
    & =_\bind & V \DOL \Shift{k}{(\Fun{y}{k \DOL J[y]}) \DOL M} \\
    & =_\dolsh & (\Fun{k}{(\Fun{y}{k \DOL J[y]}) \DOL M}) \, V \\
    & =_\bv & (\Fun{y}{V \DOL J[y]}) \DOL M
  \end{align*}
\end{proof}

\begin{lemma}
  \label{lemma:dol-e}
  Equality \(V \DOL K[M] =\Ld (\Fun{x}{V \DOL K[x]}) \DOL M\) holds if \(x \fresh V \text{and} \, K\).
\end{lemma}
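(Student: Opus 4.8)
The plan is to proceed by structural induction on the pure context \(K\), using Lemma~\ref{lemma:dol-j} — the single-bindable-context case — as the workhorse of the inductive step. Recall that \(K \grdef [\,] \ALT J[K]\), so a pure context is just a finite tower of bindable contexts, and the induction is effectively on the height of that tower.

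For the base case \(K = [\,]\) the statement reads \(V \DOL M =_\Ld (\Fun{x}{V \DOL x}) \DOL M\), and Lemma~\ref{lemma:dol-j} is of no help since \([\,]\) is not bindable. Instead I would first establish the auxiliary equality \(\Fun{x}{V \DOL x} =_\Ld V\) (for \(x \fresh V\)): unfolding the \(\DOL\) macro gives \(\Fun{x}{\Dol{x}\,V}\), and then \(\Dol{x} \to_\dolv \Fun{y}{y\,x}\) (legal because a variable is a value), \((\Fun{y}{y\,x})\,V \to_\bv V\,x\) (legal because \(V\) is a value), and \(\Fun{x}{V\,x} \to_\ev V\) (legal because \(x \fresh V\)), so in fact \(\Fun{x}{V \DOL x} \redu_\Ld V\). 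Since \(\redu_\Ld\) is closed under arbitrary contexts, plugging this into the context \([\,] \DOL M\) yields \((\Fun{x}{V \DOL x}) \DOL M \redu_\Ld V \DOL M\), closing the base case.

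For the inductive step, decompose \(K = J[K']\) with \(J\) bindable and \(K'\) pure; from \(x \fresh K\) we get \(x \fresh J\) and \(x \fresh K'\). Picking a variable \(y\) fresh for \(V\), \(J\), \(K'\), \(M\) and distinct from \(x\), I would chain three equalities: first \(V \DOL K[M] \equiv V \DOL J[K'[M]] =_\Ld (\Fun{y}{V \DOL J[y]}) \DOL K'[M]\) by Lemma~\ref{lemma:dol-j}; then apply the induction hypothesis to \(K'\) with the value \(\Fun{y}{V \DOL J[y]}\) — whose freshness condition against \(x\) holds because \(x \fresh V\), \(x \fresh J\) and \(x \ne y\) — to obtain \((\Fun{x}{(\Fun{y}{V \DOL J[y]}) \DOL K'[x]}) \DOL M\); finally rewrite the body \((\Fun{y}{V \DOL J[y]}) \DOL K'[x]\) back to \(V \DOL J[K'[x]]\) by Lemma~\ref{lemma:dol-j} used right-to-left, which is legitimate because rewrites in \(\Ld\) may take place under binders and may capture the surrounding bound \(x\). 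The result is \((\Fun{x}{V \DOL J[K'[x]]}) \DOL M \equiv (\Fun{x}{V \DOL K[x]}) \DOL M\), as required.

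I expect the only delicate part to be the freshness bookkeeping: one must keep the auxiliary variable \(y\) produced by the two uses of Lemma~\ref{lemma:dol-j} distinct from \(x\) and away from \(V\), \(K'\) and \(M\), and must check that the packaged value \(\Fun{y}{V \DOL J[y]}\) really does meet the side condition needed to invoke the induction hypothesis. Everything else is mechanical, the one genuinely new computation being the three-step \(\dolv/\bv/\ev\) reduction \(\Fun{x}{V \DOL x} \redu_\Ld V\) in the base case.
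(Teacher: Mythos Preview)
Your proposal is correct and follows essentially the same route as the paper: induction on \(K\), with the base case handled by the three-step chain \(\dolv/\bv/\ev\) (the paper writes the same chain as equalities in the opposite direction) and the inductive step done by two applications of Lemma~\ref{lemma:dol-j} sandwiching the induction hypothesis. Your freshness bookkeeping is more explicit than the paper's, but the argument is the same.
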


\begin{proof}
  By induction on \(K\).
  In the context of this proof,
  we use the rules of \(\Ld\) (Fig.~\ref{fig:calc-Ld}).
  \begin{itemize}
    \item Base case.
    \(
      V \DOL M
      =_\ev (\Fun{x}{V \, x}) \DOL M
      =_\bv (\Fun{x}{(\Fun{y}{y \, x}) \, V}) \DOL M
      =_\dolv (\Fun{x}{V \DOL x}) \DOL M
    \).
    \item Inductive case. Use of an inductive hypothesis is marked by "IH".
    \begin{align*}
      & & V \DOL J[K[M]] \\
      & =_\text{Lemma~\ref{lemma:dol-j}} & (\Fun{x}{V \DOL J[x]}) \DOL K[M] \\
      & =_\text{IH} & (\Fun{y}{(\Fun{x}{V \DOL J[x]}) \DOL K[y]}) \DOL M \\
      & =_\text{Lemma~\ref{lemma:dol-j}} & (\Fun{y}{V \DOL J[K[y]]}) \DOL M \\
    \end{align*}
  \end{itemize}
\end{proof}

\begin{theorem}[Soundness of \(\iota\)]
  The following propositions hold:
  \begin{itemize}
    \item For all \(e, e' \in \ld\), if \(e =_\ld e'\), then \(\iota(e) =_\Ld \iota(e')\).
    \item For all \(M, M' \in \Ld\), if \(\pi(M) =_\ld \pi(M')\),
    then \(M =_\Ld M'\).
  \end{itemize}
\end{theorem}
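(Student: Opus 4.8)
The plan is to prove the first bullet by induction on the derivation of \(e =_\ld e'\), and then to deduce the second bullet from the first together with the right inverse property established in the preceding lemma.

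\emph{First bullet.} The relation \(=_\ld\) is generated, as a congruence, by the six equational axioms of Figure~\ref{fig:calc-ld}, namely \(\beta_v\), \(\eta_v\), \(\$_v\), \(\$_E\), \(\beta_\$\) and \(\eta_\$\). Since \(\iota\) is defined by recursion on the term structure, it commutes with plugging into contexts, and since \(=_\Ld\) is a congruence, the reflexivity, symmetry, transitivity and compatibility cases of the induction are immediate. It therefore suffices to check that every instance of each of the six axioms is sent by \(\iota\) into \(=_\Ld\). For this I first record two facts, each by a routine structural induction: (i) \(\iota\) maps \(\ld\)-values to \(\Ld\)-values and commutes with value substitution, i.e.\ \(\iota(e[v/x]) \equiv \iota(e)[\iota(v)/x]\); and (ii) \(\iota\) maps an \(\ld\)-pure context \(E\) to a pure context \(K\) of \(\Ld\), the only point worth noting being that the frame \(E \DOL e\) translates to \(\Dol{\iota(e)}\,[\,]\), which is a legal \(V\,[\,]\) frame because \(\Dol{M}\) is a value in \(\Ld\).

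Granting (i)--(ii), the six axioms are discharged by short rewrite chains in \(\Ld\). The axioms \(\beta_v\) and \(\eta_v\) become, after applying (i), literally the \(\bv\) and \(\ev\) rules. For \(\$_v\): \(\iota(v \DOL v') \equiv \Dol{\iota(v')}\,\iota(v) \to_\dolv (\Fun{x}{x\,\iota(v')})\,\iota(v) \to_\bv \iota(v)\,\iota(v') \equiv \iota(v\,v')\). For \(\beta_\$\): \(\iota(v \DOL \Shift{x}{e}) \equiv \Dol{\Sh{\Fun{x}{\iota(e)}}}\,\iota(v) \to_\dolsh (\Fun{x}{\iota(e)})\,\iota(v) \to_\bv \iota(e)[\iota(v)/x] \equiv \iota(e[v/x])\). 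For \(\eta_\$\): \(\iota(\Shift{x}{x \DOL e}) \equiv \Sh{\Fun{x}{\Dol{\iota(e)}\,x}} \to_\ev \Sh{\Dol{\iota(e)}} \to_\shdol \iota(e)\). Finally, for \(\$_E\), unfolding the macro \(M \DOL N \deq \Dol{N}\,M\) on both sides turns the claim into \(\iota(v) \DOL \hat\iota(E)[\iota(e)] =_\Ld (\Fun{y}{\iota(v) \DOL \hat\iota(E)[y]}) \DOL \iota(e)\), where \(\hat\iota(E)\) is the pure context supplied by (ii); this is exactly an instance of Lemma~\ref{lemma:dol-e} for the value \(\iota(v)\) and the context \(\hat\iota(E)\), the side condition \(y \fresh \iota(v), \hat\iota(E)\) being inherited from freshness of \(y\) for \(v\) and \(E\).

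\emph{Second bullet.} Assume \(\pi(M) =_\ld \pi(M')\). By the first bullet, \(\iota(\pi(M)) =_\Ld \iota(\pi(M'))\); by the right inverse property of the embedding, \(M =_\Ld \iota(\pi(M))\) and \(\iota(\pi(M')) =_\Ld M'\); transitivity of \(=_\Ld\) then gives \(M =_\Ld M'\).

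The main obstacle is the \(\$_E\) case: one must line up the two notions of pure context so that the \(\iota\)-image of an \(\ld\)-pure context is a genuine \(K\)-context of \(\Ld\) --- in particular noticing that \(\Dol{M}\) counts as a value --- after which the already-proven Lemma~\ref{lemma:dol-e} does the actual work. Everything else is bookkeeping around the macro-unfoldings of \(M \DOL N\) and \(\Shift{x}{M}\).
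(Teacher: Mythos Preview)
Your proof is correct and follows essentially the same route as the paper: verify each of the six \(\ld\) axioms in \(\Ld\) (with the \(\$_E\) case delegated to Lemma~\ref{lemma:dol-e}), then derive the second bullet from the first via the right inverse property. You are simply more explicit than the paper about the auxiliary facts (i) and (ii)---that \(\iota\) preserves values, commutes with substitution, and sends pure contexts \(E\) to pure contexts \(K\)---which the paper leaves implicit when it invokes Lemma~\ref{lemma:dol-e}.
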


\begin{proof}
  To prove the first proposition it suffices to show that all axioms of \(\ld\) are also valid equalities in \(\Ld\).
  \begin{itemize}
    \item Axiom \(\bv\) follows immediately by rule \(\bv\).
    \item Axiom \(\ev\) follows immediately by rule \(\ev\).
    \item Axiom \(\dolv\).
    \(
      V \DOL W
      \rightarrow_\dolv (\Fun{k}{k \, W}) \, V
      \rightarrow_\bv V \, W
      \).
    \item Axiom \(\$_E\) follows by Lemma~\ref{lemma:dol-e}.
    \item Axiom \(\beta_\$\).
    \(
      V \DOL \Shift{x}{M}
      \rightarrow_\dolsh (\Fun{x}{M}) \, V
      \rightarrow_\bv M[V/x]
    \).
    \item Axiom \(\eta_\$\).
    \(
      \Shift{x}{x \DOL M}
      \rightarrow_\ev \Sh{\Dol{M}}
      \rightarrow_\shdol M
    \).
  \end{itemize}
  To prove the second proposition, apply the first proposition on assumption \(\pi(M) = \pi(M')\) to get \(\iota(\pi(M)) = \iota(\pi(M'))\) and rewrite by the right inverse property of \(\iota\) to infer \(M = M'\).
\end{proof}

Having the soundness of \(\iota\) secured, we move on to completeness. In order to prove it, we show that CPS translation of Materzok is equivalent to ours. Syntax and small-step operational semantics of \(\lambda\)-calculus are provided for reference in Figure~\ref{fig:calc-lambda}.

\begin{figure}[tp!]
  \centering
  \begin{gather*}
    \begin{matrix*}[l]
      \text{term} & M,N & \grdef &
        x \ALT \Fun{x}{M} \ALT M \, N \\
      \text{reduction context} & C & \grdef &
        [\,] \ALT \Fun{x}{C} \ALT C \, M \ALT M \, C \\
    \end{matrix*} \\
    \begin{matrix*}[l]
      (\Fun{x}{M}) \, N & \beta & M[N/x] & \\
      \Fun{x}{M \, x}   & \eta  & M & \IF x \fresh M \\
      M & \mapsto & N & \IF M \beta N \text{ or } M \eta N \\
      C[M] & \to & C[N] & \IF M \mapsto N \\
      M & \redu & N & \IF M \to^n N, n \geq 0 \\
      M & = & N & \IF M (\to\cup\leftarrow)^n N, n \geq 0 \\
    \end{matrix*}
  \end{gather*}
  \caption{Lambda calculus \(\lambda\), the image of CPS translation.}
  \label{fig:calc-lambda}
\end{figure}

\begin{lemma}
  \label{lemma:app-asterisk}
  Equality \((M \, N)^* = \Fun{k}{M^* \, (\Fun{x}{N^* \, (\Fun{y}{x \, y \, k})})}\) holds in \(\lambda\).
\end{lemma}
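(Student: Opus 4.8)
The statement is an equality in $\lambda$ quantified over all $\Ld$-terms $M$ and $N$, so the plan is a case analysis that mirrors the case structure of the definition of $*$. Recall that $*$ dispatches on whether its argument is a value, an application $(V\,W)$ of two values, a thaw $\Sh{V}$ of a value, or a bindable context filled with a nonvalue $J[P]$. An application $M\,N$ is of the second kind exactly when both $M$ and $N$ are values; otherwise it is $J[P]$ for a unique bindable context $J$ and nonvalue $P$ --- namely $J \deq [\,]\,N$ with $P \deq M$ when $M$ is a nonvalue, and $J \deq M\,[\,]$ with $P \deq N$ when $M$ is a value but $N$ is not. Accordingly I would split into four cases according to whether $M$ and $N$ are values.

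In each case I would unfold the relevant clause of $*$ and the macro $\Letk{x}{A}{B}$ (which is $\Fun{k}{A\,(\Fun{x}{B\,k})}$), together with the auxiliary identities $V^* \equiv \Fun{k}{k\,V^\dagger}$, $(V\,x)^* \equiv V^\dagger\,x$, and $(x\,y)^* \equiv x\,y$, which hold because variables are values. When $M \equiv V$ and $N \equiv W$, the left-hand side is $V^\dagger\,W^\dagger$ while the right-hand side, after feeding $V^\dagger$ and then $W^\dagger$ through the two administrative continuations, $\beta$-reduces to $\Fun{k}{V^\dagger\,W^\dagger\,k}$, which $\eta$-reduces to $V^\dagger\,W^\dagger$. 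When $M \equiv V$ is a value and $N \equiv Q$ a nonvalue, both sides reduce to $\Fun{k}{Q^*\,(\Fun{x}{V^\dagger\,x\,k})}$. When $M \equiv P$ is a nonvalue and $N \equiv W$ a value, both sides reduce to $\Fun{k}{P^*\,(\Fun{x}{x\,W^\dagger\,k})}$. When both $M \equiv P$ and $N \equiv Q$ are nonvalues, the left-hand side unfolds to $\Fun{k}{P^*\,(\Fun{x}{(\Fun{k'}{Q^*\,(\Fun{y}{x\,y\,k'})})\,k})}$, whose inner $\beta$-redex collapses to give precisely the right-hand side $\Fun{k}{P^*\,(\Fun{x}{Q^*\,(\Fun{y}{x\,y\,k})})}$. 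In every case both sides reach a common $\lambda$-term (up to renaming the continuation variable), so they are equal in $\lambda$.

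I do not expect a genuine obstacle: the work is routine firing of administrative $\beta$-redexes. The one point that requires attention is that the case split must match the intended, well-defined reading of $*$ --- an application is never simultaneously of the form $(V\,W)$ and of the form $J[P]$, which holds because $J[P]$ forces $P$ to be a nonvalue whereas in $(V\,W)$ both components are values, and the decomposition $M\,N = J[P]$ in the remaining cases is likewise unique. A secondary bookkeeping point is $\alpha$-renaming the bound continuation variable so that the "common term" is literally the same expression on both branches.
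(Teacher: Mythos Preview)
Your proposal is correct and follows essentially the same approach as the paper: a four-way case split on whether each of $M$ and $N$ is a value or a nonvalue, unfolding the relevant clause of $*$ (including the $\mathrm{Letk}$ macro) and then discharging each case with a few administrative $\beta$- (and in the value/value case one $\eta$-) steps. The only cosmetic difference is direction --- the paper tends to $\beta$-expand from the left-hand side toward the right, whereas you reduce the right-hand side toward the left --- but the common terms reached are the same.
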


\begin{proof}
  In the appendix.
\end{proof}

\begin{lemma}[\(\beta\eta\)-equivalence of CPS transforms]
  For all \(e \in \ld\), \(\iota(e)^* =_\lambda \cps{e}\).
\end{lemma}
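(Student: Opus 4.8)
The plan is to prove the statement by structural induction on $e \in \ld$, with one case per production of the grammar of Figure~\ref{fig:calc-ld}: $e \equiv x$, $e \equiv \Fun{x}{e_1}$, $e \equiv e_1 \, e_2$, $e \equiv \Shift{x}{e_1}$ and $e \equiv e_1 \DOL e_2$. In each case I would first unfold the definition of $\iota$ (Figure~\ref{fig:coercion}), then unfold the CPS translation $*$ of $\Ld$ (Figure~\ref{fig:asterisk}), and finally rewrite in $\lambda$ --- using the induction hypotheses and, in the two cases that involve an application, Lemma~\ref{lemma:app-asterisk} --- until Materzok's $\cps{e}$ (Figure~\ref{fig:cps-ld}) is reached. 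I would use throughout that $=_\lambda$ is a congruence (the reduction contexts of $\lambda$ include $\Fun{x}{C}$, $C \, M$ and $M \, C$), so that an induction hypothesis may be applied underneath a $\lambda$-binder and in operand position.

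It is convenient to establish, alongside, the auxiliary fact that $\iota(v)^\dagger =_\lambda \cpsv{v}$ for every $\ld$-value $v$; note this is meaningful because $\iota$ maps $\ld$-values to $\Ld$-values. This fact is immediate: $\iota(x)^\dagger \equiv x \equiv \cpsv{x}$, while $\iota(\Fun{x}{e_1})^\dagger \equiv (\Fun{x}{\iota(e_1)})^\dagger \equiv \Fun{x}{\iota(e_1)^*} =_\text{IH} \Fun{x}{\cps{e_1}} \equiv \cpsv{\Fun{x}{e_1}}$. The two value cases of the main statement then fall out, since $\iota(v)$ is a $\Ld$-value $V$ and so $\iota(v)^* \equiv \Fun{k}{k \, \iota(v)^\dagger} =_\lambda \Fun{k}{k \, \cpsv{v}} \equiv \cps{v}$. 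The shift case has the same shape: $\iota(\Shift{x}{e_1}) \equiv \Sh{\Fun{x}{\iota(e_1)}}$, and because $\Fun{x}{\iota(e_1)}$ is a value this falls under the clause $\Sh{V}^* \deq V^\dagger$, so $\iota(\Shift{x}{e_1})^* \equiv \Fun{x}{\iota(e_1)^*} =_\text{IH} \Fun{x}{\cps{e_1}} \equiv \cps{\Shift{x}{e_1}}$.

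For application, $\iota(e_1 \, e_2) \equiv \iota(e_1) \, \iota(e_2)$, so Lemma~\ref{lemma:app-asterisk} gives $\iota(e_1 \, e_2)^* =_\lambda \Fun{k}{\iota(e_1)^* \, (\Fun{x}{\iota(e_2)^* \, (\Fun{y}{x \, y \, k})})}$, and the induction hypotheses on $e_1$ and $e_2$ turn this into exactly $\cps{e_1 \, e_2}$. The dollar case is the only one demanding real rewriting: here $\iota(e_1 \DOL e_2) \equiv \Dol{\iota(e_2)} \, \iota(e_1)$, so Lemma~\ref{lemma:app-asterisk} rewrites $\iota(e_1 \DOL e_2)^*$ to $\Fun{k}{\Dol{\iota(e_2)}^* \, (\Fun{x}{\iota(e_1)^* \, (\Fun{y}{x \, y \, k})})}$; unfolding $\Dol{\iota(e_2)}^* \equiv \Fun{k_1}{k_1 \, \iota(e_2)^*}$ and firing the two evident $\beta$-steps collapses this to $\Fun{k}{\iota(e_1)^* \, (\Fun{y}{\iota(e_2)^* \, y \, k})}$, after which the induction hypotheses yield $\Fun{k}{\cps{e_1} \, (\Fun{y}{\cps{e_2} \, y \, k})}$, which is $\cps{e_1 \DOL e_2}$ up to renaming the bound continuation variable.

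I expect the dollar case to be the main obstacle, though only a mild one: the care needed is in choosing the freshly introduced continuation variables so that the two $\beta$-steps do not capture, and in observing that Lemma~\ref{lemma:app-asterisk} is precisely what spares us a case split on whether the operand of an application is a value or a nonvalue --- the one point at which the definition of $*$ is not uniform. The remaining cases are routine unfolding.
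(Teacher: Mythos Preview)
Your proposal is correct and follows essentially the same route as the paper: structural induction on $e$, with Lemma~\ref{lemma:app-asterisk} doing the work in both the application and dollar cases, and two $\beta$-steps in the dollar case collapsing $\Dol{\iota(e_2)}^*$ against its continuation. Your explicit auxiliary claim $\iota(v)^\dagger =_\lambda \cpsv{v}$ is a minor presentational addition (the paper handles the value cases inline), and your unfolding $\Dol{\iota(e_2)}^* \equiv \Fun{k_1}{k_1 \, \iota(e_2)^*}$ merely fuses the paper's use of $\Dol{M}^\dagger \equiv M^*$ into the first $\beta$-step; the arguments are otherwise identical.
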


\begin{proof}
  By structural induction.
  \begin{itemize}
    \item Case \(e \equiv x\).
    \(
      \iota(x)^*
      \equiv x^*
      \equiv \Fun{k}{k \, x^\dagger}
      \equiv \Fun{k}{k \, x}
      \equiv \cps{x}
    \).
    \item Case \(e \equiv \Fun{x}{e_1}\).
    \begin{align*}
      \iota(\Fun{x}{e_1})^*
      & \equiv (\Fun{x}{\iota(e_1)})^*
      \equiv \Fun{k}{k \, (\Fun{x}{\iota(e_1)})^\dagger}
      \equiv \Fun{k}{k \, \Fun{x}{\iota(e_1)^*}} \\
      & =_\text{IH} \Fun{k}{k \, \Fun{x}{\cps{e_1}}}
      \equiv \Fun{k}{k \, \cpsv{\Fun{x}{e_1}}}
      \equiv \cps{\Fun{x}{e_1}}
    \end{align*}
    \item Case \(e \equiv e_1 \, e_2\).
    \begin{align*}
      \iota(e_1 \, e_2)^*
      & \equiv (\iota(e_1) \, \iota(e_2))^*
      =_\text{Lemma~\ref{lemma:app-asterisk}} \Fun{k}{\iota(e_1)^* \, (\Fun{x}{\iota(e_2)^* \, (\Fun{y}{x \, y \, k})})} \\
      & =_\text{IH} \Fun{k}{\cps{e_1} \, (\Fun{x}{\cps{e_2} \, (\Fun{y}{x \, y \, k})})}
      \equiv \cps{e_1 \, e_2}
    \end{align*}
    \item Case \(e \equiv \Shift{x}{e_1}\).
    \begin{align*}
      & & \iota(\Shift{x}{e_1})^* \\
      & \equiv & \Fun{x}{\iota(e_1)^*} \\
      & =_\text{IH} & \Fun{x}{\cps{e_1}} \\
      & \equiv & \cps{\Shift{x}{e_1}} \\
    \end{align*}
    \item Case \(e \equiv e_1 \DOL e_2\).
    \begin{align*}
      & & \iota(e_1 \DOL e_2)^* \\
      & \equiv & (\iota(e_1) \DOL \iota(e_2))^* \\
      & \equiv & (\Dol{\iota(e_2)} \, \iota(e_1))^* \\
      & =_\text{Lemma~\ref{lemma:app-asterisk}} & \Fun{k}{\Dol{\iota(e_2)}^* \, (\Fun{x}{\iota(e_1)^* \, (\Fun{y}{x \, y \, k})})} \\
      & =_\beta & \Fun{k}{(\Fun{x}{\iota(e_1)^* \, (\Fun{y}{x \, y \, k})}) \, \Dol{\iota(e_2)}^\dagger} \\
      \end{align*}
      \begin{align*}
      & =_\beta & \Letk{y}{\iota(e_1)^*}{\Dol{\iota(e_2)}^\dagger \, y} \\
      & \equiv & \Letk{y}{\iota(e_1)^*}{\iota(e_2)^* \, y} \\
      & =_\text{IH} & \Letk{y}{\cps{e_1}}{\cps{e_2} \, y} \\
      & \equiv & \cps{e_1 \DOL e_2} \\
    \end{align*}
  \end{itemize}
\end{proof}

We strike the final nail in the coffin of doubt with the completeness theorem.

\begin{theorem}[Completeness of \(\iota\)]
  The following propositions hold:
  \begin{itemize}
    \item For all \(e, e' \in \ld\), if \(\iota(e) =_\Ld
      \iota(e')\), then \(e =_\ld e'\).
    \item For all \(M, M' \in \Ld\), if \(M =_\Ld M'\),
      then \(\pi(M) =_\ld \pi(M')\).
  \end{itemize}
\end{theorem}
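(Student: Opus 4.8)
The plan is to route the argument through the two continuation-passing style translations, letting Materzok's theorem that $\cps{\cdot}$ is sound \emph{and complete}~\cite{DBLP:conf/csl/Materzok13} carry the weight. I would first prove the first proposition via CPS, and then obtain the second from it using the invertibility of $\iota$.

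For the first proposition, assume $\iota(e) =_\Ld \iota(e')$. Since $*$ is monotone — the same fact invoked in the confluence proof above — and a monotone map between prosets preserves the equivalence generated by the preorder, $*$ carries $=_\Ld$ to $=_\lambda$; hence $\iota(e)^* =_\lambda \iota(e')^*$. The $\beta\eta$-equivalence of CPS transforms then gives $\cps{e} =_\lambda \iota(e)^* =_\lambda \iota(e')^* =_\lambda \cps{e'}$, and the completeness half of Materzok's CPS theorem concludes $e =_\ld e'$.

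For the second proposition, assume $M =_\Ld M'$. By the right inverse property of $\iota$ we have $\iota(\pi(M)) =_\Ld M$ and $\iota(\pi(M')) =_\Ld M'$, so $\iota(\pi(M)) =_\Ld \iota(\pi(M'))$ by transitivity and symmetry of $=_\Ld$. Applying the first proposition with $\pi(M)$ and $\pi(M')$ in place of $e$ and $e'$ yields $\pi(M) =_\ld \pi(M')$, as required.

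I do not anticipate a real obstacle; the argument is largely plumbing between the two CPS translations. Two points warrant attention. First, this is the one place in the section where we genuinely need the \emph{completeness} direction of Materzok's CPS theorem — that every $\lambda$-equation between CPS images reflects back into $\ld$ — rather than merely its soundness direction or a direct calculation, so the result rests squarely on that established fact. Second, the second proposition must be bootstrapped from the first via invertibility rather than attacked directly by induction on $\to_\Ld$, which would otherwise force a rule-by-rule verification in $\ld$, notably relying on Materzok's $\$_E$ axiom to discharge the $\bind$ rule. Together with Soundness of $\iota$ and the invertibility lemma, these two implications complete the equational correspondence between $\ld$ and $\Ld$.
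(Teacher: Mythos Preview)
Your proof is correct and follows essentially the same route as the paper: for the first proposition you push the $\Ld$-equation through $*$ (via monotonicity), identify the result with $\cps{\cdot}$ via the $\beta\eta$-equivalence lemma, and invoke Materzok's completeness; for the second you use the right inverse property $\iota(\pi(M)) =_\Ld M$ to reduce to the first. Your remark that monotonicity of $*$ on $\redu$ lifts to preservation of the induced equivalence $=$ is exactly the small step the paper leaves implicit.
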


\begin{proof}
  To prove the first proposition, apply monotonicity of \(*\) (Theorem~\ref{thm:mono-asterisk}) on assumption \(\iota(e) = \iota(e')\) to get \(\iota(e)^* = \iota(e')^*\). Rewrite using equivalence of
  CPS translations to get \(\cps{e} = \cps{e'}\) and finally apply completeness of \(\cps{\cdot}\) to conclude that \(e = e'\).
  To prove the second proposition, rewrite assumption \(M = M'\) with left inverse property of \(\iota\) to get \(\iota(\pi(M)) = \iota(\pi(M))\). Apply the first proposition to finally arrive at \(\pi(M) = \pi(M')\).
\end{proof}

It follows that calculi \(\Ld\) and \(\ld\) correspond equationally via macro-definitions, hence our \(\Ld\) is truly a calculus of \texttt{shift0}-style delimited control.

\section{Reflection of lambda calculus into \(\Ld\)}
\label{sec:reflection}

We move on to the main result of this paper: the relationship between \(\Ld\) and \(\lambda\). It will be shown that \(\lambda\) reflects into \(\Ld\).
Syntax and small-step operational semantics of \(\lambda\)-calculus are provided for reference in Figure~\ref{fig:calc-lambda}.
Following in the footsteps of Sabry and Wadler~\cite{DBLP:journals/toplas/SabryW97} we intended to define a backwards, \emph{direct style} translation \(\#\) from the range of \(*\), just like in~\cite{DBLP:conf/fscd/BiernackiPS20} and~\cite{DBLP:conf/ppdp/BiernackiPS21}.
We took the range of the old CPS and closed it under reduction, following the recipe by Sabry and Felleisen~\cite{DBLP:journals/lisp/SabryF93}.
What we've got was the entire set of lambda terms. A question arised: was there a better CPS that would explicitly hit every possible lambda term? The answer is positive.
In other words, for \emph{every} term \(M \in \lambda\), there exists a term \(M^\# \in \Ld\), such that \(M^{\#*} \equiv M\). Every lambda term is in Continuation-Passing Style, it seems!

Translation \(\#\) has one oddity: special treatment of \(\Fun{x}{x \, N}\) when \(x \fresh N\). One may notice that it is the shape of a value translated to CPS. This is not essential for the right inverse theorem to hold but it is a necessary adjustment for the left post-inverse theorem.

When it comes to discovery of those translations, it was mostly trial and error, working within confines of a reflection, fine-tuning simultaneously the translations and the calculus. Just as CPS translations come in a pair: one on value, one on general terms, both had to be inverted and therefore we have in fact two different embeddings of lambda terms: one embeds into values of \(\Ld\) and the other into general terms of \(\Ld\).
They need each other to compute DS translation just like the two CPS translations interleave to bring a CPS term.

\begin{figure}[tp!]
  \label{fig:hash}
  \[ \# : \lambda \to \Ld \]
  \begin{minipage}{.5\textwidth}
    \begin{equation*}
      \begin{matrix*}[l]
        x^\# & \deq \Sh{x} & \\
        (\Fun{x}{x \, N})^\# & \deq N^\natural & \IF x \fresh N \\
        (\Fun{x}{M})^\# & \deq \Shift{x}{M^\#} & \OTH \\
        (M \, N)^\# & \deq M^\natural \, N^\natural & \\
      \end{matrix*}
    \end{equation*}
  \end{minipage}
  \begin{minipage}{.4\textwidth}
    \begin{equation*}
      \begin{matrix*}[l]
        x^\natural & \deq x & \\
        (\Fun{x}{M})^\natural & \deq \Fun{x}{M^\#} & \\
        (M N)^\natural & \deq \Dol{M^\natural \, N^\natural} & \\
        & \\
      \end{matrix*}
    \end{equation*}
  \end{minipage}
  \caption{Conversion from \(\lambda\) back to direct style \(\Ld\).}
\end{figure}

\begin{example}
  Let's take the CPS term from the previous example and reflect it into $\Ld$.
  \[(\Fun{k_1}{k_1 \, (\Fun{x\,k_2}{k_2 \, x})})^\# \equiv
    (\Fun{x\,k_2}{k_2 \, x})^\natural \equiv
    \Fun{x}{(\Fun{k_2}{k_2 \, x})^\#} \equiv
    \Fun{x}{x^\natural} \equiv
    \Fun{x}{x}.
  \]
\end{example}

\begin{theorem}[Right inverse of \(*\)]
  For all \(M \in \lambda\), \(M^{\#*} \equiv M\) and \(M^{\natural\dagger} \equiv M\).
  \label{thm:right-inverse}
\end{theorem}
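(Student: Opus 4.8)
The plan is to prove the two statements simultaneously by mutual structural induction on $M \in \lambda$. The two claims $M^{\#*} \equiv M$ and $M^{\natural\dagger} \equiv M$ are mutually recursive exactly in the way that $*$/$\dagger$ and $\#$/$\natural$ are, so a single induction on the structure of $M$ carrying both invariants is the natural setup. Note that we want \emph{syntactic} identity ($\equiv$, up to renaming bound variables), not mere $\Ld$- or $\lambda$-equality, so every step must be an unfolding of definitions followed by $\equiv$-reasoning, never a rewrite.

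First I would handle the $\natural$/$\dagger$ statement, since it is the cleaner of the two and has no special cases. For $M \equiv x$: $x^{\natural\dagger} \equiv x^\dagger \equiv x$. For $M \equiv \Fun{x}{M_1}$: $(\Fun{x}{M_1})^{\natural\dagger} \equiv (\Fun{x}{M_1^\#})^\dagger \equiv \Fun{x}{M_1^{\#*}} \equiv_{\text{IH}} \Fun{x}{M_1}$, using the $\#*$ half of the induction hypothesis on the strictly smaller term $M_1$. For $M \equiv M_1 \, N_1$: $(M_1 \, N_1)^{\natural\dagger} \equiv \Dol{M_1^\natural \, N_1^\natural}^\dagger \equiv (M_1^\natural \, N_1^\natural)^*$; now I unfold $(V \, W)^* \deq V^\dagger \, W^\dagger$ — but one must check the side condition that $M_1^\natural$ and $N_1^\natural$ are \emph{values} of $\Ld$, which follows because $(\cdot)^\natural$ always produces a variable, a $\lambda$-abstraction, or a $\Dol{\cdot}$, all of which are values; so this equals $M_1^{\natural\dagger} \, N_1^{\natural\dagger} \equiv_{\text{IH}} M_1 \, N_1$.

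Then I would do the $\#$/$*$ statement, which has the three cases of the definition of $\#$. For $M \equiv x$: $x^{\#*} \equiv \Sh{x}^* \equiv x^\dagger \equiv x$, using the clause $\Sh{V}^* \deq V^\dagger$ (valid since $x$ is a value). For the general abstraction $M \equiv \Fun{x}{M_1}$ with $M_1$ \emph{not} of the form $x \, N$ with $x \fresh N$: $(\Fun{x}{M_1})^{\#} \equiv \Shift{x}{M_1^\#} \equiv \Sh{\Fun{x}{M_1^\#}}$, and then $\Sh{\Fun{x}{M_1^\#}}^* \equiv (\Fun{x}{M_1^\#})^\dagger \equiv \Fun{x}{M_1^{\#*}} \equiv_{\text{IH}} \Fun{x}{M_1}$ — here I should note the side condition for the $\Sh{V}^*$ clause holds because $\Fun{x}{M_1^\#}$ is a value. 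For application $M \equiv M_1 \, N_1$: $(M_1 \, N_1)^{\#} \equiv M_1^\natural \, N_1^\natural$, then $(M_1^\natural \, N_1^\natural)^* \equiv M_1^{\natural\dagger} \, N_1^{\natural\dagger} \equiv_{\text{IH}} M_1 \, N_1$, reusing the $\natural\dagger$ half of the hypothesis (and the same ``$\natural$ produces a value'' observation to license $(V\,W)^*$).

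The main obstacle is the special case $M \equiv \Fun{x}{x \, N}$ with $x \fresh N$, where $\#$ takes the shortcut $(\Fun{x}{x \, N})^{\#} \deq N^\natural$. Here I must show $N^{\natural *} \equiv \Fun{x}{x \, N}$. This is where the ``$\natural$ produces a value'' fact becomes essential in the other direction: $N^\natural$ is some $\Ld$-value $V$, and by the $\natural\dagger$ part of the induction $V^\dagger \equiv N^{\natural\dagger} \equiv N$, so $V^* \deq \Fun{k}{k \, V^\dagger} \equiv \Fun{k}{k \, N}$, which is literally $\Fun{x}{x \, N}$ up to renaming the bound variable (and $x \fresh N$ guarantees no capture). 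I would make sure the freshness bookkeeping is explicit here, since this is exactly the case the paper flags as the delicate one; but there is no genuine circularity, because $N$ is a proper subterm of $\Fun{x}{x\,N}$, so the induction hypothesis applies to it. Everything else is routine unfolding.
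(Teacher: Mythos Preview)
Your proposal is correct and follows essentially the same route as the paper: a simultaneous structural induction on $M$ establishing both $M^{\#*}\equiv M$ and $M^{\natural\dagger}\equiv M$, with the same case split (including the special clause $\Fun{x}{x\,N}$ with $x\fresh N$) and the same unfolding steps. You are merely more explicit than the paper about the auxiliary observation that $(\cdot)^\natural$ always yields a $\Ld$-value (licensing the $(V\,W)^*$ clause) and about the freshness bookkeeping in the special case; the paper uses both facts silently.
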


\begin{proof}
  Structural induction on \(M\).
  \begin{itemize}
    \item Case \(M \equiv x\).
      \(x^{\#*} \equiv \Sh{x}^* \equiv x^\dagger \equiv x\) and
      \(x^{\natural\dagger} \equiv x^\dagger \equiv x\).
    \item Case \(M \equiv \Fun{x}{M_1}\). First identity has two subcases.
      \begin{itemize}
        \item Case \(M_1 \equiv x \, M_2\) and \(x \fresh M_2\).
          \((\Fun{x}{x \, M_2})^{\#*}
          \equiv M_2^{\natural*}
          \equiv \Fun{x}{x \, M_2^{\natural\dagger}}
          \equiv_\text{IH} \Fun{x}{x \, M_2}\).
          \item Opposite case.
          \((\Fun{x}{M_1})^{\#*}
          \equiv (\Shift{x}{M_1^\#})^*
          \equiv (\Fun{x}{M_1^\#})^\dagger
          \equiv \Fun{x}{M_1^{\#*}}
          \equiv_\text{IH} \Fun{x}{M_1}\).
      \end{itemize}
      Second identity:
      \((\Fun{x}{M_1})^{\natural\dagger}
      \equiv (\Fun{x}{M_1^\#})^\dagger
      \equiv \Fun{x}{M_1^{\#*}}
      \equiv_\text{IH} \Fun{x}{M_1}\).
    \item Case \(M \equiv M_1 \, M_2\).
      \((M_1 \, M_2)^{\#*}
      \equiv (M_1^\natural \, M_2^\natural)^*
      \equiv M_1^{\natural\dagger} \, M_2^{\natural\dagger}
      \equiv_\text{IH} M_1 \, M_2\),
      \((M_1 \, M_2)^{\natural\dagger}
      \equiv \Dol{M_1^\natural \, M_2^\natural}^\dagger
      \equiv (M_1^\natural \, M_2^\natural)^*
      \equiv M_1^{\natural\dagger} \, M_2^{\natural\dagger}
      \equiv_\text{IH} M_1 \, M_2\).
  \end{itemize}
\end{proof}

\subsection{Monotonicity of CPS and DS translation}

In this subsection we establish that both \(*\) and \(\#\) are monotone. A few lemmas are needed for these to work, notably substitution lemmas. Some lemmas could be proven by structural induction, but it is more concise and natural for them to follow recursion pattern of the definition of \(*\). For this reason, we prove these lemmas by induction on the size of \(M\). Since we have a grammar of valid terms, we simply define the size of a term as the number of nodes in a parse tree of that term. Crucially, variables are smaller in this sense than nonvalues, hence \(J[x]\) is always smaller than \(J[P]\).

\begin{lemma}[Values reduce to values]
  If \(V \to_\Ld N\), then for some \(W\), \(N \equiv W\).
\end{lemma}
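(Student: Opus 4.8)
The plan is a straightforward double case analysis: on the shape of the value \(V\), and on where inside \(V\) the contracted redex sits. Recall that by definition \(V \to_\Ld N\) means \(V \equiv C[L_0]\) and \(N \equiv C[L_1]\) for some reduction context \(C\) and some primitive step \(L_0 \mapsto L_1\), where \(\mapsto\) is the union of the seven base rules \(\bv,\ev,\dolv,\dolsh,\shdol,\pure,\bind\). A value is one of \(x\), \(\Fun{x}{M}\), or \(\Dol{M}\), so I would split accordingly.

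If \(V \equiv x\), the hypothesis is vacuous: no base rule has a bare variable on its left-hand side, and \(x\) has no proper subterms, so there is no \(C,L_0\) with \(x \equiv C[L_0]\) and \(L_0 \mapsto L_1\). If \(V \equiv \Fun{x}{M}\), then the only reduction-context shapes whose hole can sit at the root of, or strictly inside, a \(\lambda\)-abstraction are \(C \equiv [\,]\) and \(C \equiv \Fun{x}{C'}\). In the latter case \(N \equiv \Fun{x}{C'[L_1]}\), again an abstraction, hence a value. In the former case \(\Fun{x}{M}\) is itself the \(\mapsto\)-redex; scanning the seven base rules, the only one whose left-hand side has a \(\lambda\)-abstraction at the root is \(\ev\), namely \(\Fun{x}{W \, x} \;\ev\; W\) with \(x \fresh W\), and its contractum \(W\) is a value. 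The case \(V \equiv \Dol{M}\) is symmetric: either \(C \equiv \Dol{C'}\) and \(N \equiv \Dol{C'[L_1]}\) is a value, or \(C \equiv [\,]\) and \(\Dol{M}\) is itself contracted, in which case the only base rules with \(\Dol{\cdot}\) at the root are \(\dolv\) (contractum \(\Fun{x}{x \, W}\)) and \(\dolsh\) (contractum \(W\)), both values.

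The only point requiring care is the enumeration of the base-rule left-hand-side root shapes: one must keep the derived forms \(\Shift{x}{\cdot}\), \(M \DOL N\), and \(\Let{x}{M}{N}\) unfolded to their primitive spellings built from \(\Sh{\cdot}\), application, and \(\Dol{\cdot}\), so that, for instance, \(\pure\) and \(\bind\) are correctly recognised as having \(\Sh{\cdot}\) or an application at the root (never \(\Dol{\cdot}\) or a bare \(\lambda\)). Once that bookkeeping is settled, the lemma falls out immediately; there is no genuine obstacle here, only the need to be exhaustive about the base rules and about which reduction contexts can place a hole at the root of or underneath each value constructor.
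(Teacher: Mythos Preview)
Your proposal is correct and follows essentially the same double case analysis as the paper's proof: split on the shape of \(V\) and on whether the reduction context is empty or descends under the value constructor, then in the empty-context subcases enumerate the base rules whose left-hand side has the matching root constructor. The only cosmetic difference is that the paper does not spell out the remark about unfolding the derived forms \(\Shift{x}{\cdot}\), \(\DOL\), and \(\Let{}{}{}\), but your observation there is sound and the enumeration is exactly the same.
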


\begin{proof}
  By definition of \(\to\), there exist \(M'\), \(N'\) and \(C\), such that,
    \(M' \mapsto N'\), \(V \equiv C[M']\) and \(N \equiv C[N']\).
  We proceed by cases on \(V\) and \(C\).
  \begin{itemize}
    \item Case \(V \equiv x\). Necessarily \(C \equiv [\,]\). There is no possible reduction, proposition holds vacuously.
    \item Case \(V \equiv \Fun{x}{M_1}\).
    \begin{itemize}
      \item Subcase \(C \equiv [\,]\). Necessarily \(V (\ev) N\). By definition of \(\ev\), \(V \equiv \Fun{x}{x \, W}\) and \(N \equiv W\) for some \(W\).
      \item Subcase \(C \equiv \Fun{x}{C'}\). \(N \equiv C[N'] \equiv \Fun{x}{C'[N']}\).
    \end{itemize}
    \item Case \(V \equiv \Dol{M_1}\).
    \begin{itemize}
      \item Subcase \(C \equiv [\,]\) and \(V (\dolv) N\). By definition of \(\dolv\), \(V \equiv \Dol{W}\) and \(N \equiv \Fun{x}{x \, W}\) for some \(W\).
      \item Subcase \(C \equiv [\,]\) and \(V (\dolsh) N\). By definition of \(\dolsh\), \(V \equiv \Dol{\Sh{W}}\) and \(N \equiv W\) for some \(W\).
      \item Subcase \(C \equiv \Dol{C'}\). \(N \equiv C[N'] \equiv \Dol{C'[N']}\).
    \end{itemize}
  \end{itemize}
\end{proof}

\begin{lemma}
  \label{lemma:asterisk-dagger}
  \begin{itemize}
    \item If \(V^* \to_\lambda W^*\), then \(V^\dagger \to_\lambda W^\dagger\).
  \end{itemize}
\end{lemma}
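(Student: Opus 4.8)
The plan is to unfold the definition of \(*\) on values and analyse where the contracted redex sits. Write \(V^* \equiv \Fun{k}{k\,V^\dagger}\) and \(W^* \equiv \Fun{k}{k\,W^\dagger}\); by the usual convention the continuation variable \(k\) is chosen fresh, so it occurs free in neither \(V^\dagger\) nor \(W^\dagger\). A single step \(V^* \to_\lambda W^*\) exhibits a \(\lambda\)-reduction context \(C\) and a redex \(R\) with \(R \mapsto R'\), \(V^* \equiv C[R]\) and \(W^* \equiv C[R']\). I would then argue by cases on \(C\).

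First I would rule out the outer positions. We cannot have \(C \equiv [\,]\): the term \(\Fun{k}{k\,V^\dagger}\) is an abstraction, so it is not a \(\beta\)-redex, and it is not an \(\eta\)-redex either, since \(\Fun{k}{k\,V^\dagger} \equiv \Fun{k}{M\,k}\) would force \(V^\dagger \equiv k\), contradicting that \(k\) does not occur in \(V^\dagger\). Similarly, if \(C \equiv \Fun{k}{C'}\) with \(C'\) plugging at the application node, i.e.\ \(R \equiv k\,V^\dagger\), then \(R\) is neither a \(\beta\)-redex (its operator \(k\) is a variable) nor an \(\eta\)-redex (it is an application, not an abstraction); and the operator subterm \(k\) contains no redex at all. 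Hence \(R\) must lie inside \(V^\dagger\): there is a context \(C''\) with \(V^\dagger \equiv C''[R]\), and then \(V^* \to_\lambda \Fun{k}{k\,C''[R']}\), i.e.\ \(V^\dagger \to_\lambda C''[R']\).

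It remains to match the reduct against the given target: from \(\Fun{k}{k\,C''[R']} \equiv W^* \equiv \Fun{k}{k\,W^\dagger}\) we read off \(C''[R'] \equiv W^\dagger\) up to renaming of bound variables (reduction introduces no free \(k\)), so \(V^\dagger \to_\lambda W^\dagger\), as required. I do not expect any real obstacle here; the only step needing care is the freshness bookkeeping for \(k\), which is precisely what excludes the spurious top-level \(\eta\)-redex in the second paragraph. Should the lemma bundle further clauses after this one, I would expect them to fall out by the same unfold-and-case-split pattern, invoking Lemma~\ref{lemma:app-asterisk} for the application shapes and the preceding "values reduce to values" lemma where a value hypothesis is threaded through.
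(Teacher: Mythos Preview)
Your proposal is correct and follows essentially the same approach as the paper: unfold \(V^*\equiv\Fun{k}{k\,V^\dagger}\), case-split on the position of the redex, rule out the outer positions as vacuous, and conclude that the redex lies inside \(V^\dagger\). Your treatment is in fact more explicit than the paper's, which simply declares the cases \(C\equiv[\,]\) and \(C\equiv\Fun{k}{[\,]}\) vacuous without spelling out the \(\eta\) argument; your final paragraph of speculation about further clauses and Lemma~\ref{lemma:app-asterisk} is unnecessary here, since the lemma has only this one clause and needs no auxiliary results.
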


\begin{proof}
  Assume \(V^* \to W^*\).
  Unfolding the definition of \(*\) gives \(\Fun{k}{k \, V^\dagger} \to \Fun{k}{k \, W^\dagger}\).
  By definition of \(\to\) (in \(\lambda\)), there exist \(M\), \(N\) and \(C\), such that, \(M (\beta\cup\eta) N\), \(\Fun{k}{k \, V^\dagger} \equiv C[M]\) and \(\Fun{k}{k \, W^\dagger} \equiv C[N]\).
  We proceed by cases on \(C\).
  \begin{itemize}
    \item Case \(C \equiv [\,]\) or \(C \equiv \Fun{k}{[\,]}\). There is no possible reduction, proposition holds vacuously.
    \item Case \(C \equiv \Fun{k}{k \, C'}\). Given that \(M (\beta\cup\eta) N\), \(V^\dagger \equiv C'[M]\) and \(W^\dagger \equiv C'[N]\), we infer \(V^\dagger \to W^\dagger\).
  \end{itemize}
\end{proof}

\begin{lemma}[Substitution lemma for \(*\) and \(\dagger\)]
  \label{lemma:subst-asterisk}
  For every \(M\) and \(V\),
  \begin{itemize}
    \item \(M^*[V^\dagger/x] \equiv M[V/x]^*\) and
    \item if \(M\) is a value, then \(M^\dagger[V^\dagger/x] \equiv M[V/x]^\dagger\).
  \end{itemize}
\end{lemma}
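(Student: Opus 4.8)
The plan is to prove both clauses simultaneously, by well-founded induction on the size of \(M\) and following the recursion pattern of \(*\); recall from the preceding discussion that a variable counts as strictly smaller than a nonvalue, so that \(J[z]\) is strictly smaller than \(J[P]\). Throughout I adopt the Barendregt convention: every bound variable of \(M\), together with the \(k\) and the bound variable hidden inside the \(\Letk\) macro, is taken fresh for \(x\) and for the free variables of \(V\), so that substitution commutes with all binders. The only auxiliary syntactic fact I need is that substituting a value for a variable preserves the value/nonvalue distinction — a value stays a value, a nonvalue stays a nonvalue, and a bindable context stays a bindable context.

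I would first dispatch the case where \(M\) is a value. Here I establish the second clause directly and read off the first from it: for a value, \(M^* \equiv \Fun{k}{k \, M^\dagger}\), so \(M^*[V^\dagger/x] \equiv \Fun{k}{k \, (M^\dagger[V^\dagger/x])}\), and the already-proven second clause for this same \(M\) turns the body into \(\Fun{k}{k \, (M[V/x]^\dagger)} \equiv M[V/x]^*\). For the second clause itself: if \(M \equiv x\) then \(x^\dagger[V^\dagger/x] \equiv V^\dagger \equiv x[V/x]^\dagger\); if \(M\) is a variable other than \(x\), nothing moves. If \(M \equiv \Fun{y}{N}\) then \(M^\dagger \equiv \Fun{y}{N^*}\), so pushing the substitution inside and invoking the induction hypothesis (first clause) on \(N\) gives \(\Fun{y}{N[V/x]^*} \equiv M[V/x]^\dagger\). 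If \(M \equiv \Dol{N}\) then \(M^\dagger \equiv N^*\), and the induction hypothesis on \(N\) closes the case in the same way.

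Next come the nonvalue cases, for which only the first clause applies. If \(M \equiv V' \, W'\) with both components values, then \((V' \, W')^* \equiv V'^\dagger \, W'^\dagger\); distributing the substitution and applying the induction hypothesis (second clause) to the strictly smaller values \(V'\) and \(W'\) yields \(V'[V/x]^\dagger \, W'[V/x]^\dagger\), which equals \((V'[V/x] \, W'[V/x])^*\) because \(V'[V/x]\) and \(W'[V/x]\) are again values, and this is \((V' \, W')[V/x]^*\). The case \(M \equiv \Sh{V'}\) is analogous, using \(\Sh{V'}^* \equiv V'^\dagger\) and the induction hypothesis on the value \(V'\). Finally, if \(M \equiv J[P]\) with \(J\) a bindable context and \(P\) a nonvalue, unfold \(M^* \equiv \Letk{z}{P^*}{J[z]^*}\); the substitution distributes over this term, and I apply the induction hypothesis (first clause) to \(P\) and — the crucial move — to \(J[z]\), which is legitimate because \(J[z]\) is strictly smaller than \(J[P]\) (its hole now carries a variable rather than a nonvalue). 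What is left is bookkeeping: writing \(J[V/x]\) for the bindable context obtained by substituting into the term components of \(J\), a three-way case split on the shape of \(J\) gives \((J[z])[V/x] \equiv (J[V/x])[z]\) and \((J[P])[V/x] \equiv (J[V/x])[P[V/x]]\); since \(J[V/x]\) is again bindable and \(P[V/x]\) again a nonvalue, the \(*\)-clause for \(J[V/x]\) reassembles the result into \((J[P][V/x])^* \equiv M[V/x]^*\).

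I expect the \(J[P]\) case to be the main obstacle — not through any depth, but because it is exactly where the choice of induction measure has to earn its keep (the appeal to the hypothesis on \(J[z]\) relies on "variable smaller than nonvalue") and where one must be scrupulous about how substitution interacts with contexts and with the fresh variables concealed inside the \(\Letk\) macro. Everything else is routine unfolding and refolding of the definitions of \(*\) and \(\dagger\).
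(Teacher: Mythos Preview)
Your proposal is correct and follows essentially the same approach as the paper: induction on the size of \(M\) along the recursion pattern of \(*\), establishing the \(\dagger\)-clause on values first, deriving the \(*\)-clause for values from it, and then handling the nonvalue cases \(V'\,W'\), \(\Sh{V'}\), and \(J[P]\) separately, with the \(J[P]\) case relying on the size observation that \(J[z]\) is smaller than \(J[P]\). You are in fact somewhat more explicit than the paper in spelling out the side conditions (Barendregt convention, preservation of the value/nonvalue/bindable-context distinction under substitution), but the argument is the same.
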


\begin{proof}
  By induction on the size of \(M\).
  We prove the second conjunct below.
  \begin{itemize}
    \item Case \(M \equiv x\).
    \(
      x^\dagger[V^\dagger/x]
      \equiv x[V^\dagger/x]
      \equiv V^\dagger
      \equiv x[V/x]^\dagger
    \)
    \item Case \(M \equiv y\).
    \(
      y^\dagger[V^\dagger/x]
      \equiv y[V^\dagger/x]
      \equiv y
      \equiv y^\dagger
      \equiv y[V/x]^\dagger
    \).
    \item Case \(M \equiv \Fun{y}{M_1}\).
    \(
      (\Fun{y}{M_1})^\dagger[V^\dagger/x]
      \equiv \Fun{y}{M_1^*[V^\dagger/x]}
      \equiv_\text{IH} \Fun{y}{M_1[V/x]^*}
      \equiv (\Fun{y}{M_1})[V/x]^\dagger
    \).
    \item Case \(M \equiv \Dol{M_1}\).
    \(
      \Dol{M_1}^\dagger[V^\dagger/x]
      \equiv M_1^*[V^\dagger/x]
      \equiv_\text{IH} M_1[V/x]^*
      \equiv \Dol{M_1}[V/x]^\dagger
    \).
  \end{itemize}
  We now prove the first conjunct.
  \begin{itemize}
    \item Case \(M \equiv J[P]\).
    \[
      J[P]^*[V^\dagger/x]
      \equiv \Letk{y}{P^*[V^\dagger/x]}{J[y]^*[V^\dagger/x]}
      \equiv_\text{IH} \Letk{y}{P[V/x]^*}{J[y][V/x]^*}
      \equiv J[P][V/x]^*
    \]
    \item Case \(M \equiv V_1 \, V_2\).
    \(
      (V_1 \, V_2)^*[V^\dagger/x]
      \equiv V_1^\dagger[V^\dagger/x] \, V_2^\dagger[V^\dagger/x]
      \equiv_\text{IH} V_1[V/x]^\dagger \, V_2[V/x]^\dagger
      \equiv (V_1 \, V_2)[V/x]^*
    \)
    \item Case \(M \equiv \Sh{V_1}\).
    \(
      \Sh{V_1}^*[V^\dagger/x]
      \equiv V_1^\dagger[V^\dagger/x]
      \equiv_\text{IH} V_1[V/x]^\dagger
      \equiv \Sh{V_1}[V/x]^*
    \)
    \item Case \(M \equiv W\). For brevity, we consider all values at once and reuse the proofs of the second conjunct.
    \(
      W^*[V^\dagger/x]
      \equiv \Fun{k}{k \, W^\dagger[V^\dagger/x]}
      \equiv_\text{the second conjunct} \Fun{k}{k \, W[V/x]^\dagger}
      \equiv W[V/x]^*
    \).
  \end{itemize}
\end{proof}

\begin{lemma}
  \label{lemma:contraction-asterisk}
  If \(M_1 \mapsto_\Ld M_2\), then \(M_1^* \to_\lambda^? M_2^*\).
\end{lemma}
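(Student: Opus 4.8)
The plan is to argue by cases on which of the seven basic contraction rules \(\bv,\ev,\dolv,\dolsh,\shdol,\pure,\bind\) turns \(M_1\) into \(M_2\); since \(\mapsto_\Ld\) is exactly their union and is \emph{not} closed under contexts, this case split is exhaustive. In each case I would unfold the defining clauses of \(*\) on both sides, additionally expanding the macro-definitions of \(\Shift{\cdot}{\cdot}\), \(\DOL\) and \(\Let{\cdot}{\cdot}{\cdot}\) wherever they occur. The one piece of bookkeeping that must be done with care is deciding, at each node, whether a subterm is a value or a nonvalue, so that the correct clause of \(*\) is applied; it helps to record at the outset that \(\Dol{N}\) is a value for \emph{every} term \(N\), which is what keeps the clauses \((V\,W)^*\deq V^\dagger\,W^\dagger\) and \(\Dol{M}^\dagger\deq M^*\) firing throughout.

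For \(\bv\) the redex \((\Fun{x}{M})\,V\) is an application of two values, so \(((\Fun{x}{M})\,V)^*\equiv(\Fun{x}{M^*})\,V^\dagger\), while the substitution lemma (Lemma~\ref{lemma:subst-asterisk}) gives \(M[V/x]^*\equiv M^*[V^\dagger/x]\); a single \(\beta\)-step joins the two. For \(\ev\) both sides are values, and unfolding gives \((\Fun{x}{V\,x})^*\equiv\Fun{k}{k\,(\Fun{x}{V^\dagger\,x})}\) while \(V^*\equiv\Fun{k}{k\,V^\dagger}\); since \(x \fresh V\) trivially implies \(x \fresh V^\dagger\), a single \(\eta\)-step joins them. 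The remaining four rules \(\dolv,\dolsh,\shdol,\pure\) each give \(M_1^*\equiv M_2^*\) outright: for instance \(\Dol{V}^*\equiv\Fun{k}{k\,(\Fun{k'}{k'\,V^\dagger})}\) is \(\alpha\)-equal to \((\Fun{x}{x\,V})^*\equiv\Fun{k}{k\,(\Fun{x}{x\,V^\dagger})}\), \(\Sh{\Dol{M}}^*\equiv\Dol{M}^\dagger\equiv M^*\), \(\Dol{\Sh{V}}^*\equiv\Fun{k}{k\,\Sh{V}^*}\equiv\Fun{k}{k\,V^\dagger}\equiv V^*\), and \((\Shift{x}{x\,V})^*\equiv\Fun{x}{x\,V^\dagger}\) is \(\alpha\)-equal to \(V^*\). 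In each of these four cases \(M_1^*\equiv M_2^*\), which is the reflexive part of \(\to_\lambda^?\).

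The laborious case, and the one I expect to be the main obstacle, is \(\bind\), where \(J[P]\,\bind\,\Let{x}{P}{J[x]}\). Here I would first unfold \(\Let{x}{P}{J[x]}\) through its macro definition down to \(\Sh{\Fun{k}{\Dol{P}\,(\Fun{x}{\Dol{J[x]}\,k})}}\), observing that \(J[x]\) is itself always a nonvalue but that \(\Dol{J[x]}\) and \(\Dol{P}\) are values. Applying the value-application clause and the clause \(\Dol{M}^\dagger\deq M^*\) at each application node then brings the CPS image to \(\Fun{k}{P^*\,(\Fun{x}{J[x]^*\,k})}\), which is by definition exactly \(\Letk{x}{P^*}{J[x]^*}\equiv J[P]^*\); hence \(M_1^*\equiv M_2^*\) in this case too. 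No genuine \(\lambda\)-reduction is needed here at all — the only work is expanding three nested layers of macros and re-checking, at every application, which defining clause of \(*\) applies. Collecting the cases yields \(M_1^*\to_\lambda^? M_2^*\) in general, with an actual reduction step occurring only for \(\bv\) and \(\ev\).
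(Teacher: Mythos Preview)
Your proposal is correct and follows essentially the same approach as the paper: a case split on the seven contraction rules, with \(\bv\) handled via a single \(\beta\)-step plus the substitution lemma, \(\ev\) via a single \(\eta\)-step, and the remaining five cases (\(\dolv,\dolsh,\shdol,\pure,\bind\)) reducing to syntactic identity after unfolding. Your treatment of the \(\bind\) case --- expanding the macro for \(\Let{x}{P}{J[x]}\) and tracking value/nonvalue status at each node --- is exactly what the paper does, and your observation that a genuine \(\lambda\)-step occurs only for \(\bv\) and \(\ev\) is accurate.
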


\begin{proof}
  \begin{itemize}
    \item Case \(\bv\).
    \(
      ((\Fun{x}{M}) \, V)^*
      \equiv (\Fun{x}{M^*}) \, V^\dagger
      \to M^*[V^\dagger/x]
      \equiv_\text{Lemma~\ref{lemma:subst-asterisk}} M[V/x]^*
    \).
    \item Case \(\ev\).
    \(
      (\Fun{x}{V \, x})^*
      \equiv \Fun{k}{k \, (\Fun{x}{V^\dagger \, x})}
      \to_\eta \Fun{k}{k \, V^\dagger}
      \equiv V^*
    \).
    \item Case \(\dolv\).
    \(
      \Dol{V}^*
      \equiv \Fun{k}{k \, \Dol{V}^\dagger}
      \equiv \Fun{k}{k \, V^*}
      \equiv \Fun{k}{k \, (\Fun{x}{x \, V^\dagger})}
      \equiv \Fun{k}{k \, (\Fun{x}{x \, V})^\dagger}
      \equiv (\Fun{x}{x \, V})^*
    \).
    \item Case \(\dolsh\).
    \(
      \Dol{\Sh{V}}^*
      \equiv \Fun{k}{k \, \Dol{\Sh{V}}^\dagger}
      \equiv \Fun{k}{k \, \Sh{V}^*}
      \equiv \Fun{k}{k \, V^\dagger}
      \equiv V^*
    \).
    \item Case \(\shdol\).
    \(
      \Sh{\Dol{M}}^*
      \equiv \Dol{M}^\dagger
      \equiv M^*
    \).
    \item Case \(\pure\).
    \(
      (\Shift{x}{x \, V})^*
      \equiv \Fun{x}{(x \, V)^*}
      \equiv \Fun{x}{x \, V^\dagger}
      \equiv V^*
    \).
    \item Case \(\bind\).
    \begin{align*}
      J[P]^*
      & \equiv \Fun{k}{P^* \, (\Fun{x}{J[x]^* \, k})}
      \equiv \Fun{k}{\Dol{P}^\dagger \, (\Fun{x}{\Dol{J[x]}^\dagger \, k})}
      \equiv \Fun{k}{\Dol{P}^\dagger \, (\Fun{x}{k \DOL J[x]})^\dagger} \\
      & \equiv (\Fun{k}{(\Fun{x}{k \DOL J[x]}) \DOL P})^\dagger
      \equiv (\Shift{k}{(\Fun{x}{k \DOL J[x]}) \DOL P})^*
      \equiv (\Let{x}{P}{J[x]})^*
    \end{align*}
  \end{itemize}
\end{proof}

\begin{lemma}[Single-step reduction preservation by \(*\)]
  \label{lemma:single-asterisk}
  If \(M \to_\Ld N\), then \(M^* \to_\lambda^? N^*\).
\end{lemma}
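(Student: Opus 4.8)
The statement to prove is: if $M \to_\Ld N$, then $M^* \to_\lambda^? N^*$. The plan is a standard congruence argument: unfold the definition of $\to_\Ld$ to expose a reduction context $C$ and a basic contraction $M_1 \mapsto_\Ld N_1$ with $M \equiv C[M_1]$ and $N \equiv C[N_1]$, then induct on the structure of $C$. The base case $C \equiv [\,]$ is exactly Lemma~\ref{lemma:contraction-asterisk}, which gives $M_1^* \to_\lambda^? N_1^*$. For the inductive cases we walk through each production of the reduction-context grammar ($\Fun{x}{C'}$, $C' \, M'$, $M' \, C'$, $\Sh{C'}$, $\Dol{C'}$) and show that the CPS translation of the surrounding frame turns a $\to_\lambda^?$ step inside into a $\to_\lambda^?$ step outside; i.e. each frame is mapped by $*$ to a $\lambda$-context, so $\lambda$'s own congruence closure finishes the job.

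First I would handle the frames whose CPS image is transparent. For $C \equiv \Sh{C'}$: if $C'[M_1]$ and $C'[N_1]$ are both values $V, W$, then $\Sh{C'}[\cdot]^* = V^\dagger$ resp. $W^\dagger$, and Lemma~\ref{lemma:asterisk-dagger} (together with the IH $V^* \to_\lambda^? W^*$) promotes the step to $V^\dagger \to_\lambda^? W^\dagger$; if they are nonvalues, $\Sh{C'}[\cdot]$ is a bindable context $J[\cdot]$ and we fall into the $J$-frame case below. For $C \equiv \Dol{C'}$: since $\Dol{C'[M_1]}^\dagger \equiv (C'[M_1])^*$, the IH on $C'$ lifts directly through $\dagger$, and then wrapping in $\Fun{k}{k\,(\cdot)}$ for the full $*$ of a value still only places it in a $\lambda$-context. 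For $C \equiv \Fun{x}{C'}$: $(\Fun{x}{C'[M_1]})^* \equiv \Fun{k}{k\,(\Fun{x}{(C'[M_1])^*})}$ (noting $C'[M_1]$ need not be a value, so I use the general form via the $J$-context / second-conjunct machinery), and the $\lambda$-congruence closure under $\Fun{k}{k\,(\Fun{x}{[\,]})}$ finishes it.

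The application frames $C \equiv C' \, M'$ and $C \equiv M' \, C'$ are where one must be careful, because the CPS of an application is the administrative-redex-laden $J[P]^* \equiv \Fun{k}{P^*\,(\Fun{x}{J[x]^*\,k})}$, and whether the hole sits in $P$ or in $J$ depends on the value/nonvalue status of the two operands. I would split each application frame according to the four $\bind$-context shapes ($[\,]\,M'$, $V\,[\,]$, and their "already a value" variants), much as Lemma~\ref{lemma:subst-asterisk} did: when the active side is a nonvalue, it appears as the $P^*$ subterm of $J[P]^*$ and the IH lifts through $\Fun{k}{[\,]\,(\Fun{x}{\dots})}$; when it is a value, it appears (after the value clause of $*$) inside $J[x]^*$ or as a $\dagger$-translated operand in $(V\,W)^* \equiv V^\dagger\,W^\dagger$, and one uses Lemma~\ref{lemma:asterisk-dagger} plus the IH. The only subtlety is the possibility that a reduction inside $C'$ changes a nonvalue into a value (Lemma ``values reduce to values'' rules out the reverse), which shifts which clause of $*$ applies to the frame; here I would invoke Lemma~\ref{lemma:contraction-asterisk}'s treatment of the $\bind$ rule together with the $\dolv$/$\dolsh$ equalities to check that the two CPS images are still $\lambda$-equal, indeed connected by a $\to_\lambda^?$ step --- this case-bookkeeping on the value status of the hole's contents is the main obstacle, and the reason the induction is on term size rather than bare structure.
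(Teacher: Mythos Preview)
Your overall strategy matches the paper's: induction with base case given by Lemma~\ref{lemma:contraction-asterisk} and inductive step by case analysis on the outermost frame, using ``values reduce to values'' together with Lemma~\ref{lemma:asterisk-dagger} to pass from $*$ to $\dagger$ when the hole sits in a value position. The paper organises the case split by the shape of $M$ (thereby pinning down which clause of $*$ applies) rather than directly by $C$, but the substance is the same.

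The subtlety you single out---a nonvalue in the hole becoming a value after the step---is real, but it cannot be discharged the way you suggest; in fact it breaks the lemma \emph{as stated}. Take $M \equiv z\,((\Fun{x}{x})\,y) \to_\Ld z\,y \equiv N$, a $\bv$-step in the context $z\,[\,]$. Then
\[
M^* \;\equiv\; \Fun{k}{\bigl((\Fun{a}{\Fun{b}{b\,a}})\,y\bigr)\,(\Fun{c}{z\,c\,k})},
\qquad
N^* \;\equiv\; z\,y,
\]
and no single $\beta$- or $\eta$-step connects them: the outer $\Fun{k}{\cdot}$ cannot $\eta$-contract (its body is not of the form $L\,k$), and every inner step still yields an abstraction. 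The paper's own argument slips at precisely this point: the identity ``$\ldots \equiv J[C'[N']]^*$'' in the $J[P]$ case (and the analogous step in the $P_1\,P_2$ case) is only syntactic when $C'[N']$ is still a nonvalue. Invoking the $\bind$/$\dolv$/$\dolsh$ clauses of Lemma~\ref{lemma:contraction-asterisk} does not help---those are zero-step identities and cannot supply the missing administrative reductions. The correct repair is to weaken the conclusion to $M^* \redu_\lambda N^*$; that is all Theorem~\ref{thm:mono-asterisk} actually uses, and with it both your case analysis and the paper's go through, the problematic sub-case costing a few extra $\beta$-steps and one $\eta$ to collapse $\Fun{k}{W^*\,(\Fun{x}{J[x]^*\,k})}$ down to $J[W]^*$.
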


\begin{proof}
  By structural induction on \(M\).
  We unfold the definition of \(\to\) (in \(\Ld\)):
  there exist \(M'\), \(N'\) and \(C\), such that,
    \(M' \mapsto N'\), \(M \equiv C[M']\) and \(N \equiv C[N']\).
  If \(M\) is a variable, then there is no possible reduction and the proposition holds vacuously.
  If \(C \equiv [\,]\), then apply Lemma~\ref{lemma:contraction-asterisk}.
  Otherwise, we proceed by cases on \(M\) and (nonempty) \(C\).
  In each case below we consider some proper subcontext \(C'\) of \(C\). Because \(C'[M']\) is a proper subterm of \(M\), then from inductive hypothesis and \(C'[M'] \to C'[N']\) we infer \(C'[M']^* \to^? C'[N']^*\). When \(C'[M']\) is a value, it reduces to a value \(C'[N']\), so \(C'[N']^\dagger\) is well-defined and Lemma~\ref{lemma:asterisk-dagger} gives us \(C'[M']^\dagger \to^? C'[N']^\dagger\).
  \begin{itemize}
    \item Case \(M \equiv \Fun{x}{M_1}\) and \(C \equiv \Fun{x}{C'}\).
    \(
      (\Fun{x}{M_1})^*
      \equiv \Fun{k}{k \, (\Fun{x}{M_1^*})}
      \equiv \Fun{k}{k \, (\Fun{x}{C'[M']^*})}
      \to^?_\text{IH} \Fun{k}{k \, (\Fun{x}{C'[N']^*})}
      \equiv (\Fun{x}{C'[N']})^*
      \equiv N^*
    \).
    \item Case \(M \equiv \Dol{M_1}\) and \(C \equiv \Dol{C'}\).
    \(
      \Dol{M_1}^*
      \equiv \Fun{k}{k \, M_1^*)}
      \equiv \Fun{k}{k \, C'[M']^*}
      \to^?_\text{IH} \Fun{k}{k \, C'[N']^*}
      \equiv (\Dol{C'[N']})^*
      \equiv N^*
    \).
    \item Case \(M \equiv \Sh{V}\) and \(C \equiv \Sh{C'}\).
    \(
      \Sh{V}^*
      \equiv V^\dagger
      \equiv C'[M']^\dagger
      \to^?_\text{IH} C'[N']^\dagger
      \equiv \Sh{C'[N']}^*
      \equiv N^*
    \).
    \item Case \(M \equiv J[P]\) and \(C \equiv J[C']\).
    \(
      J[P]^*
      \equiv \Fun{k}{P^* \, (\Fun{x}{J[x]^* \, k})}
      \equiv \Fun{k}{C'[M']^* \, (\Fun{x}{J[x]^* \, k})}
      \to^?_\text{IH} \Fun{k}{C'[N']^* \, (\Fun{x}{J[x]^* \, k})}
      \equiv J[C'[N']]^*
      \equiv N^*
    \).
    \item Case \(M \equiv V_1 \, V_2\) and \(C \equiv C' \, V_2\).
    \(
      (V_1 \, V_2)^*
      \equiv V_1^\dagger \, V_2^\dagger
      \equiv C'[M']^\dagger \, V_2^\dagger
      \to^?_\text{IH} C'[N']^\dagger \, V_2^\dagger
      \equiv (C'[N'] \, V_2)^*
      \equiv N^*
    \).
    \item Case \(M \equiv V_1 \, V_2\) and \(C \equiv V_1 \, C'\).
    \(
      (V_1 \, V_2)^*
      \equiv V_1^\dagger \, V_2^\dagger
      \equiv V_1^\dagger \, C'[M']^\dagger
      \to^?_\text{IH} V_1^\dagger \, C'[N']^\dagger
      \equiv (V_1 \, C'[N'])^*
      \equiv N^*
    \).
    \item Case \(M \equiv V_1 \, P_2\) and \(C \equiv C' \, P_2\).
    \(
      (V_1 \, P_2)^*
      \equiv \Letk{x}{P_2^*}{V_1^\dagger \, x}
      \equiv \Letk{x}{P_2^*}{C'[M']^\dagger \, x}
      \to^?_\text{IH} \Letk{x}{P_2^*}{C'[N']^\dagger \, x}
      \equiv (C'[N'] \, P_2)^*
      \equiv N^*
    \).
    \item Case \(M \equiv P_1 \, V_2\) and \(C \equiv P_1 \, C'\).
    \(
      (P_1 \, V_2)^*
      \equiv \Letk{x}{P_1^*}{x \, V_2^\dagger}
      \equiv \Letk{x}{P_1^*}{x \, C'[M']^\dagger}
      \to^?_\text{IH} \Letk{x}{P_1^*}{x \, C'[N']^\dagger}
      \equiv (P_1 \, C'[N'])^*
      \equiv N^*
    \).
    \item Case \(M \equiv P_1 \, P_2\) and \(C \equiv P_1 \, C'\).
    \begin{align*}
      (P_1 \, P_2)^*
      & \equiv \Letk{x}{P_1^*}{\Letk{y}{P_2^*}{x \, y}}
      \equiv \Letk{x}{P_1^*}{\Letk{y}{C'[M']^*}{x \, y}} \\
      & \to^?_\text{IH} \Letk{x}{P_1^*}{\Letk{y}{C'[N']^*}{x \, y}}
      \equiv (P_1 \, C'[N'])^*
      \equiv N^*
    \end{align*}
  \end{itemize}
\end{proof}

\begin{theorem}[Monotonicity of \(*\)]
  \label{thm:mono-asterisk}
  For all \( M, N \in \Ld \),
  \( M \redu_\Ld N \) implies
  \( M^* \redu_\lambda N^* \).
\end{theorem}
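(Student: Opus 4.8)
The plan is to bootstrap the single-step result, Lemma~\ref{lemma:single-asterisk}, up to arbitrary multi-step reductions by induction on the length of the reduction sequence. Recall that by definition $M \redu_\Ld N$ means $M \to_\Ld^n N$ for some $n \geq 0$, so it suffices to prove, by induction on $n$, that $M \to_\Ld^n N$ implies $M^* \redu_\lambda N^*$.

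For the base case $n = 0$ we have $M \equiv N$, hence $M^* \equiv N^*$, and $\redu_\lambda$ is reflexive, so $M^* \redu_\lambda N^*$ holds immediately. For the inductive step, suppose $M \to_\Ld M' \to_\Ld^{n} N$. Applying Lemma~\ref{lemma:single-asterisk} to the first step gives $M^* \to_\lambda^? M'^*$; since $\to_\lambda^?$ (the reflexive closure of $\to_\lambda$) is contained in $\redu_\lambda$, we obtain $M^* \redu_\lambda M'^*$. The induction hypothesis applied to $M' \to_\Ld^{n} N$ yields $M'^* \redu_\lambda N^*$. Chaining the two by transitivity of $\redu_\lambda$ gives $M^* \redu_\lambda N^*$, completing the induction.

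There is essentially no obstacle here: all the real work has already been discharged in Lemma~\ref{lemma:contraction-asterisk} (handling each contraction rule, in particular the delicate $\bind$ case, where a $\bind$-contraction is mapped to a $\beta$-expansion/contraction pattern under $*$) and in Lemma~\ref{lemma:single-asterisk} (propagating a contraction through an arbitrary reduction context $C$, using the size-based induction so that $C'[x]$ counts as smaller than $C'[P]$, and using Lemma~\ref{lemma:asterisk-dagger} to transport reductions across the $\dagger$ translation on values). The only thing to be careful about is the bookkeeping that $\to^?_\lambda \subseteq \redu_\lambda$ and that $\redu_\lambda$ is a preorder, which is exactly what makes the telescoping argument go through — occasional single steps in $\Ld$ collapse to zero steps in $\lambda$ (e.g. the $\shdol$ rule), and this is harmless because we only claim a multi-step reduction in the target.
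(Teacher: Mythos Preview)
Your proof is correct and follows essentially the same approach as the paper: induction on the length of the reduction sequence, invoking Lemma~\ref{lemma:single-asterisk} for the single step and using reflexivity/transitivity of $\redu_\lambda$. The only cosmetic difference is that the paper peels off the \emph{last} step in the inductive case while you peel off the \emph{first}, which is immaterial.
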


\begin{proof}
  By induction on the number \(n\) of reduction steps in \(M \redu N\).
  \begin{itemize}
    \item Case \(n = 0\), \(M \equiv N\). \(M^* \redu N^*\) in \(0\) steps.
    \item Case \(n > 0\), \(M \to^{n-1} L \to N\) for some \(L\).
    \(
      M^*
      \redu_\text{IH} L^*
      \to^?_\text{Lemma~\ref{lemma:single-asterisk}} N^*
    \).
  \end{itemize}
\end{proof}

\begin{lemma}
  \label{lemma:sh-nat}
  For all \(M \in \lambda\), \(\Sh{M^\natural} \to_\Ld^? M^\#\).
\end{lemma}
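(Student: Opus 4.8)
The plan is to do a case analysis on the outermost constructor of $M \in \lambda$, in each case unfolding the definitions of $\natural$ and $\#$ (Figure~\ref{fig:hash}) and checking that $\Sh{M^\natural}$ is either syntactically equal to $M^\#$ or contracts to it at the root in a single step. Before the case split I would record two elementary facts. First, for every $N \in \lambda$ the term $N^\natural$ is a value: this is immediate from the three clauses defining $\natural$, since a variable, an abstraction $\Fun{x}{\cdot}$, and a freeze $\Dol{\cdot}$ are all values. Second, $\natural$ and $\#$ preserve the set of free variables (a routine mutual induction), so $x \fresh N$ entails $x \fresh N^\natural$. Both facts are needed only to license one application of the $\pure$ rule in the last subcase.

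The cases $M \equiv x$ and $M \equiv M_1\,M_2$ are immediate. For a variable, $\Sh{x^\natural} \equiv \Sh{x} \equiv x^\#$, so the reflexive disjunct of $\to^?$ holds. For an application, $\Sh{(M_1\,M_2)^\natural} \equiv \Sh{\Dol{M_1^\natural\,M_2^\natural}} \to_\shdol M_1^\natural\,M_2^\natural \equiv (M_1\,M_2)^\#$, a single $\shdol$ step. For $M \equiv \Fun{x}{M_1}$ I would split according to which clause of $\#$ fires. When $M_1$ is \emph{not} of the form $x\,N$ with $x \fresh N$, both sides coincide up to renaming: $\Sh{(\Fun{x}{M_1})^\natural} \equiv \Sh{\Fun{x}{M_1^\#}} \equiv \Shift{x}{M_1^\#} \equiv (\Fun{x}{M_1})^\#$, so again nothing to reduce.

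The remaining subcase is the only one that needs care. If $M_1 \equiv x\,N$ with $x \fresh N$, then on one hand $(\Fun{x}{x\,N})^\# \equiv N^\natural$, while on the other $\Sh{(\Fun{x}{x\,N})^\natural} \equiv \Sh{\Fun{x}{(x\,N)^\#}} \equiv \Sh{\Fun{x}{x\,N^\natural}} \equiv \Shift{x}{x\,N^\natural}$; by the two preliminary facts, $N^\natural$ is a value with $x \fresh N^\natural$, so the side condition of $\pure$ is met and this contracts by $\pure$ to $N^\natural \equiv M^\#$. I expect this subcase to be the main --- essentially the only --- point requiring attention: one has to observe that the special clause of $\#$ on $\Fun{x}{x\,N}$ (the ``oddity'' flagged just before Figure~\ref{fig:hash}) is precisely the mirror image of the $\pure$ rule, and that $N^\natural$ always being a value is exactly what makes $\Shift{x}{x\,N^\natural}$ a legitimate $\pure$-redex. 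Everything else is routine bookkeeping with the definitions.
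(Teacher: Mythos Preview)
Your proof is correct and follows the same case analysis and reduction steps as the paper's own proof. The only difference is cosmetic: you explicitly record that $N^\natural$ is always a value and that $\natural$ preserves free variables to justify the side condition of $\pure$, whereas the paper leaves these facts implicit.
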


\begin{proof}
  By cases on \(M\).
  \begin{itemize}
    \item Case \(M \equiv x\).
    \(
      \Sh{x^\natural}
      \equiv \Sh{x} 
      \equiv x^\#
    \). 
    \item Case \(M \equiv \Fun{x}{M_1}\).
    \begin{itemize}
      \item Case \(M_1 \equiv x \, M_2\) and \(x \fresh M_2\).
      \(
        \Sh{(\Fun{x}{x \, M_2})^\natural}
        \equiv \Shift{x}{x \, M_2^\natural}
        \to_\pure M_2^\natural
        \equiv (\Fun{x}{x \, M_2})^\#
      \).
      \item Opposite case.
      \(
        \Sh{(\Fun{x}{M_1})^\natural}
        \equiv \Shift{x}{M_1^\#}
        \equiv (\Fun{x}{M_1})^\#
      \).
    \end{itemize}
    \item Case \(M \equiv M_1 \, M_2\).
    \(
      \Sh{(M_1 \, M_2)^\natural}
      \equiv \Sh{\Dol{M_1^\natural \, M_2^\natural}}
      \to_\shdol M_1^\natural \, M_2^\natural
      \equiv (M_1 \, M_2)^\#
    \).
  \end{itemize}
\end{proof}

\begin{lemma}
  \label{lemma:dol-hash}
  For all \(M \in \lambda\), \(\Dol{M^\#} \to_\Ld^? M^\natural\).
\end{lemma}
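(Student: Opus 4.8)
The plan is to prove this by a direct case analysis on the shape of \(M\), parallel to the proof of Lemma~\ref{lemma:sh-nat}. Before the cases I would isolate one bookkeeping fact that can be read straight off the definition of \(\natural\): for every \(M \in \lambda\) the term \(M^\natural\) is a \emph{value} of \(\Ld\) --- a variable, a \(\lambda\)-abstraction, or a freeze \(\Dol{\cdot}\) according to whether \(M\) is a variable, an abstraction, or an application. I would also use (tacitly, or via a one-line induction) that \(\#\) and \(\natural\) introduce no new free variables, so freshness side conditions on \(\lambda\)-terms transfer to their images.

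For \(M \equiv x\) the left-hand side unfolds to \(\Dol{\Sh{x}}\); since \(x\) is a value, the \(\dolsh\) rule gives \(\Dol{\Sh{x}} \to_\dolsh x \equiv x^\natural\). For \(M \equiv M_1\,M_2\) there is nothing to reduce: \((M_1\,M_2)^\# \equiv M_1^\natural\,M_2^\natural\) while \((M_1\,M_2)^\natural \equiv \Dol{M_1^\natural\,M_2^\natural}\), so \(\Dol{(M_1\,M_2)^\#}\) is literally \((M_1\,M_2)^\natural\) and the \(\equiv\)-alternative of \(\to^?\) applies. For \(M \equiv \Fun{x}{M_1}\) the argument splits exactly as the definition of \(\#\) splits it. If \(M_1 \equiv x\,M_2\) with \(x \fresh M_2\), then \((\Fun{x}{x\,M_2})^\# \equiv M_2^\natural\), which is a value, so \(\Dol{M_2^\natural} \to_\dolv \Fun{x}{x\,M_2^\natural}\) (using \(x \fresh M_2^\natural\) to name the fresh variable \(x\)), and the right-hand side is \((\Fun{x}{x\,M_2})^\natural \equiv \Fun{x}{(x\,M_2)^\#} \equiv \Fun{x}{x\,M_2^\natural}\). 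Otherwise \((\Fun{x}{M_1})^\# \equiv \Shift{x}{M_1^\#} \equiv \Sh{\Fun{x}{M_1^\#}}\), and since \(\Fun{x}{M_1^\#}\) is a value, \(\Dol{\Sh{\Fun{x}{M_1^\#}}} \to_\dolsh \Fun{x}{M_1^\#} \equiv (\Fun{x}{M_1})^\natural\).

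There is no genuine obstacle here: the lemma is essentially bookkeeping. The only points demanding a little care are that the abstraction case must be split precisely along the two clauses defining \(\#\), and that the \(\dolv\) step in the first abstraction subcase is licensed only because \(M_2^\natural\) is a value and the bound variable is fresh for it --- both handled by the preliminary observations above.
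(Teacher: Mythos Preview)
Your proof is correct and follows essentially the same case analysis as the paper's own proof, including the split of the abstraction case along the two clauses of \(\#\) and the use of \(\dolsh\), \(\dolv\), and syntactic identity in the respective cases. Your explicit preliminary observations that \(M^\natural\) is always a value and that freshness is preserved make the justification of the \(\dolv\) step slightly more careful than the paper, but otherwise the arguments coincide.
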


\begin{proof}
  By cases on \(M\).
  \begin{itemize}
    \item Case \(M \equiv x\).
    \(
      \Dol{x^\#}
      \equiv \Dol{\Sh{x}} 
      \to_\dolsh x
      \equiv x^\natural
    \). 
    \item Case \(M \equiv \Fun{x}{M_1}\).
    \begin{itemize}
      \item Case \(M_1 \equiv x \, M_2\) and \(x \fresh M_2\).
      \(
        \Dol{(\Fun{x}{x \, M_2})^\#}
        \equiv \Dol{M_2^\natural}
        \to_\dolv \Fun{x}{x \, M_2^\natural}
        \equiv (\Fun{x}{x \, M_2})^\natural
      \).
      \item Opposite case.
      \(
        \Dol{(\Fun{x}{M_1})^\#}
        \equiv \Dol{\Shift{x}{M_1^\#}}
        \to_\dolsh \Fun{x}{M_1^\#}
        \equiv (\Fun{x}{M_1})^\natural
      \).
    \end{itemize}
    \item Case \(M \equiv M_1 \, M_2\).
    \(
      \Dol{(M_1 \, M_2)^\#}
      \equiv \Dol{M_1^\natural \, M_2^\natural}
      \equiv (M_1 \, M_2)^\natural
    \).
  \end{itemize}
\end{proof}

\begin{lemma}
  \label{lemma:subst-hash}
  \(M^\#[N^\natural/x] \redu_\Ld M[N/x]^\#\) and
  \(M^\natural[N^\natural/x] \redu_\Ld M[N/x]^\natural\) hold.
\end{lemma}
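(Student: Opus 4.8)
The plan is to prove the two equations simultaneously by structural induction on $M$, under the usual hygiene convention that no bound variable occurring in $M$ coincides with $x$ or occurs free in $N$ (and hence, since $\#$ and $\natural$ introduce no new free variables, none occurs free in $N^\natural$ either). In every inductive step the only tool needed beyond the hypothesis is that $\redu_\Ld$ is a congruence: a single reduction fires in any reduction context $C$, so $A \redu_\Ld B$ entails $C[A] \redu_\Ld C[B]$, and chains compose by transitivity of $\redu_\Ld$.

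For the base cases: if $M \equiv x$ then $x^\#[N^\natural/x] \equiv \Sh{N^\natural} \redu_\Ld N^\# \equiv x[N/x]^\#$ by Lemma~\ref{lemma:sh-nat}, while $x^\natural[N^\natural/x] \equiv N^\natural \equiv x[N/x]^\natural$; if $M \equiv y \not\equiv x$ both sides are syntactically unchanged. For an application $M \equiv M_1\,M_2$ both clauses unfold symmetrically on the two sides — to $M_1^\natural[N^\natural/x]\,M_2^\natural[N^\natural/x]$ for $\#$, and to the same term wrapped in $\Dol{\cdot}$ for $\natural$ — so two uses of the inductive hypothesis on the proper subterms $M_1,M_2$ together with congruence of $\redu_\Ld$ close the case.

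The substantive case is $M \equiv \Fun{y}{M_1}$. The $\natural$-equation is uniform, since $(\Fun{y}{M_1})^\natural \equiv \Fun{y}{M_1^\#}$ whatever $M_1$ is: the hypothesis on $M_1$ plus congruence gives it. The $\#$-equation, by contrast, must respect the special clause of $\#$. If $M_1 \equiv y\,M_2$ with $y \fresh M_2$, then $M^\# \equiv M_2^\natural$; because $y \fresh N$ the term $M[N/x] \equiv \Fun{y}{y\,(M_2[N/x])}$ still matches the special clause (note $y \fresh M_2[N/x]$), so $M[N/x]^\# \equiv M_2[N/x]^\natural$, and the hypothesis on the proper subterm $M_2$ delivers $M_2^\natural[N^\natural/x] \redu_\Ld M_2[N/x]^\natural$ as required. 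Otherwise $M^\# \equiv \Shift{y}{M_1^\#}$, and what must be checked — the one delicate point of the whole proof — is that $M[N/x] \equiv \Fun{y}{M_1[N/x]}$ is again \emph{not} of the special shape, so that $M[N/x]^\# \equiv \Shift{y}{M_1[N/x]^\#}$ and the hypothesis on $M_1$ applies (using $y \not\equiv x$ and $y \fresh N^\natural$ to commute the substitution past $\Shift{y}{\cdot}$).

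This reduces to the claim that $M_1[N/x]$ is an application whose left component is the variable $y$ if and only if $M_1$ already is one — that is, substitution neither creates nor destroys the special shape. The ``only if'' direction is the content: a variable substitutes to a variable or to $N$, and $N$'s left component is not $y$ since $y \fresh N$; an abstraction substitutes to an abstraction; and an application $P\,Q$ substitutes to $P[N/x]\,Q[N/x]$, whose left component is $y$ only when $P \equiv y$, in which case the required freshness $y \fresh Q[N/x]$ is equivalent to $y \fresh Q$, again by $y \fresh N$. The ``if'' direction is immediate. Hence the two subcases of the $\#$-clause commute with substitution, and the induction goes through. I expect this freshness bookkeeping around the special clause of $\#$ to be the only real obstacle; every other case is routine unfolding modulo congruence of $\redu_\Ld$.
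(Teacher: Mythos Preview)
Your proof is correct and follows the same inductive structure as the paper's. The one place where you diverge is the ``otherwise'' subcase of $M \equiv \Fun{y}{M_1}$ for the $\#$-equation. You argue directly that substitution cannot turn a non-special $M_1$ into one of the special shape $y\,M_2'$ with $y \fresh M_2'$ (the freshness bookkeeping you flag as the delicate point), so that $M[N/x]^\# \equiv \Shift{y}{M_1[N/x]^\#}$ and the hypothesis finishes. The paper sidesteps that analysis entirely: after the hypothesis yields $\Shift{y}{M_1[N/x]^\#}$, it observes that this is literally $\Sh{(\Fun{y}{M_1[N/x]})^\natural}$ and invokes Lemma~\ref{lemma:sh-nat}, which collapses $\Sh{L^\natural}$ to $L^\#$ for \emph{any} $L$ --- so whether or not $\Fun{y}{M_1[N/x]}$ happens to fall into the special clause becomes irrelevant. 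Your route is self-contained but carries the shape-preservation burden you identified; the paper's route reuses an existing lemma to absorb that case split at the cost of one extra reduction step. Both are sound.
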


\begin{proof}
  In the appendix.
\end{proof}

\begin{lemma}
  \label{lemma:contraction-hash}
  If \(M \beta N\) or \(M \eta N\), then
  \begin{itemize}
    \item \(M^\# \redu_\Ld N^\#\) and
    \item \(M^\natural \redu_\Ld N^\natural\).
  \end{itemize}
\end{lemma}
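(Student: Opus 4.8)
The plan is to argue directly by cases on the top-level rule relating \(M\) and \(N\): either \(M \equiv (\Fun{y}{L})\,P\) and \(N \equiv L[P/y]\) (a \(\beta\)-step), or \(M \equiv \Fun{y}{L\,y}\) with \(y \fresh L\) and \(N \equiv L\) (an \(\eta\)-step). In each case I would unfold the definitions of \(\#\) and \(\natural\) on both \(M\) and \(N\) and close the gap with reductions of \(\Ld\), invoking Lemmas~\ref{lemma:sh-nat}, \ref{lemma:dol-hash} and~\ref{lemma:subst-hash} where needed; no induction on \(M\) is required, since this is a statement about a single contraction, and the surrounding development later lifts it to arbitrary contexts and reduction sequences. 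Before the case analysis I would record two structural facts: (i) for every \(M\), the term \(M^\natural\) is a value of \(\Ld\) (a variable, an abstraction, or a freeze \(\Dol{\cdot}\)), which is what makes the \(\bv\)- and \(\ev\)-redexes produced below well-formed; and (ii) both \(\#\) and \(\natural\) preserve the set of free variables, so side conditions such as \(y \fresh L\) carry over to \(y \fresh L^\natural\).

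For the \(\beta\)-step I would compute \(M^\# \equiv (\Fun{y}{L^\#})\,P^\natural\), and since \(P^\natural\) is a value, a \(\bv\)-contraction gives \(L^\#[P^\natural/y]\), whence Lemma~\ref{lemma:subst-hash} yields \(L^\#[P^\natural/y] \redu_\Ld L[P/y]^\# \equiv N^\#\). On the \(\natural\) side, \(M^\natural \equiv \Dol{(\Fun{y}{L^\#})\,P^\natural}\); performing the same \(\bv\)-step-plus-Lemma~\ref{lemma:subst-hash} reduction inside the freeze (a reduction context) produces \(\Dol{L[P/y]^\#}\), and Lemma~\ref{lemma:dol-hash} then gives \(\Dol{L[P/y]^\#} \to_\Ld^? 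L[P/y]^\natural \equiv N^\natural\).

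For the \(\eta\)-step, the only delicate point is to notice that \(\Fun{y}{L\,y}\) does not match the special clause \((\Fun{x}{x\,N})^\#\) of \(\#\): matching it would force \(L \equiv y\), contradicting \(y \fresh L\). Hence the generic clause applies and \(M^\# \equiv \Sh{\Fun{y}{L^\natural\,y}}\); since \(L^\natural\) is a value with \(y \fresh L^\natural\), an \(\ev\)-step inside the \(\Sh{\cdot}\) context gives \(\Sh{L^\natural}\), and Lemma~\ref{lemma:sh-nat} gives \(\Sh{L^\natural} \to_\Ld^? L^\# \equiv N^\#\). The \(\natural\) side is immediate: \(M^\natural \equiv \Fun{y}{L^\natural\,y} \to_\ev L^\natural \equiv N^\natural\). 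I expect the main obstacles to be purely bookkeeping — keeping the special clause of \(\#\) under control in the \(\eta\) case, and confirming that \(\natural\) always lands in values so that \(\bv\) and \(\ev\) apply — together with the fact that the one substantial ingredient, the substitution Lemma~\ref{lemma:subst-hash}, is used freely here but deferred to the appendix; everything else is a short unfolding.
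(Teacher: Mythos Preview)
Your proposal is correct and follows the paper's proof essentially verbatim: the same case split on \(\beta\)/\(\eta\), the same unfoldings of \(\#\) and \(\natural\), and the same appeals to Lemmas~\ref{lemma:sh-nat}, \ref{lemma:dol-hash}, and~\ref{lemma:subst-hash} in exactly the same places. Your additional remarks---that \(M^\natural\) is always a value, that free variables are preserved, and that the special clause of \(\#\) cannot fire in the \(\eta\) case---make explicit the side conditions the paper leaves implicit, which is a welcome clarification rather than a deviation.
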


\begin{proof}
  \begin{itemize}
    \item Case \(\beta\).
    \begin{itemize}
      \item \(
        ((\Fun{x}{M}) \, N)^\#
        \equiv (\Fun{x}{M^\#}) \, N^\natural
        \to_\bv M^\#[N^\natural/x]
        \to_\text{Lemma~\ref{lemma:subst-hash}} M[N/x]^\#
      \).
      \item \(
        ((\Fun{x}{M}) \, N)^\natural
        \equiv \Dol{(\Fun{x}{M^\#}) \, N^\natural}
        \to_\bv \Dol{M^\#[N^\natural/x]}
        \to_\text{Lemma~\ref{lemma:subst-hash}} \Dol{M[N/x]^\#}
        \to^?_\text{Lemma~\ref{lemma:dol-hash}} M[N/x]^\natural
      \).
    \end{itemize}
    \item Case \(\eta\).
    \begin{itemize}
      \item \(
        (\Fun{x}{M \, x})^\#
        \equiv \Sh{\Fun{x}{M^\natural \, x}}
        \to_\ev \Sh{M^\natural}
        \to^?_\text{Lemma~\ref{lemma:sh-nat}} M^\#
      \).
      \item \(
        (\Fun{x}{M \, x})^\natural
        \equiv \Fun{x}{M^\natural \, x}
        \to_\ev M^\natural
      \).
    \end{itemize}
  \end{itemize}
\end{proof}

\begin{lemma}[Single-step reduction preservation
  by \(\#\) and \(\natural\)]
  \label{lemma:single-hash}
  If \(M \to_\lambda N\), then
  \begin{itemize}
    \item \(M^\# \redu_\Ld N^\#\) and
    \item \(M^\natural \redu_\Ld N^\natural\).
  \end{itemize}
  \label{thm:ssred-Ldt}
\end{lemma}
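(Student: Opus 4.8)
The plan is to unfold the hypothesised step as $M \equiv C[M'] \to_\lambda C[N'] \equiv N$, where $M' \mapsto_\lambda N'$ is a $\beta$- or an $\eta$-contraction, and then argue by structural induction on $M$, casing on the shape of the $\lambda$-reduction context $C$. If $M \equiv x$ no step is possible and the claim holds vacuously. If $C \equiv [\,]$ the whole term is the redex, so Lemma~\ref{lemma:contraction-hash} applies verbatim. For nonempty $C$ I would push the contraction through each of the three context formers of $\lambda$ — $\Fun{x}{C'}$, $C'\,M_2$ and $M_1\,C'$ — relying throughout on the fact that $\redu_\Ld$ is a congruence, in particular that $[\,]\,M$, $M\,[\,]$, $\Fun{x}{[\,]}$, $\Sh{[\,]}$ and $\Dol{[\,]}$ are all $\Ld$-reduction contexts.

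The application cases are routine. If $M \equiv M_1\,M_2$ and the redex lies in $M_1$, then $M_1 \to_\lambda M_1'$ and the $\natural$-part of the induction hypothesis gives $M_1^\natural \redu_\Ld (M_1')^\natural$; since $M^\# \equiv M_1^\natural\,M_2^\natural$ and $M^\natural \equiv \Dol{M_1^\natural\,M_2^\natural}$, closing under the contexts $[\,]\,M_2^\natural$ and $\Dol{[\,]\,M_2^\natural}$ delivers both conclusions. The sub-case with the redex in $M_2$ is symmetric.

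The case that needs care is $M \equiv \Fun{x}{M_1}$ with the redex inside $M_1$, so $M_1 \to_\lambda M_1'$ and $N \equiv \Fun{x}{M_1'}$; the difficulty is that $\#$ is defined by a side condition checking whether the abstraction body has the shape $x\,(\cdot)$ of a CPS-image of a value. I would first record that $\beta\eta$-reduction never enlarges a term's set of free variables, and that the head of $x\,P$ is a variable so no contraction fires at or above it; together these show that a body of the special shape $x\,P$ with $x \fresh P$ can only reduce, inside $P$, to another body $x\,P'$ of the special shape with $x \fresh P'$ — the side condition may be acquired under reduction but never lost. This leaves three sub-cases: (i) both $M_1$ and $M_1'$ are of the special shape, where both conclusions come straight from the $\natural$-part of the hypothesis applied to the argument subterm; (ii) neither is, where both come from the $\#$-part of the hypothesis under $\Sh{\Fun{x}{[\,]}}$ and $\Fun{x}{[\,]}$; and (iii) $M_1$ is not of the special shape but $M_1' \equiv x\,P$ with $x \fresh P$ is. In sub-case (iii) the $\natural$-conclusion needs nothing new, since $(M_1')^\#$ unfolds to $x\,P^\natural$, so $N^\natural$ is already $\Fun{x}{x\,P^\natural}$, which is reached by $\Fun{x}{M_1^\#} \redu_\Ld \Fun{x}{(M_1')^\#}$; for $\#$, the hypothesis gives $M^\# \equiv \Shift{x}{M_1^\#} \redu_\Ld \Shift{x}{x\,P^\natural}$, and since every $\natural$-image is a value and $x \fresh P^\natural$, a single $\pure$-step contracts this to $P^\natural$, which is exactly $N^\#$.

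The hard part, then, is sub-case (iii): one has to notice that the side condition in the definition of $\#$ is only half-stable under reduction of the abstraction body, and that the single mismatch this creates — a residual $\Shift{x}{x\,(\cdot)}$ where $N^\#$ has only $(\cdot)$ — is bridged by exactly one application of the $\pure$ rule, which is available only because $\natural$ invariably returns a value. Everything outside that observation is the congruence closure of $\redu_\Ld$ together with the contraction lemma.
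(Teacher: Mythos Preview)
Your proof is correct and follows the same structural induction as the paper. The only difference is organisational: where you split the abstraction case for $\#$ into sub-cases (ii) and (iii) and close (iii) with an explicit $\pure$ step, the paper handles both at once via Lemma~\ref{lemma:sh-nat} (which already packages that $\pure$ contraction), and your explicit argument that the special shape of the body is stable under reduction is left implicit in the paper's treatment of its first sub-case.
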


\begin{proof}
  By structural induction on \(M\).
  We unfold the definition of \(\to\) (in \(\lambda\)):
  there exist \(M'\), \(N'\) and \(C\), such that,
    \(M' \mapsto N'\), \(M \equiv C[M']\) and \(N \equiv C[N']\).
  If \(M\) is a variable, then there is no possible reduction and the proposition holds vacuously.
  If \(C \equiv [\,]\), then apply Lemma~\ref{lemma:contraction-hash}.
  Otherwise, we proceed by cases on \(M\) and (nonempty) \(C\).
  In each case below we consider some proper subcontext \(C'\) of \(C\). Because \(C'[M']\) is a proper subterm of \(M\), then from inductive hypothesis and \(C'[M'] \to C'[N']\) we infer \(C'[M']^\# \redu C'[N']^\#\) and \(C'[M']^\natural \redu C'[N']^\natural\).
  \begin{itemize}
    \item Case \(M \equiv M_1 \, M_2\) and \(C \equiv C' \, M_2\).
    \begin{itemize}
      \item \(
        (M_1 \, M_2)^\#
        \equiv M_1^\natural \, M_2^\natural
        \equiv C'[M']^\natural \, M_2^\natural
        \redu_\text{IH} C'[N']^\natural \, M_2^\natural
        \equiv (C'[N'] \, M_2)^\#
        \equiv N^\#
      \).
      \item \(
        (M_1 \, M_2)^\natural
        \equiv \Dol{M_1^\natural \, M_2^\natural}
        \equiv \Dol{C'[M']^\natural \, M_2^\natural}
        \redu_\text{IH} \Dol{C'[N']^\natural \, M_2^\natural}
        \equiv (C'[N'] \, M_2)^\natural
        \equiv N^\natural
      \). 
    \end{itemize}
    \item Case \(M \equiv M_1 \, M_2\) and \(C \equiv M_1 \, C'\).
    \begin{itemize}
      \item \(
        (M_1 \, M_2)^\#
        \equiv M_1^\natural \, M_2^\natural
        \equiv M_1^\natural \, C'[M']^\natural
        \redu_\text{IH} M_1^\natural \, C'[N']^\natural
        \equiv (M_1 \, C'[N'])^\#
        \equiv N^\#
      \).
      \item \(
        (M_1 \, M_2)^\natural
        \equiv \Dol{M_1^\natural \, M_2^\natural}
        \equiv \Dol{M_1^\natural \, C'[M']^\natural}
        \redu_\text{IH} \Dol{M_1^\natural \, C'[N']^\natural}
        \equiv (M_1 \, C'[N'])^\natural
        \equiv N^\natural
      \).
    \end{itemize}
    \item Case \(M \equiv \Fun{x}{M_1}\) and \(C \equiv \Fun{x}{C'}\), second conjunct.
    \(
      (\Fun{x}{M_1})^\natural
      \equiv \Fun{x}{M_1^\#}
      \equiv \Fun{x}{C'[M']^\#}
      \redu_\text{IH} \Fun{x}{C'[N']^\#}
      \equiv (\Fun{x}{C'[N']})^\natural
      \equiv N^\natural
    \).
    \item Case \(M \equiv \Fun{x}{M_1}\), first conjunct.
    \begin{itemize}
      \item Case \(M_1 \equiv x \, M_2\), \(x \fresh M_2\) and \(C \equiv \Fun{x}{x \, C'}\).
      \(
        (\Fun{x}{x \, M_2})^\#
        \equiv M_2^\natural
        \equiv C'[M']^\natural
        \redu_\text{IH} C'[N']^\natural
        \equiv (\Fun{x}{x \, C'[N']})^\#
        \equiv N^\#
      \).
      \item Opposite case, \(C \equiv \Fun{x}{C'}\).
      \begin{align*}
        (\Fun{x}{M_1})^\#
        & \equiv \Shift{x}{M_1^\#}
        \equiv \Shift{x}{C'[M']^\#}
        \redu_\text{IH} \Shift{x}{C'[N']^\#} \\
        & \equiv \Sh{(\Fun{x}{C'[N']})^\natural}
        \to^?_\text{Lemma~\ref{lemma:sh-nat}} (\Fun{x}{C'[N']})^\#
      \end{align*}
    \end{itemize}
  \end{itemize}
\end{proof}

\begin{theorem}[Monotonicity of \(\#\)]
  \label{thm:mono-hash}
  For all \(M, N \in \lambda\),
  if \(M \redu N\), then \(M^\# \redu N^\#\).
\end{theorem}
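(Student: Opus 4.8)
The plan is to imitate the proof of Theorem~\ref{thm:mono-asterisk}: induct on the number $n$ of single $\to_\lambda$-steps witnessing $M \redu_\lambda N$, delegating the one-step case entirely to Lemma~\ref{lemma:single-hash} and closing up with transitivity of $\redu_\Ld$. Concretely, $M \redu_\lambda N$ unfolds to a chain $M \equiv L_0 \to_\lambda L_1 \to_\lambda \dots \to_\lambda L_n \equiv N$, and we argue by induction on $n$.

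In the base case $n = 0$ we have $M \equiv N$, hence $M^\# \equiv N^\#$ and $M^\# \redu_\Ld N^\#$ trivially (in zero steps). For the inductive step, write $M \to_\lambda^{\,n-1} L \to_\lambda N$. The inductive hypothesis gives $M^\# \redu_\Ld L^\#$, and Lemma~\ref{lemma:single-hash} applied to $L \to_\lambda N$ gives $L^\# \redu_\Ld N^\#$; composing these two multi-step reductions by transitivity of $\redu_\Ld$ yields $M^\# \redu_\Ld N^\#$, as required. Note that only the first conjunct of Lemma~\ref{lemma:single-hash} is invoked here, although its proof internally relies on the $\natural$-conjunct as well.

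There is no real obstacle in this theorem itself: all the delicate reasoning — the case analysis on reduction contexts, the handling of the special clause for $\Fun{x}{x\,N}$, and the use of Lemmas~\ref{lemma:sh-nat}, \ref{lemma:dol-hash} and (via the appendix) the substitution Lemma~\ref{lemma:subst-hash} — has already been discharged in Lemma~\ref{lemma:single-hash}. Once that single-step preservation result is available, monotonicity of $\#$ is just the routine observation that a simulation of individual contraction steps lifts to their reflexive–transitive closure. (The analogous statement for $\natural$, if needed elsewhere, follows by the same induction using the second conjunct of Lemma~\ref{lemma:single-hash}.)
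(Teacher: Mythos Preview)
Your proposal is correct and mirrors the paper's own argument exactly: induction on the length of the reduction sequence, with the base case trivial and the inductive step appealing to the induction hypothesis together with Lemma~\ref{lemma:single-hash}, then closing by transitivity of $\redu_\Ld$. There is nothing to add.
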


\begin{proof}
  By induction on the number \(n\) of reduction steps in \(M \redu N\).
  \begin{itemize}
    \item Case \(n = 0\), \(M \equiv N\). \(M^* \redu N^*\) in \(0\) steps.
    \item Case \(n > 0\), \(M \to^{n-1} L \to N\) for some \(L\).
    \(
      M^*
      \redu_\text{IH} L^*
      \redu_\text{Lemma~\ref{lemma:single-hash}} N^*
    \).
  \end{itemize}
\end{proof}

\subsection{The reflection theorem and the kernel calculus}

The last missing ingredient of reflection is the left post-inverse theorem. Inspection of the proof reveals that neither \(\bv\) nor \(\ev\) is used while every other rule is used in some instance.

\begin{theorem}[Left post-inverse of \( * \)]
  \label{thm:left-inverse}
  For all \( M \in \Ld \),
  \begin{itemize}
    \item \(M \redu M^{*\#}\) and
    \item if \(M\) is a value, then \(M \redu M^{\dagger\natural}\).
  \end{itemize}
\end{theorem}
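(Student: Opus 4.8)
The plan is to prove both conjuncts simultaneously by induction on the size of \(M\), using the size measure from the start of this subsection in which a variable counts as strictly smaller than a nonvalue (so \(J[x]\) is strictly smaller than \(J[P]\)). The case split follows the four defining clauses of \(*\): \(M\) is a value \(W\); \(M \equiv V \, W\) with both sides values; \(M \equiv \Sh{V}\) with \(V\) a value; or \(M \equiv J[P]\) with \(J\) bindable and \(P\) a nonvalue. The pivotal observation is that, for a value, the first conjunct follows from the second conjunct \emph{on the very same term}: since \(W^* \equiv \Fun{k}{k \, W^\dagger}\) with \(k\) chosen fresh for \(W^\dagger\), the special clause of \(\#\) fires and \(W^{*\#} \equiv (W^\dagger)^\natural \equiv W^{\dagger\natural}\). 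So in the value case it suffices to establish \(W \redu_\Ld W^{\dagger\natural}\).

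The second conjunct is handled by a sub-case analysis on \(W\). For \(W \equiv x\) both sides are \(x\). For \(W \equiv \Fun{x}{M_1}\) we have \(W^{\dagger\natural} \equiv \Fun{x}{M_1^{*\#}}\), and the induction hypothesis (first conjunct) on the strictly smaller \(M_1\) gives \(M_1 \redu_\Ld M_1^{*\#}\), which we push under \(\Fun{x}{[\,]}\). For \(W \equiv \Dol{M_1}\) we have \(W^{\dagger\natural} \equiv M_1^{*\natural}\), and we chain \(\Dol{M_1} \redu_\Ld \Dol{M_1^{*\#}}\) (induction hypothesis on \(M_1\), under \(\Dol{[\,]}\)) with Lemma~\ref{lemma:dol-hash} instantiated at \(M_1^* \in \lambda\) to land on \(M_1^{*\natural}\).

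For the remaining (nonvalue) cases: when \(M \equiv V \, W\) we have \(M^{*\#} \equiv V^{\dagger\natural} \, W^{\dagger\natural}\), so the induction hypothesis (second conjunct) on \(V\) and \(W\), applied under the reduction contexts \([\,] \, W\) and \(V^{\dagger\natural} \, [\,]\), suffices. When \(M \equiv \Sh{V}\) we have \(M^{*\#} \equiv V^{\dagger\#}\), so we reduce \(\Sh{V} \redu_\Ld \Sh{V^{\dagger\natural}}\) by the induction hypothesis (second conjunct) on \(V\) and then apply Lemma~\ref{lemma:sh-nat}. The substantive case is \(M \equiv J[P]\). Here we first fire the \(\bind\) rule, \(J[P] \to_\Ld \Let{x}{P}{J[x]}\) with \(x \fresh J\), then unfold the macros to rewrite the right-hand side as \(\Sh{\Fun{k}{\Dol{P} \, (\Fun{x}{\Dol{J[x]} \, k})}}\). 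On the other side, computing \(\#\) through \(\Letk{x}{P^*}{J[x]^*}\) — where the special clause of \(\#\) cannot fire, a nonvalue's CPS image never being the bound variable \(k\) — yields \(M^{*\#} \equiv \Sh{\Fun{k}{P^{*\natural} \, (\Fun{x}{(J[x])^{*\natural} \, k})}}\). It then remains to reduce \(\Dol{P} \redu_\Ld P^{*\natural}\) and \(\Dol{J[x]} \redu_\Ld (J[x])^{*\natural}\) inside the surrounding reduction context; each follows from the induction hypothesis (first conjunct) — available for \(P\) as a strict subterm, and for \(J[x]\) because it is strictly smaller than \(J[P]\) — composed again with Lemma~\ref{lemma:dol-hash}.

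The main obstacle is exactly the \(J[P]\) case: one must unfold the \(\LET\)-, \(\SH\)- and \(\DOL\)-macros consistently, keeping the freshness of the bound names \(x\) and \(k\) straight, perform the \(\#\)-computation on the CPS image, and — crucially — rely on the refined size measure so that \(J[x]\) is a legitimate smaller instance; without that, the induction would not close. Everything else is congruence plus the two bridging lemmas, and, as noted before the statement, neither \(\bv\) nor \(\ev\) is ever invoked.
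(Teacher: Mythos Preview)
Your proposal is correct and follows essentially the same route as the paper's proof: induction on the size of \(M\), the same four-way case split along the clauses of \(*\), reduction of the value case to the second conjunct via the special clause of \(\#\), and the same uses of Lemmas~\ref{lemma:sh-nat} and~\ref{lemma:dol-hash} in the \(\Sh{V}\), \(\Dol{M_1}\) and \(J[P]\) cases. The only cosmetic difference is that in the \(J[P]\) case you unfold the \(\LET\) macro first and then apply the induction hypothesis inside, whereas the paper applies the induction hypothesis to the components of the \(\LET\) first and unfolds afterwards; the steps are otherwise identical.
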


\begin{proof}
  By induction on the size of \(M\).
  We prove the second conjunct below.
  \begin{itemize}
    \item Case \(M \equiv x\).
    \(
      x
      \equiv x^\natural
      \equiv x^{\dagger\natural}
    \).
    \item Case \(M \equiv \Fun{x}{M_1}\).
    \(
      \Fun{x}{M_1}
      \redu_\text{IH} \Fun{x}{M_1^{*\#}}
      \equiv (\Fun{x}{M_1^*})^\natural
      \equiv (\Fun{x}{M_1})^{\dagger\natural}
    \).
    \item Case \(M \equiv \Dol{M_1}\).
    \(
      \Dol{M_1}
      \redu_\text{IH} \Dol{M_1^{*\#}}
      \to^?_\text{Lemma~\ref{lemma:dol-hash}} M_1^{*\natural}
      \equiv \Dol{M_1}^{\dagger\natural}
    \).
  \end{itemize}
  We now prove the first conjunct.
  \begin{itemize}
    \item Case \(M \equiv J[P]\).
    \begin{align*}
      J[P]
      & \to_\bind \Let{x}{P}{J[x]}
      \redu_\text{IH} \Let{x}{P^{*\#}}{J[x]^{*\#}}
      \equiv \Shift{k}{\Dol{P^{*\#}} \, (\Fun{x}{\Dol{J[x]^{*\#}} \, k})} \\
      & \to^?_\text{Lemma~\ref{lemma:dol-hash}} \Shift{k}{P^{*\natural} \, (\Fun{x}{J[x]^{*\natural} \, k})}
      \equiv (\Letk{x}{P^*}{J[x]^*})^\#
      \equiv J[P]^{*\#}
    \end{align*}
    \item Case \(M \equiv V_1 \, V_2\).
    \(
      V_1 \, V_2
      \redu_\text{IH} V_1^{\dagger\natural} \, V_2^{\dagger\natural}
      \equiv (V_1^\dagger \, V_2^\dagger)^\#
      \equiv (V_1 \, V_2)^{*\#}
    \).
    \item Case \(M \equiv \Sh{V_1}\).
    \(
      \Sh{V_1}
      \redu_\text{IH} \Sh{V_1^{\dagger\natural}}
      \to^?_\text{Lemma~\ref{lemma:sh-nat}} V_1^{\dagger\#}
      \equiv \Sh{V_1}^{*\#}
    \).
    \item Case \(M \equiv W\). For brevity, we consider all values at once and reuse the proofs of the second conjunct.
    \(
      W
      \redu_\text{the second conjunct} W^{\dagger\natural}
      \equiv (\Fun{k}{k \, W^\dagger})^\#
      \equiv W^{*\#}
    \).
  \end{itemize}
\end{proof}

\begin{example}
  The CPS translation of $S$-combinator $\Fun{x\,y\,z}{x\,z\,(y\,z)}$ is \[S^* \equiv \Fun{k_1}{k_1\,(\Fun{x\,k_2}{k_2\,(\Fun{y\,k_3}{k_3\,(\Fun{z\,k_4}{x\,y\,(\Fun{f}{(\Fun{k_5}{y\,z\,(\Fun{a}{f\,a\,k_5})})\,k_4})})})})}.\]
  Reflection of this CPS term back to $\Ld$ is
  \(S^{*\#} \equiv \Fun{x\,y\,z}{\Shift{k_4}{(\Fun{f}{(\Fun{k_5}{(\Fun{a}{k_5 \DOL f\,a}) \DOL y\,z})\,k_4}) \DOL x\,y}}\).
  One can see that $S \redu_\bind \Fun{x\,y\,z}{\Let{f}{x\,y}{\Let{a}{y\,z}{f\,a}}} \to_\dolsh S^{*\#}$, in accordance with Theorem~\ref{thm:left-inverse}.
\end{example}

\begin{theorem}[\((*, \#)\) form a Galois connection]
  For any \(M \in \Ld\) and \(N \in \lambda\) we have \(M \redu_\Ld N^\#\)
  if and only if \(M^* \redu_\lambda N\).
\end{theorem}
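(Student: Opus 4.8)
The plan is to derive this biconditional directly from the monotonicity theorems together with the two inverse theorems, with no further reduction-by-reduction case analysis needed. Monotonicity of both translations is already in hand (Theorem~\ref{thm:mono-asterisk} for \(*\) and Theorem~\ref{thm:mono-hash} for \(\#\)), so only the equivalence \(M \redu_\Ld N^\#\) iff \(M^* \redu_\lambda N\) remains, and I would prove the two implications separately.

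For the ``if'' direction, assume \(M^* \redu_\lambda N\). Applying monotonicity of \(\#\) gives \(M^{*\#} \redu_\Ld N^\#\). The left post-inverse theorem (Theorem~\ref{thm:left-inverse}) supplies \(M \redu_\Ld M^{*\#}\), so transitivity of \(\redu_\Ld\) yields \(M \redu_\Ld N^\#\). For the ``only if'' direction, assume \(M \redu_\Ld N^\#\). Applying monotonicity of \(*\) gives \(M^* \redu_\lambda N^{\#*}\), and the right inverse theorem (Theorem~\ref{thm:right-inverse}) identifies \(N^{\#*} \equiv N\) on the nose, so \(M^* \redu_\lambda N\). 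Equivalently, one could invoke Theorem~\ref{thm:galois-alt-def}: the ``unit'' law \(M \redu_\Ld M^{*\#}\) is exactly the left post-inverse theorem, and the ``counit'' law \(N^{\#*} \redu_\lambda N\) holds trivially (in zero steps) because \(N^{\#*} \equiv N\).

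I do not expect any genuine obstacle here: all the difficulty was front-loaded into the substitution lemmas and the single-step preservation lemmas underlying Theorems~\ref{thm:mono-asterisk}, \ref{thm:mono-hash} and~\ref{thm:left-inverse}. The one point worth flagging is that the right inverse theorem delivers the strict syntactic equality \(N^{\#*} \equiv N\), not merely an interconvertibility; this is stronger than a Galois connection requires, and is precisely what upgrades the connection to a reflection (so the ``counit'' collapses to the identity), which is the content of the development that follows.
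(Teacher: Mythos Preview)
Your proposal is correct and follows exactly the paper's approach: the left-to-right implication uses monotonicity of \(*\) together with the right inverse property, and the right-to-left implication uses monotonicity of \(\#\) together with the left post-inverse property. Your additional remark about Theorem~\ref{thm:galois-alt-def} and the strictness of \(N^{\#*} \equiv N\) is a nice gloss that anticipates the reflection corollary, but the core argument is the same as the paper's.
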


\begin{proof}
  The left-to-right implication follows from
  the monotonicity of \(*\) (Theorem~\ref{thm:mono-asterisk})
  and the right inverse property (Theorem~\ref{thm:right-inverse}).
  The right-to-left implication follows from
  the monotonicity of \(\#\) (Theorem~\ref{thm:mono-hash}).
  and the left post-inverse property (Theorem~\ref{thm:left-inverse}).
\end{proof}

\begin{corollary}[Reflection]
  Since \(\#\) is a right inverse of \(*\), the Galois connection is
  indeed a reflection.
\end{corollary}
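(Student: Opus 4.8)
The plan is to observe that this corollary is essentially a matter of unfolding the definition of reflection and invoking a theorem already proved. Recall that a Galois connection \((f : A \to B, g : B \to A)\) is by definition a reflection precisely when \(f(g(b)) \equiv b\) holds for every \(b \in B\). Here the Galois connection just established is \((*, \#)\), so we take \(A\) to be \(\Ld\), \(B\) to be \(\lambda\), \(f\) to be the CPS translation \(*\), and \(g\) to be the direct-style translation \(\#\). The reflection condition therefore instantiates to the requirement that \((N^\#)^* \equiv N\) for all \(N \in \lambda\), i.e. \(N^{\#*} \equiv N\).

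First I would note that this is exactly the first conjunct of Theorem~\ref{thm:right-inverse} (Right inverse of \(*\)), which states that \(M^{\#*} \equiv M\) for all \(M \in \lambda\). Since the preceding theorem already certifies that \((*, \#)\) is a Galois connection, and Theorem~\ref{thm:right-inverse} upgrades the generic inequality \(f(g(b)) \redu_B b\) (which a Galois connection guarantees via Theorem~\ref{thm:galois-alt-def}) to a genuine syntactic identity \(f(g(b)) \equiv b\), the defining clause of a reflection is met. Hence the Galois connection is a reflection, and by the remark following the definition of reflection the subcalculus \((\#[\lambda], (\redu_\Ld) \cap (\#[\lambda] \times \#[\lambda]))\) is isomorphic to \(\lambda\).

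There is no real obstacle here: the corollary is a one-line consequence of combining the Galois-connection theorem with the right-inverse theorem, and all of the genuine work — the monotonicity lemmas for \(*\) and \(\#\), the left post-inverse theorem, and the right inverse theorem — has already been carried out. The only thing to be careful about is the bookkeeping of which translation plays the role of \(f\) and which plays the role of \(g\) in Definition of a reflection, so that the abstract condition \(f(g(b)) \equiv b\) lines up with the concrete statement \(N^{\#*} \equiv N\) rather than with \(M^{*\#}\), which is only a reduction (Theorem~\ref{thm:left-inverse}) and not an identity.
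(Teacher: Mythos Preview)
Your proposal is correct and matches the paper's own reasoning: the corollary in the paper has no separate proof, its entire justification being the sentence ``Since \(\#\) is a right inverse of \(*\)'', which is precisely your appeal to Theorem~\ref{thm:right-inverse} together with the definition of reflection. Your additional remark about distinguishing \(N^{\#*} \equiv N\) from \(M^{*\#}\) is a helpful sanity check but not strictly needed.
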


The range of \(\#\) translation (from now on, the \emph{kernel}, or \(\lambda^\#\)) has an induced preorder \[\redu_{\lambda^\#} = \left\{ (M, N) \in \lambda^\# \times \lambda^\# \middle| M \redu_\Ld N \right\}.\] The kernel can be considered a calculus in its own right and it turns out we know this calculus very well: it is preorder-isomorphic to \(\lambda\) via \(*\) restricted to kernel and its inverse \(\#\). What that means is that the kernel is a fragment of \(\Ld\) which behaves exactly like \(\lambda\)-calculus (\(M \redu_{\lambda^\#} N\) iff \(M^* \redu_\lambda N^*\)) except that terms look different and \(*\) allows to decode these to familiar \(\lambda\)-terms. We believe that reductions in kernel calculus could be fully characterized, in similar fashion to ~\cite{DBLP:journals/toplas/SabryW97,DBLP:conf/fscd/BiernackiPS20,DBLP:conf/ppdp/BiernackiPS21}, but we did not investigate that yet. Experience suggest that these reductions would be quite verbose.

\begin{corollary}
  Calculi \(\Lmv\)~\cite{DBLP:conf/icfp/DownenA14}, \(\ld\)~\cite{DBLP:conf/csl/Materzok13}, \(\Ld\) and \(\lambda\) are in equational correspondence.
\end{corollary}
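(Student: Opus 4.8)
The plan is to observe that ``being in equational correspondence'' is an equivalence relation on calculi --- reflexive, symmetric (swap the compiler and the decompiler), and transitive (compose the two compilers in the natural order, and the two decompilers in the opposite order) --- and then to connect the four calculi by a chain of edges, all but one of which is already in hand: \(\Lmv \sim \ld\), \(\ld \sim \Ld\), and \(\Ld \sim \lambda\). Transitivity requires only a one-line check: if \((f_i, g_i)\) are equational correspondences for \(i = 1,2\), composable at \(B\), then \(g_1(g_2(f_2(f_1(a)))) =_A g_1(f_1(a)) =_A a\), using that \(g_2 \circ f_2\) is the identity up to \(=_B\) and that \(g_1\) preserves \(=_B\); the reverse round-trip and the preservation of provable equality along the composites are equally immediate.

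The edge \(\ld \sim \Ld\) is exactly the content of Section~\ref{sec:well-defined}: Theorems ``Soundness of \(\iota\)'' and ``Completeness of \(\iota\)'' say that \(\iota\) and \(\pi\) both preserve and reflect provable equality, and Lemma ``Embedding \(\iota\) is invertible'' gives \(\pi(\iota(e)) =_\ld e\) and \(\iota(\pi(M)) =_\Ld M\); this is verbatim a Sabry--Wadler equational correspondence. For the edge \(\Ld \sim \lambda\) I would read one off directly from the reflection machinery of Section~\ref{sec:reflection}: monotonicity of \(*\) (Theorem~\ref{thm:mono-asterisk}) and of \(\#\) (Theorem~\ref{thm:mono-hash}) extend from \(\redu\) to its zig-zag closure, so \(*\) preserves \(=_\Ld\) and \(\#\) preserves \(=_\lambda\); the right inverse property (Theorem~\ref{thm:right-inverse}) gives \(M^{\#*} \equiv M\), hence a fortiori \(M =_\lambda M^{\#*}\) for \(M \in \lambda\); and the left post-inverse property (Theorem~\ref{thm:left-inverse}) gives \(M \redu_\Ld M^{*\#}\), hence \(M =_\Ld M^{*\#}\) for \(M \in \Ld\). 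These are precisely the four clauses making \((\#, *)\) an equational correspondence between \(\lambda\) and \(\Ld\) --- equivalently, by the remark in Section~\ref{sec:galois}, the Galois connection \((*, \#)\) viewed on the symmetric closures \(=_\Ld, =_\lambda\) is an equational correspondence.

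The remaining edge \(\Lmv \sim \ld\) is not new, and I would simply invoke the equational correspondence between the call-by-value \(\lambda\mu\)-calculus with \texttt{shift0} and \texttt{\$} and Materzok's \(\ld\) established in~\cite{DBLP:conf/icfp/DownenA14} (see also~\cite{DBLP:conf/csl/Materzok13}). Composing along \(\Lmv \to \ld \to \Ld \to \lambda\) and back then yields pairwise equational correspondences among all four calculi. The only point that needs care --- and the thing I would flag as the main (and rather mild) obstacle --- is bookkeeping at the seams: checking that the cited \(\Lmv\)/\(\ld\) correspondence is taken with respect to the same base equational theories that we use here (\(\beta\eta\) on the \(\lambda\) side, Materzok's axioms on the \(\ld\) side), and lining up the translation directions before applying transitivity. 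No genuinely new reduction-theoretic work is required; this really is a corollary of the reflection together with the Section~\ref{sec:well-defined} correspondence.
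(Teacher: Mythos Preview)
Your proposal is correct and follows essentially the same approach as the paper: cite the \(\Lmv \sim \ld\) correspondence from~\cite{DBLP:conf/icfp/DownenA14}, use Section~\ref{sec:well-defined} for \(\ld \sim \Ld\), weaken the reflection of Section~\ref{sec:reflection} to an equational correspondence \(\Ld \sim \lambda\), and compose. The paper's proof is terser, but your added detail (the transitivity bookkeeping and the explicit derivation of the four equational-correspondence clauses from the reflection theorems) is a faithful unpacking of exactly what the paper leaves implicit.
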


\begin{proof}
  Calculus \(\Lmv\) is a calculus created by relaxing some syntactic restrictions of \(\lmv\). \(\Lmv\) was defined and proven to correspond equationally with \(\ld\) in~\cite{DBLP:conf/icfp/DownenA14}.
  Calculus \(\ld\) was defined by Materzok in~\cite{DBLP:conf/csl/Materzok13}.
  We proved its correspondence with \(\Ld\) in Section~\ref{sec:well-defined}.
  In this section we proved that \(\lambda\) can be reflected into \(\Ld\).
  This fact can be weakened to equational correspondence.
  All of \(\Lmv\), \(\ld\), \(\Ld\) and \(\lambda\) correspond equationally because equational correspondences are composable.
\end{proof}

\section{Conclusion}
\label{sec:conclusion}

We developed a novel calculus of delimited control \(\Ld\) whose confluent reduction theory induces an axiomatisation that is both sound and complete with respect to denotational semantics provided by CPS translation.
We have shown a close relationship (i.e. reflection) between reduction theories of pure lambda calculus and calculus \(\Ld\), which justifies the assertion that \texttt{shift0}-style delimited control and call-by-name are diffent means of expressing the same computation.

Sabry and Felleisen~\cite{DBLP:journals/lisp/SabryF93} found an undirected axiomatisation of pure call-by-value lambda calculus and also of \texttt{call/cc}-style abortive control. Due to lack of interest in directedness, they amalgamated some rules of opposite direction into \(\beta_\Omega\) axiom. Kameyama and Hasegawa~\cite{DBLP:conf/icfp/KameyamaH03} found an undirected axiomatisation of \texttt{shift}-style delimited control, while Materzok~\cite{DBLP:conf/csl/Materzok13} axiomatised \texttt{shift0}. Our paper spells out the directed axiomatisation for the latter. Our rules are not only directed  but also local (or fine-grained) which means that no recursive family of contexts is needed to express them.

Developments in this paper are untyped. We conjecture that it should be possible to adapt existing type systems for \texttt{shift0}~\cite{DBLP:conf/icfp/MaterzokB11,DBLP:conf/csl/Materzok13,DBLP:conf/rta/PirogPS19}. We are advancing in the process of axiomatising algebraic effects with deep handlers, which are a related~\cite{DBLP:journals/jfp/0002KLP19,DBLP:conf/rta/PirogPS19} approach to computational effects.

\bibliographystyle{entics}
\bibliography{references}

\begin{thebibliography}{10}
\providecommand{\url}[1]{\texttt{#1}}
\providecommand{\urlprefix}{ }
\providecommand{\eprint}[2][]{\url{#2}}

\bibitem{DBLP:conf/fscd/BiernackiPS20}
Biernacki, D., M.~Pyzik and F.~Sieczkowski, \emph{A reflection on
  continuation-composing style}, in: Z.~M. Ariola, editor, \emph{5th
  International Conference on Formal Structures for Computation and Deduction,
  {FSCD} 2020, June 29-July 6, 2020, Paris, France (Virtual Conference)},
  volume 167 of \emph{LIPIcs}, pages 18:1--18:17, Schloss Dagstuhl -
  Leibniz-Zentrum f{\"{u}}r Informatik (2020).
\newline\urlprefix\url{https://doi.org/10.4230/LIPIcs.FSCD.2020.18}

\bibitem{DBLP:conf/ppdp/BiernackiPS21}
Biernacki, D., M.~Pyzik and F.~Sieczkowski, \emph{Reflecting stacked
  continuations in a fine-grained direct-style reduction theory}, in:
  N.~Veltri, N.~Benton and S.~Ghilezan, editors, \emph{{PPDP} 2021: 23rd
  International Symposium on Principles and Practice of Declarative
  Programming, Tallinn, Estonia, September 6-8, 2021}, pages 4:1--4:13, {ACM}
  (2021).
\newline\urlprefix\url{https://doi.org/10.1145/3479394.3479399}

\bibitem{DBLP:conf/lfp/DanvyF90}
Danvy, O. and A.~Filinski, \emph{Abstracting control}, in: G.~Kahn, editor,
  \emph{Proceedings of the 1990 {ACM} Conference on {LISP} and Functional
  Programming, {LFP} 1990, Nice, France, 27-29 June 1990}, pages 151--160,
  {ACM} (1990).
\newline\urlprefix\url{https://doi.org/10.1145/91556.91622}

\bibitem{DBLP:conf/icfp/DownenA14}
Downen, P. and Z.~M. Ariola, \emph{Compositional semantics for composable
  continuations: from abortive to delimited control}, in: J.~Jeuring and
  M.~M.~T. Chakravarty, editors, \emph{Proceedings of the 19th {ACM} {SIGPLAN}
  International Conference on Functional Programming, Gothenburg, Sweden,
  September 1-3, 2014}, pages 109--122, {ACM} (2014).
\newline\urlprefix\url{https://doi.org/10.1145/2628136.2628147}

\bibitem{DBLP:conf/popl/Felleisen88}
Felleisen, M., \emph{The theory and practice of first-class prompts}, in:
  J.~Ferrante and P.~Mager, editors, \emph{Conference Record of the Fifteenth
  Annual {ACM} Symposium on Principles of Programming Languages, San Diego,
  California, USA, January 10-13, 1988}, pages 180--190, {ACM} Press (1988).
\newline\urlprefix\url{https://doi.org/10.1145/73560.73576}

\bibitem{DBLP:conf/popl/Filinski94}
Filinski, A., \emph{Representing monads}, in: H.~Boehm, B.~Lang and D.~M.
  Yellin, editors, \emph{Conference Record of POPL'94: 21st {ACM}
  {SIGPLAN-SIGACT} Symposium on Principles of Programming Languages, Portland,
  Oregon, USA, January 17-21, 1994}, pages 446--457, {ACM} Press (1994).
\newline\urlprefix\url{https://doi.org/10.1145/174675.178047}

\bibitem{DBLP:conf/popl/Filinski99}
Filinski, A., \emph{Representing layered monads}, in: A.~W. Appel and A.~Aiken,
  editors, \emph{{POPL} '99, Proceedings of the 26th {ACM} {SIGPLAN-SIGACT}
  Symposium on Principles of Programming Languages, San Antonio, TX, USA,
  January 20-22, 1999}, pages 175--188, {ACM} (1999).
\newline\urlprefix\url{https://doi.org/10.1145/292540.292557}

\bibitem{DBLP:journals/jfp/0002KLP19}
Forster, Y., O.~Kammar, S.~Lindley and M.~Pretnar, \emph{On the expressive
  power of user-defined effects: Effect handlers, monadic reflection, delimited
  control}, J. Funct. Program. \textbf{29}, page e15 (2019).
\newline\urlprefix\url{https://doi.org/10.1017/S0956796819000121}

\bibitem{DBLP:conf/icfp/KameyamaH03}
Kameyama, Y. and M.~Hasegawa, \emph{A sound and complete axiomatization of
  delimited continuations}, in: C.~Runciman and O.~Shivers, editors,
  \emph{Proceedings of the Eighth {ACM} {SIGPLAN} International Conference on
  Functional Programming, {ICFP} 2003, Uppsala, Sweden, August 25-29, 2003},
  pages 177--188, {ACM} (2003).
\newline\urlprefix\url{https://doi.org/10.1145/944705.944722}

\bibitem{DBLP:conf/tlca/KiselyovS07}
Kiselyov, O. and C.~Shan, \emph{A substructural type system for delimited
  continuations}, in: S.~R.~D. Rocca, editor, \emph{Typed Lambda Calculi and
  Applications, 8th International Conference, {TLCA} 2007, Paris, France, June
  26-28, 2007, Proceedings}, volume 4583 of \emph{Lecture Notes in Computer
  Science}, pages 223--239, Springer (2007).
\newline\urlprefix\url{https://doi.org/10.1007/978-3-540-73228-0\_17}

\bibitem{DBLP:conf/csl/Materzok13}
Materzok, M., \emph{Axiomatizing subtyped delimited continuations}, in:
  S.~R.~D. Rocca, editor, \emph{Computer Science Logic 2013 {(CSL} 2013), {CSL}
  2013, September 2-5, 2013, Torino, Italy}, volume~23 of \emph{LIPIcs}, pages
  521--539, Schloss Dagstuhl - Leibniz-Zentrum f{\"{u}}r Informatik (2013).
\newline\urlprefix\url{https://doi.org/10.4230/LIPIcs.CSL.2013.521}

\bibitem{DBLP:conf/icfp/MaterzokB11}
Materzok, M. and D.~Biernacki, \emph{Subtyping delimited continuations}, in:
  M.~M.~T. Chakravarty, Z.~Hu and O.~Danvy, editors, \emph{Proceeding of the
  16th {ACM} {SIGPLAN} international conference on Functional Programming,
  {ICFP} 2011, Tokyo, Japan, September 19-21, 2011}, pages 81--93, {ACM}
  (2011).
\newline\urlprefix\url{https://doi.org/10.1145/2034773.2034786}

\bibitem{DBLP:conf/aplas/MaterzokB12}
Materzok, M. and D.~Biernacki, \emph{A dynamic interpretation of the {CPS}
  hierarchy}, in: R.~Jhala and A.~Igarashi, editors, \emph{Programming
  Languages and Systems - 10th Asian Symposium, {APLAS} 2012, Kyoto, Japan,
  December 11-13, 2012. Proceedings}, volume 7705 of \emph{Lecture Notes in
  Computer Science}, pages 296--311, Springer (2012).
\newline\urlprefix\url{https://doi.org/10.1007/978-3-642-35182-2\_21}

\bibitem{DBLP:conf/rta/PirogPS19}
Pir{\'{o}}g, M., P.~Polesiuk and F.~Sieczkowski, \emph{Typed equivalence of
  effect handlers and delimited control}, in: H.~Geuvers, editor, \emph{4th
  International Conference on Formal Structures for Computation and Deduction,
  {FSCD} 2019, June 24-30, 2019, Dortmund, Germany}, volume 131 of
  \emph{LIPIcs}, pages 30:1--30:16, Schloss Dagstuhl - Leibniz-Zentrum
  f{\"{u}}r Informatik (2019).
\newline\urlprefix\url{https://doi.org/10.4230/LIPIcs.FSCD.2019.30}

\bibitem{DBLP:conf/esop/PlotkinP09}
Plotkin, G.~D. and M.~Pretnar, \emph{Handlers of algebraic effects}, in:
  G.~Castagna, editor, \emph{Programming Languages and Systems, 18th European
  Symposium on Programming, {ESOP} 2009, Held as Part of the Joint European
  Conferences on Theory and Practice of Software, {ETAPS} 2009, York, UK, March
  22-29, 2009. Proceedings}, volume 5502 of \emph{Lecture Notes in Computer
  Science}, pages 80--94, Springer (2009).
\newline\urlprefix\url{https://doi.org/10.1007/978-3-642-00590-9\_7}

\bibitem{DBLP:journals/entcs/Pretnar15}
Pretnar, M., \emph{An introduction to algebraic effects and handlers. {I}nvited
  tutorial paper}, in: D.~R. Ghica, editor, \emph{The 31st Conference on the
  Mathematical Foundations of Programming Semantics, {MFPS} 2015, Nijmegen, The
  Netherlands, June 22-25, 2015}, volume 319 of \emph{Electronic Notes in
  Theoretical Computer Science}, pages 19--35, Elsevier (2015).
\newline\urlprefix\url{https://doi.org/10.1016/j.entcs.2015.12.003}

\bibitem{DBLP:journals/lisp/SabryF93}
Sabry, A. and M.~Felleisen, \emph{Reasoning about programs in
  continuation-passing style}, {LISP} Symb. Comput. \textbf{6}, pages 289--360
  (1993).
\newline\urlprefix\url{https://doi.org/10.1007/BF01019462}

\bibitem{DBLP:journals/toplas/SabryW97}
Sabry, A. and P.~Wadler, \emph{A reflection on call-by-value}, {ACM} Trans.
  Program. Lang. Syst. \textbf{19}, pages 916--941 (1997).
\newline\urlprefix\url{https://doi.org/10.1145/267959.269968}

\bibitem{DBLP:journals/lisp/Shan07}
Shan, C., \emph{A static simulation of dynamic delimited control}, Higher-Order
  and Symbolic Computation \textbf{20}, pages 371--401 (2007).
\newline\urlprefix\url{https://doi.org/10.1007/s10990-007-9010-4}

\end{thebibliography}

\appendix

\section{Extra proofs}

\begin{lemma}[\ref{lemma:app-asterisk}]
  Equality \((M \, N)^* = \Fun{k}{M^* \, (\Fun{x}{N^* \, (\Fun{y}{x \, y \, k})})}\) holds in \(\lambda\).
\end{lemma}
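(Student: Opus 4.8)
The plan is to prove the equality by a finite case analysis on whether each of \(M\) and \(N\) is a value or a nonvalue, since the translation \(*\) (Figure~\ref{fig:asterisk}) computes \((M\,N)^*\) through the \((V\,W)^*\) clause when both arguments are values and through the \(J[P]^*\) clause otherwise. Write \(\rho\) for the right-hand side \(\Fun{k}{M^* \, (\Fun{x}{N^* \, (\Fun{y}{x\,y\,k})})}\); the content of the lemma is exactly that all the shapes \((M\,N)^*\) can take coincide, in \(\lambda\), with this one uniform term. In each case I will unfold both sides to concrete \(\lambda\)-terms and connect them by a short chain of \(\beta\)- and \(\eta\)-steps. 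Throughout I assume the continuation and argument variables introduced by the \(\Letk{x}{M}{N}\) abbreviation are chosen fresh, which is the only bookkeeping subtlety.

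The four cases go as follows. When \(M \equiv V\) and \(N \equiv W\) are both values, the left-hand side is \((V\,W)^* \equiv V^\dagger\,W^\dagger\); since \(V^* \equiv \Fun{k}{k\,V^\dagger}\) and \(W^* \equiv \Fun{k}{k\,W^\dagger}\), four \(\beta\)-steps and a final \(\eta\)-step (discharging the fresh \(k\)) reduce \(\rho\) to \(V^\dagger\,W^\dagger\). When \(M \equiv V\) is a value and \(N \equiv Q\) is a nonvalue, then \(V\,Q \equiv J[Q]\) with \(J \equiv V\,[\,]\), so the left-hand side unfolds via the \(J[P]^*\) clause to \(\Letk{x}{Q^*}{(V\,x)^*} \equiv \Fun{k}{Q^* \, (\Fun{x}{V^\dagger\,x\,k})}\) (using \((V\,x)^* \equiv V^\dagger\,x\)), and two \(\beta\)-steps eliminating \(V^*\) take \(\rho\) to the same term up to renaming. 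Dually, when \(M \equiv P\) is a nonvalue and \(N \equiv W\) is a value, then \(P\,W \equiv J[P]\) with \(J \equiv [\,]\,W\), the left-hand side is \(\Letk{x}{P^*}{(x\,W)^*} \equiv \Fun{k}{P^* \, (\Fun{x}{x\,W^\dagger\,k})}\), and two \(\beta\)-steps performed inside the binders of \(\rho\) eliminate \(W^*\) and produce that term. Finally, when \(M \equiv P\) and \(N \equiv Q\) are both nonvalues, then \(P\,Q \equiv J[P]\) with \(J \equiv [\,]\,Q\) and \(x\,Q \equiv J'[Q]\) with \(J' \equiv x\,[\,]\), so the left-hand side is the nested term \(\Letk{x}{P^*}{\Letk{y}{Q^*}{x\,y}} \equiv \Fun{k}{P^* \, (\Fun{x}{(\Fun{k'}{Q^* \, (\Fun{y}{x\,y\,k'})}) \, k})}\), and a single \(\beta\)-step (again under the binders) collapses it to \(\rho\).

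I do not expect a genuine obstacle here -- the verification is routine \(\beta\eta\)-calculation -- but two points deserve a moment's care. First, in the last two cases the needed \(\beta\)-reductions take place under \(\lambda\)-binders and inside the unevaluated argument of \(P^*\); this is legitimate because the target \(\lambda\) (Figure~\ref{fig:calc-lambda}) permits reduction in arbitrary contexts. Second, this is precisely why the statement is phrased with the equational theory \(=_\lambda\) rather than a directed reduction: in three of the four cases \(\rho\) reduces to the left-hand side, but in the ``both nonvalues'' case the left-hand side reduces to \(\rho\), so the two orientations genuinely mix and no single directed rule captures the identity.
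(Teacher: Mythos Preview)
Your proposal is correct and follows essentially the same route as the paper: a four-way case split on whether each of \(M\) and \(N\) is a value or a nonvalue, with the same short \(\beta\eta\) chains in each case. The only cosmetic difference is orientation---the paper writes each case as an equality chain from the left-hand side outward, whereas you mostly reduce \(\rho\) down to the unfolded \((M\,N)^*\)---but the actual steps and their count match.
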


\begin{proof}
  By cases (value/nonvalue) on \(M\) and \(N\).
  \begin{itemize}
    \item Case \(M \equiv V\), \(N \equiv W\).
    \begin{align*}
      (V \, W)^*
      & \equiv V^\dagger \, W^\dagger
      =_\eta \Fun{k}{V^\dagger \, W^\dagger \, k}
      =_\beta \Fun{k}{(\Fun{x}{x \, W^\dagger \, k}) \, V^\dagger}
      =_\beta \Fun{k}{V^* \, (\Fun{x}{x \, W^\dagger \, k})} \\
      & =_\beta \Fun{k}{V^* \, (\Fun{x}{(\Fun{y}{x \, y \, k}) \, W^\dagger})}
      =_\beta \Fun{k}{V^* \, (\Fun{x}{W^* \, (\Fun{y}{x \, y \, k})})}
    \end{align*}
    \item Case \(M \equiv V\), \(N \equiv Q\).
    \begin{equation*}
      (V \, Q)^*
      \equiv \Letk{y}{Q^*}{V^\dagger \, y}
      =_\beta \Fun{k}{(\Fun{x}{Q^* \, (\Fun{y}{x \, y \, k}})) \, V^\dagger}
      =_\beta \Fun{k}{V^* \, (\Fun{x}{Q^* \, (\Fun{y}{x \, y \, k})})}
    \end{equation*}
    \item Case \(M \equiv P\), \(N \equiv W\).
    \begin{equation*}
      (P \, W)^*
      \equiv \Letk{x}{P^*}{x \, W^\dagger}
      =_\beta \Fun{k}{P^* \, (\Fun{x}{(\Fun{y}{x \, y \, k}) \, W^\dagger})}
      =_\beta \Fun{k}{P^* \, (\Fun{x}{W^* \, (\Fun{y}{x \, y \, k})})}
    \end{equation*}
    \item Case \(M \equiv P\), \(N \equiv Q\).
    \begin{equation*}
      (P \, Q)^*
      \equiv \Letk{x}{P^*}{(x \, Q)^*}
      \equiv \Letk{x}{P^*}{(\Letk{y}{Q^*}{x \, y})}
      =_\beta \Fun{k}{P^* \, (\Fun{x}{Q^* \, (\Fun{y}{x \, y \, k})})}
    \end{equation*}
  \end{itemize}
\end{proof}

\begin{lemma}[\ref{lemma:subst-hash}]
  \(M^\#[N^\natural/x] \redu_\Ld M[N/x]^\#\) and
  \(M^\natural[N^\natural/x] \redu_\Ld M[N/x]^\natural\) hold.
\end{lemma}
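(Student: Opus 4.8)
The plan is to prove both conjuncts simultaneously by structural induction on $M$, so that the induction hypothesis supplies both statements for every proper subterm of $M$. Throughout I work modulo $\alpha$-renaming: when $M$ is an abstraction $\Fun{y}{M_1}$ I assume $y \not\equiv x$ and $y \fresh N$, so that capture-avoiding substitution distributes through the binder. First I would dispatch the variable cases. If $M$ is a variable distinct from $x$, substitution does nothing and both conjuncts are syntactic identities. If $M \equiv x$, the $\natural$-conjunct is again an identity, since $x^\natural[N^\natural/x] \equiv N^\natural \equiv x[N/x]^\natural$, while the $\#$-conjunct amounts to $\Sh{N^\natural} \redu_\Ld N^\#$, which is exactly Lemma~\ref{lemma:sh-nat}; this is the only place in the whole proof where a nontrivial reduction is needed.

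Next, the application case $M \equiv M_1 \, M_2$ is purely mechanical: unfold $\cdot^\#$ (resp. $\cdot^\natural$), push the substitution inside, apply the induction hypothesis to $M_1$ and $M_2$, and fold back, using that $\redu_\Ld$ holds in arbitrary contexts (for $\cdot^\natural$ one additionally wraps both sides in $\Dol{\cdot}$). The abstraction case $M \equiv \Fun{y}{M_1}$ carries the one real subtlety: the definition of $\cdot^\#$ on abstractions branches on whether $\Fun{y}{M_1}$ has the special shape $\Fun{y}{y\,M_2}$ with $y \fresh M_2$. I would first observe that this branch is \emph{stable} under $[N/x]$: since $x \not\equiv y$ and $N$ is $y$-free, $M_1[N/x]$ has head $y$ iff $M_1$ does, and $M_1[N/x]$ is $y$-free iff $M_1$ is; hence $\Fun{y}{M_1}$ is in the special case iff $\Fun{y}{M_1[N/x]}$ is. Once this is pinned down, both sub-cases close by the induction hypothesis alone, with no auxiliary lemma — applied to $M_2$ in the special case and to $M_1$ in the other. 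Note in particular that in the special case the $\natural$-translation is $\Fun{y}{y\,M_2^\natural}$ rather than $\Fun{y}{M_2^\natural}$, so it is $M_2$, a proper subterm, to which the hypothesis is applied, keeping the structural induction well-founded.

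I expect the main obstacle to be nothing deeper than bookkeeping: making the $\alpha$-conventions explicit and verifying the stability-of-branch claim for abstractions in both directions (special stays special, non-special stays non-special). With that in hand, every remaining case closes immediately by the induction hypothesis and the fact that $\redu_\Ld$ is a congruence, with Lemma~\ref{lemma:sh-nat} invoked solely in the base case $M \equiv x$.
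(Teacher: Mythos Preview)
Your proposal is correct and follows the same structural-induction skeleton as the paper. There is one local difference worth noting: in the non-special abstraction sub-case of the $\#$-conjunct, you rely on your branch-stability observation to conclude that $(\Fun{y}{M_1[N/x]})^\# \equiv \Shift{y}{M_1[N/x]^\#}$ directly, so Lemma~\ref{lemma:sh-nat} is indeed used only at the base case $M\equiv x$. The paper instead does not argue stability at all; after reaching $\Shift{y}{M_1[N/x]^\#}$ it rewrites this as $\Sh{(\Fun{y}{M_1[N/x]})^\natural}$ and invokes Lemma~\ref{lemma:sh-nat} a second time, which uniformly collapses to $(\Fun{y}{M_1[N/x]})^\#$ regardless of which branch of $\#$ that term takes. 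Your route trades a second lemma appeal for a short free-variable argument; the paper's route trades that argument for uniformity. Both are fine.
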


\begin{proof}
  By structural induction on \(M\).
  We prove the first conjunct below.
  \begin{itemize}
    \item Case \(M \equiv x\).
    \(
      x^\#[N^\natural/x]
      \equiv \Sh{x}[N^\natural/x]
      \equiv \Sh{N^\natural}
      \to^?_\text{Lemma~\ref{lemma:sh-nat}} N^\#
      \equiv x[N/x]^\#
    \).
    \item Case \(M \equiv y\).
    \(
      y^\#[N^\natural/x]
      \equiv \Sh{y}[N^\natural/x]
      \equiv \Sh{y}
      \equiv y^\#
      \equiv y[N/x]^\#
    \).
    \item Case \(M \equiv M_1 \, M_2\).
    \[
      (M_1 \, M_2)^\#[N^\natural/x]
      \equiv M_1^\natural[N^\natural/x] \, M_2^\natural[N^\natural/x]
      \redu_\text{IH} M_1[N/x]^\natural \, M_2[N/x]^\natural
      \equiv (M_1 \, M_2)[N/x]^\#
    \]
    \item Case \(M \equiv \Fun{y}{M_1}\).
    If \(M_1 \equiv y \, M_2\) and \(y \fresh M_2\), then
    \[
      (\Fun{y}{y \, M_2})^\#[N^\natural/x]
      \equiv M_2^\natural[N^\natural/x]
      \redu_\text{IH} M_2[N/x]^\natural
      \equiv (\Fun{y}{y \, M_2})[N/x]^\#.
    \]
    Otherwise,
    \(
      (\Fun{y}{M_1})^\#[N^\natural/x]
      \equiv \Shift{y}{M_1^\#[N^\natural/x]}
      \redu_\text{IH} \Shift{y}{M_1[N/x]^\natural}
      \to^?_\text{Lemma~\ref{lemma:sh-nat}} (\Fun{y}{M_1[N/x]})^\#
      \equiv (\Fun{y}{M_1})[N/x]^\#
    \).
  \end{itemize}
  We now prove the second conjunct.
  \begin{itemize}
    \item Case \(M \equiv x\).
    \(
      x^\natural[N^\natural/x]
      \equiv x[N^\natural/x]
      \equiv N^\natural
      \equiv x[N/x]^\natural
    \).
    \item Case \(M \equiv y\).
    \(
      y^\natural[N^\natural/x]
      \equiv y[N^\natural/x]
      \equiv y
      \equiv y^\natural
      \equiv y[N/x]^\natural
    \).
    \item Case \(M \equiv M_1 \, M_2\).
    \[
      (M_1 \, M_2)^\natural[N^\natural/x]
      \equiv \Dol{M_1^\natural[N^\natural/x] \, M_2^\natural[N^\natural/x]}
      \redu_\text{IH} \Dol{M_1[N/x]^\natural \, M_2[N/x]^\natural}
      \equiv (M_1 \, M_2)[N/x]^\natural
    \]
    \item Case \(M \equiv \Fun{y}{M_1}\).
    \(
      (\Fun{y}{M_1})^\natural[N^\natural/x]
      \equiv \Fun{y}{M_1^\#[N^\natural/x]}
      \redu_\text{IH} \Fun{y}{M_1[N/x]^\#}
      \equiv (\Fun{y}{M_1})[N/x]^\natural
    \).
  \end{itemize}
\end{proof}

\end{document}